\newtheorem{observation}[theorem]{Observation}
\newtheorem{fact}[theorem]{Fact}
\newcommand{\R}{\mathbb R}
\newcommand{\SINR}[0]{\mathrm{SINR}}
\newcommand{\NoRegret}[0]{\mathcal N}
\newcommand{\Noise}[0]{\eta}
\newcommand{\E}{\mathop{\mathbb E}}
\def\inline#1:{\par\vskip 7pt\noindent{\bf #1:}\hskip 10pt}
\def\blackslug{\hbox{\hskip 1pt \vrule width 4pt height 8pt
    depth 1.5pt \hskip 1pt}}
\def\QED{\quad\blackslug\lower 8.5pt\null\par}
\begin{document}

\title{Braess's Paradox in Wireless Networks: The Danger of Improved Technology}

\author{
Michael Dinitz \inst{1,2}
\and
Merav Parter \inst{1}
\thanks{Recipient of the Google European Fellowship in distributed computing;
research is supported in part by this Fellowship.}
}

\institute{Department of Computer Science and Applied Mathematics, \\
The Weizmann Institute, Rehovot, Israel\\
\email{\{michael.dinitz, merav.parter\}@weizmann.ac.il}
\thanks{
Supported in part by the Israel Science Foundation (grant 894/09),
the I-CORE program of the Israel PBC and ISF (grant 4/11),
the United States-Israel Binational Science Foundation (grant 2008348),
the Israel Ministry of Science and Technology (infrastructures grant),
and the Citi Foundation.}
\and
Department of Computer Science, Johns Hopkins University}

\maketitle

\begin{abstract}
When comparing new wireless technologies, it is common to consider the effect that they have on the capacity of the network (defined as the maximum number of simultaneously satisfiable links).  For example, it has been shown that giving receivers the ability to do interference cancellation, or allowing transmitters to use power control, never decreases the capacity and can in certain cases increase it by $\Omega(\log (\Delta \cdot P_{\max}))$, where $\Delta$ is the ratio of the longest link length to the smallest transmitter-receiver distance and $P_{\max}$ is the maximum transmission power.  But there is no reason to expect the optimal capacity to be realized in practice, particularly since maximizing the capacity is known to be NP-hard.  In reality, we would expect links to behave as self-interested agents, and thus when introducing a new technology it makes more sense to compare the values reached at game-theoretic equilibria than the optimum values.

In this paper we initiate this line of work by comparing various notions of equilibria (particularly Nash equilibria and no-regret behavior) when using a supposedly ``better" technology.  We show a version of Braess's Paradox for all of them: in certain networks, upgrading technology can actually make the equilibria \emph{worse}, despite an increase in the capacity.  We construct instances where this decrease is a constant factor for power control, interference cancellation, and improvements in the SINR threshold ($\beta$), and is $\Omega(\log \Delta)$ when power control is combined with interference cancellation.  However, we show that these examples are basically tight: the decrease is at most $O(1)$ for power control, interference cancellation, and improved $\beta$, and is at most $O(\log \Delta)$ when power control is combined with interference cancellation.
\end{abstract}


\section{Introduction}

Due to the increasing use of wireless technology in communication networks, there has been a significant amount of research on methods of improving wireless performance.  While there are many ways of measuring wireless performance, a good first step (which has been extensively studied) is the notion of \emph{capacity}.  Given a collection of communication links, the capacity of a network is simply the maximum number of simultaneously satisfiable links.  This can obviously depend on the exact model of wireless communication that we are using, but is clearly an upper bound on the ``usefulness" of the network.  There has been a large amount of research on analyzing the capacity of wireless networks (see e.g.~\cite{GuKu00,GHWW09,AD09,K11}), and it has become a standard way of measuring the quality of a network.  Because of this, when introducing a new technology it is interesting to analyze its affect on the capacity.  For example, we know that in certain cases giving transmitters the ability to control their transmission power can increase the capacity by $\Omega(\log \Delta)$ or $\Omega(\log P_{\max}))$  \cite{ALP09}, where $\Delta$ is the ratio of the longest link length to the smallest transmitter-receiver distance, and can clearly never decrease the capacity. 

However, while the capacity might improve, it is not nearly as clear that the \emph{achieved} capacity will improve.  After all, we do not expect our network to actually have performance that achieves the maximum possible capacity.  We show that not only might these improved technologies not help, they might in fact \emph{decrease} the achieved network capacity.  Following Andrews and Dinitz~\cite{AD09} and \'Asgeirsson and Mitra~\cite{Asgeirsson11}, we model each link as a self-interested agent and analyze various types of game-theoretic behavior (Nash equilibria and no-regret behavior in particular).  We show that a version of \emph{Braess's Paradox}~\cite{Braess68} holds: adding new technology to the networks (such as the ability to control powers) can actually decrease the average capacity at equilibrium.


\subsection{Our Results}

Our main results show that in the context of wireless networks, and particularly in the context of the SINR model, there is a version of \emph{Braess's Paradox}~\cite{Braess68}.  In his seminal paper, Braess studied congestion in road networks and showed that adding additional roads to an existing network can actually make congestion \emph{worse}, since agents will behave selfishly and the additional options can result in worse equilibria.  This is completely analogous to our setting, since in road networks adding extra roads cannot hurt the network in terms of the value of the optimum solution, but can hurt the network since the \emph{achieved} congestion gets worse.  In this work we consider the physical model (also called the SINR model), pioneered by Moscibroda and Wattenhofer~\cite{MW06} and described more formally in Section~\ref{sec:models}.  Intuitively, this model works as follows: every sender chooses a transmission power (which may be pre-determined, e.g.~due to hardware limitations), and the received power decreased polynomially with the distance from the sender.  A transmission is successful if the received power from the sender is large enough to overcome the interference caused by other senders plus the background noise.

With our baseline being the SINR model, we then consider four ways of ``improving" a network: adding power control, adding interference cancellation, adding both power control and interference cancellation, and decreasing the SINR threshold.  With all of these modifications it is easy to see that the optimal capacity can only increase, but we will show that the equilibria can become worse.  Thus ``improving" a network might actually result in worse performance.

The game-theoretic setup that we use is based on~\cite{AD09} and will be formally described in Section~\ref{sec:game-theory}, but we will give an overview here.  We start with a game in which the players are the links, and the strategies depend slightly on the model but are essentially possible power settings at which to transmit.  The utilities depend on whether or not the link was successful, and whether or not it even attempted to transmit.  In a pure Nash equilibrium every player has a strategy (i.e.~power setting) and has no incentive to deviate: any other strategy would result in smaller utility.  In a mixed Nash equilibrium every link has a probability distribution over the strategies, and no link has any incentive to deviate from their distribution.  Finally, no-regret behavior is the empirical distribution of play when all players use \emph{no-regret} algorithms, which are a widely used and studied class of learning algorithms (see Section~\ref{sec:game-theory} for a formal definition).  It is reasonably easy to see that any pure Nash is a mixed Nash, and any mixed Nash is a no-regret behavior.  For all of these, the quality of the solution is the achieved capacity, i.e.~the average number of successful links.



Our first result is for interference cancellation (IC), which has been widely proposed as a practical method of increasing network performance~\cite{Andrews2005}.  The basic idea of interference cancellation is quite simple. First, the strongest interfering signal is detected and decoded. Once decoded, this signal can then be subtracted (``canceled") from the original signal. Subsequently, the next strongest interfering signal can be detected and decoded from the now ``cleaner" signal, and so on.  As long as the strongest remaining signal can be decoded in the presence of the weaker signals, this process continues until we are left with the desired transmitted signal, which can now be decoded.  This clearly can increase the capacity of the network, and even in the worst case cannot decrease it.  And yet due to bad game-theoretic interactions it might make the achieved capacity worse:

\begin{theorem} \label{thm:IC-intro}
There exists a set of links in which the \emph{best} no-regret behavior under interference cancellation achieves capacity at most $c$ times the \emph{worst} no-regret behavior without interference cancellation, for some constant $c < 1$.  However, for every set of links the worst no-regret behavior under interference cancellation achieves capacity that is at least a constant fraction of the best no-regret behavior without interference cancellation.
\end{theorem}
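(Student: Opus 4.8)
The plan is to prove the two directions separately: the \emph{paradox} direction (a single bad instance) and the \emph{robustness} direction (every instance). For the paradox direction I would build a constant-size gadget and then take many well-separated copies, so that cross-gadget interference is negligible, the per-copy bounds simply add, and the global ratio equals the per-gadget ratio. A single gadget consists of two ``beneficiary'' links $B_1,B_2$ that are mutually feasible (both succeed when transmitting together with no one else active) and one ``aggressor'' link $A$ positioned so that: (i) without IC, if $A$ transmits it is pushed below the SINR threshold $\beta$ by the combined interference of $B_1,B_2$, hence $A$ always fails; but (ii) with IC, $A$'s receiver can decode and cancel the (strong) signals of $B_1,B_2$ and then clear threshold, so $A$ succeeds even while $B_1,B_2$ transmit. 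The crucial asymmetry is that at each $B_i$'s receiver the signal of $A$ is strong enough to keep $B_i$ below $\beta$ yet too weak, relative to the residual interference, to be decoded and canceled, so IC never rescues $B_1,B_2$ from $A$. Realizing this cancellation asymmetry by an explicit choice of transmitter/receiver positions and fixed power is the main geometric obstacle of this direction.

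Given the gadget, the no-regret analysis is clean in both cases because each scenario has a ``forcing'' structure. Without IC, transmitting is always a losing move for $A$ (it yields the negative failed-transmission utility, strictly worse than the $0$ of staying silent), so the no-regret guarantee against the fixed action ``always stay silent'' forces $A$ to transmit with vanishing frequency; consequently $B_1,B_2$ see a nearly clean channel, and their no-regret guarantee against ``always transmit'' forces each to succeed in a $1-o(1)$ fraction of the rounds. Hence every no-regret behavior without IC has capacity at least $2-o(1)$ per gadget. With IC, ``always transmit'' makes $A$ succeed in every round regardless of the others, so no-regret forces $A$ to transmit and succeed in a $1-o(1)$ fraction of rounds; then $B_1,B_2$ fail whenever they transmit, so no-regret forces them nearly silent, giving capacity at most $1+o(1)$ per gadget. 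Summing over copies yields a ratio approaching $1/2$, giving the desired constant $c<1$.

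For the robustness direction I would use the chain $\text{(worst no-regret with IC)} \ge c' \cdot \mathrm{OPT}_{\mathrm{IC}} \ge c' \cdot \mathrm{OPT} \ge c' \cdot \text{(best no-regret without IC)}$, where $\mathrm{OPT}$ and $\mathrm{OPT}_{\mathrm{IC}}$ are the optimal capacities without and with interference cancellation. The last inequality holds because at every round the set of successful links without IC is a feasible set, hence of size at most $\mathrm{OPT}$, so every no-regret capacity is bounded by $\mathrm{OPT}$; the middle inequality holds because interference cancellation can only enlarge the optimum. The only substantive content is therefore the first inequality: \emph{every} no-regret behavior of the IC game achieves a constant fraction of $\mathrm{OPT}_{\mathrm{IC}}$.

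I would establish this first inequality by adapting the affectance/charging argument used to bound the price of total anarchy in the basic SINR game. Fix an optimal IC-feasible set $S^*$, and for each $\ell\in S^*$ invoke its no-regret guarantee against the fixed action ``transmit at its $S^*$-power.'' Whenever such an $\ell$ is blocked in the observed play the blocking must come from the currently successful set, and the SINR signal-strength structure bounds the total affectance a feasible successful set can impose on $S^*$ by $O(1)$ per successful link; summing over $\ell\in S^*$ then shows the average number of successes is $\Omega(|S^*|)$. The main obstacle in this direction is that interference cancellation changes the meaning of ``blocked'' and complicates the affectance bookkeeping, so the charging lemma must be re-established in the presence of cancellation rather than quoted directly.
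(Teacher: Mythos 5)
Both halves of your plan contain a genuine gap. For the paradox direction, your gadget's forcing structure is circular and fails in the no-IC regime: $A$'s near-silence is supposed to be forced by $B_1,B_2$ transmitting, while $B_1,B_2$'s transmitting is forced only when $A$ is nearly silent. Concretely, the profile in which $A$ transmits every round and $B_1,B_2$ stay silent is a pure Nash equilibrium (hence a $0$-regret history) of the game \emph{without} IC: with the $B_i$'s silent, $A$ succeeds and has no incentive to stop, and each $B_i$ is blocked by $A$ (by your own asymmetry condition), so transmitting would give it utility $-1$. Thus the worst no-regret behavior without IC on your instance has value at most $1$, not $2-o(1)$, and no paradox is exhibited; the step ``transmitting is always a losing move for $A$'' is exactly where the argument breaks, since $A$'s blocking presupposes the $B_i$'s are on. The paper's construction (Theorem~\ref{thm:IC_lower}) avoids this by anchoring the cascade with a link $\ell_1$ that succeeds \emph{unconditionally}, against any behavior of the others: $\ell_1$ suppresses $\ell_2$ without IC; with IC, $\ell_2$ cancels $\ell_1$ and becomes unconditionally successful itself; and the harmed links $\ell_3,\ell_4$ succeed whenever $\ell_2$ is off (even with $\ell_1$ on), fail when both $\ell_1,\ell_2$ are on, and cannot invoke IC because the interfering signals arrive at $r_3,r_4$ \emph{weaker} than their own intended signal ($P_{1,3},P_{2,3}<P_{3,3}$). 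Every forcing step is then either unconditional or conditioned only on already-forced behavior, which is what legitimizes bounds over \emph{every} $\epsilon$-regret history ($3-O(\epsilon)$ without IC versus $2+O(\epsilon)$ with IC).

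For the robustness direction, the pivotal inequality in your chain --- every no-regret behavior of the IC game achieves $\Omega(\mathrm{OPT}_{IC})$ --- is false, so the obstacle you flagged is not a bookkeeping nuisance but an impossibility: the paper shows the price of total anarchy with IC is $\Theta(\log\Delta)$ (Corollary~\ref{cor:nash_ic_approx}), via a network where $m=\Omega(\log\Delta)$ links forming an exponential chain are all simultaneously decodable at co-located receivers by successive cancellation ($\mathrm{OPT}_{IC}=m$), yet two short links equidistant from those receivers block the head of every cancellation chain and form a pure Nash of value $2$. Hence no charging lemma can give a constant-factor comparison against $\mathrm{OPT}_{IC}$. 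The paper's proof (Theorem~\ref{thm:nash_ic_upper}) instead compares the worst IC history directly against a set $A$ that is feasible \emph{without} IC, of size $\Omega(\mathcal{N}_{\max}(L))$: it passes to an amenable subset $\widehat A$ of the links of $A$ with attempt frequency $q_j<1/2$, upper-bounds the $q$-weighted uniform-power affectance on $\widehat A$ by $O\left(\sum_i q_i\right)$ via Fact~\ref{fc:amenable}, and lower-bounds it by $\Omega(|\widehat A|)$ using the no-regret guarantee against ``always transmit'' together with the key observation that a link which fails in a round \emph{with IC available} must in particular have affectance at least $1$ in that round --- if the affectance were below $1$ it would succeed outright, with no cancellation needed. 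That single observation is what transfers the standard no-IC charging argument to the IC game while bypassing $\mathrm{OPT}_{IC}$ entirely, and it is the idea missing from your plan.
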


Thus IC can make the achieved capacity worse, but only by a constant factor.  Note that since every Nash equilibrium (mixed or pure) is also no-regret, this implies the equivalent statements for those type of equilibria as well.  In this result (as in most of our examples) we only show a small network ($4$ links) with no background noise, but these are both only for simplicity -- it is easy to incorporate constant noise, and the small network can be repeated at sufficient distance to get examples with an arbitrarily large number of links.

We next consider power control (PC), where senders can choose not just whether to transmit, but at what power to transmit.  It turns out that any equilibrium without power control is also an equilibrium with power control, and thus we cannot hope to find an example where the best equilibrium with power control is worse than the worst equilibrium without power control (as we did with IC).  Instead, we show that adding power control can create worse equilibria:
\begin{theorem} \label{thm:PC-intro}
There exists a set of links in which there is a pure Nash equilibrium with power control of value at most $c$ times the value of the worst no-regret behavior without power control, for some constant $c < 1$.  However, for every set of links the worst no-regret behavior with power control has value that is at least a constant fraction of the value of the \emph{best} no-regret behavior without power control.
\end{theorem}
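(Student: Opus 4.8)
The plan is to prove the two halves of Theorem~\ref{thm:PC-intro} separately: first an explicit constant-size instance witnessing that some pure Nash under power control (PC) is worse than the \emph{worst} no-regret behavior without PC, and then a general price-of-anarchy-style bound showing the loss is never more than a constant factor.

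For the first (negative) half I would build an explicit instance of constantly many links (four suffices, with an arbitrarily small background noise to avoid degeneracies), and the mechanism rests on the following tension. Without PC all powers are equal, so in the low-noise regime the $\SINR$ of every link is \emph{scale invariant} -- feasibility against a set of simultaneous transmitters depends only on the ratios $(d_i/d_{j\to i})^\alpha$ -- hence no link can ever \emph{permanently} dominate the channel and any no-regret dynamics is pushed toward fair time-sharing. I would choose the geometry so that the uniform-power instance is \emph{not} simultaneously feasible (forcing genuine contention) yet its feasible subsets are symmetric enough that no no-regret play can keep a constant fraction of the links persistently unsuccessful (such a link would regret not transmitting in the rounds its competitors fall silent), yielding a lower bound $V$ on the value of \emph{every} no-regret behavior without PC. With PC, by contrast, I would exhibit a pure Nash in which a small, jointly feasible subset transmits at the maximum power $P_{\max}$ (PC strictly enlarges the feasible family, as expected), while the remaining links \emph{fail even at their own $P_{\max}$} and therefore best-respond by staying silent, giving value $cV$ with $c<1$. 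The delicate design point is that the suppressed links cannot defend themselves: since matching a common power cancels in the scale-invariant $\SINR$, a single high-power interferer is always neutralizable, so each suppressed receiver must be surrounded by \emph{several} $P_{\max}$ senders whose combined interference sum exceeds $1/\beta$ even against a $P_{\max}$ victim. Verifying the Nash conditions then reduces to a handful of distance inequalities: each dominating link still succeeds (hence is indifferent among the powers at which it wins, making $P_{\max}$ a valid best response) and each suppressed link strictly prefers silence to a certain failed transmission.

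For the second (positive) half I would route the comparison through the optimum. Trivially the best no-regret behavior without PC has value at most $\mathrm{OPT}_{\text{no PC}}$, and since power control can only enlarge the set of feasible configurations we have $\mathrm{OPT}_{\text{no PC}}\le\mathrm{OPT}_{\text{PC}}$. It therefore suffices to show that \emph{every} no-regret behavior in the PC game -- in particular the worst one -- achieves capacity $\Omega(\mathrm{OPT}_{\text{PC}})$; chaining the three bounds then yields the claimed constant fraction. I would establish this no-regret approximation guarantee by a charging argument: fix an optimal feasible set $O$ under PC, and for each link in $O$ feed its optimal power setting into the no-regret inequality as a fixed comparison strategy, arguing that it succeeds against the time-averaged play of the others often enough that its average utility, and hence its contribution to the achieved capacity, is bounded below.

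I expect the main obstacle to be controlling the interference generated by the \emph{opponents'} power control in this last step: a priori an opponent may transmit at $P_{\max}$ and swamp a link that at uniform power would have been safe, so the comparison strategy must itself use power to defend, and one must argue that a constant fraction of $O$ survives no matter how aggressively the others transmit. The crucial quantitative requirement is that this surviving fraction be \emph{independent of $P_{\max}$}; otherwise the bound would degrade into the $\Omega(\log\Delta)$ regime that the abstract reserves for PC combined with interference cancellation. Isolating the purely geometric, scale-free core of the $\SINR$ constraints -- so that power matching recovers an absolute constant fraction of the optimum -- is the technical heart of the positive direction.
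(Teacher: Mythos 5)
Your proof of the positive half rests on a key lemma that is false: it is \emph{not} true that every no-regret behavior of the power-control game achieves value $\Omega(\mathrm{OPT}_{PC})$. The paper itself refutes this in Corollary~\ref{cor:pmax_price}, which shows the price of total anarchy with power control is $\Theta(\log\Delta)$: in the nested-link network of Fig.~\ref{fig:nash_pc}, a single very short link transmitting at $P_{\max}$ blocks each of $m=\Omega(\log\Delta)$ nested links at \emph{every} power level in $[1,P_{\max}]$, so there is an $\epsilon$-regret PC history of value $O(1)$ while $\mathrm{OPT}_{PC}=m$. Your charging argument fails on exactly this instance: the fixed comparison strategy ``transmit at your power from the optimal set $O$'' loses in almost every round (average utility near $-1$), so the no-regret inequality gives the suppressed links no reason to transmit, and no constant fraction of $O$ survives. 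The paper's proof of the positive half (Theorem~\ref{thm:nash_pc_upper}) deliberately avoids any optimum: it extracts a set $A$ that is feasible \emph{at uniform powers} from the best no-PC history, passes to an amenable subset $\widehat A$ via Fact~\ref{fc:amenable}(a), and double-counts affectance. One direction bounds the $q$-weighted affectance absorbed by $\widehat A$ by $O\bigl(\sum_i q_i\bigr)$ using Fact~\ref{fc:amenable}(b); the other uses the no-regret condition against the single fixed strategy ``always transmit at $P_{\max}$'' to force affectance at least $1$ on each rarely-transmitting $j\in\widehat A$ in a constant fraction of rounds. Monotonicity---opponents' PC powers are at most $P_{\max}$, so affectance computed at uniform $P_{\max}$ dominates the true affectance---is what makes the uniform-power machinery applicable and keeps the ratio an absolute constant. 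The ``$\Omega(\log\Delta)$ degradation'' you flag as a risk is not a technical obstacle to be engineered around within your scheme; it is unavoidable once you route the comparison through $\mathrm{OPT}_{PC}$.

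The negative half also has a gap in its central mechanism. Your premise that in a symmetric, infeasible uniform-power instance ``no-regret dynamics is pushed toward fair time-sharing'' is wrong: no-regret permits collision-heavy mixed play of low value. For two equidistant mutually blocking links with $\beta>1$, each transmitting independently with probability $1/2$ is a mixed Nash (hence no-regret) of value $1/2$, not the time-sharing value $1$, and with more symmetric contenders the worst no-regret value drops further---so the inequality you need (PC Nash value strictly below the \emph{worst} no-PC no-regret value) could point the wrong way for your construction. The paper's instance (Theorem~\ref{thm:pmax_nash_lb}) instead makes uniform-power play essentially \emph{forced} rather than shared: $\ell_2$ is always feasible, so $p_2\geq 1-\epsilon$; $\ell_1$ can never succeed when $s_2$ transmits (equidistant interferer, $\beta>1$), forcing $p_1\leq 3\epsilon$; and with $s_1$ silent, $\ell_3,\ell_4$ succeed almost always, pinning \emph{every} no-regret value at $3-O(\epsilon)$. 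The bad PC pure Nash is then the asymmetric profile $P_1=P_{\max}=2$, $P_2=1$: link $\ell_1$ succeeds only by out-powering $s_2$, and its full-power transmission suppresses $\ell_3,\ell_4$, which fail even at $P_{\max}$, giving value $2$ and ratio $2/3$. Note this also shows the essential suppression mechanism is a single link made dominant by \emph{asymmetric} powers, not your ``several $P_{\max}$ senders'' surrounding each victim; your proposal does correctly respect Observation~\ref{obs:nash_pc_contain} in exhibiting only \emph{a} bad PC equilibrium rather than bounding the best one.
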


Note that the first part of the theorem implies that not only is there a pure Nash with low-value (with power control), there are also mixed Nash and no-regret behaviors with low value (since any pure Nash is also mixed and no-regret).  Similarly, the second part of the theorem gives a bound on the gap between the worst and the best mixed Nashes, and the worst and the best pure Nashes.

Our third set of results is on the combination of power control and interference cancellation.  It turns out that the combination of the two can be quite harmful.  When compared to either the vanilla setting (no interference cancellation or power control) or the presence of power control without interference cancellation, the combination of IC and PC acts essentially as in Theorem~\ref{thm:PC-intro}: pure Nash equilibria are created that are worse than the previous worst no-regret behavior, but this can only be by a constant factor.  On the other hand, this factor can be super-constant when compared to equilibria that only use interference cancellation.
Let  $\Delta$ be the ratio of the length of the longest link to the minimum distance between any sender and any receiver. \footnote{Note that this definition is slightly different than the one used by \cite{HW09,Asgeirsson11,HM12} and is a bit more similar to the definition used by \cite{AD09,Din10}. The interested reader can see that this is in fact the appropriate definition in the IC setting, namely, in a setting where a receiver can decode multiple (interfering) stations.}
\begin{theorem}
There exists a set of links in which the worst pure Nash with both PC and IC (and thus the worst mixed Nash or no-regret behavior) has value at most $O(1/\log \Delta)$ times the value of the worst no-regret behavior with just IC.  However, for every set of links the worst no-regret behavior with both PC and IC has value at least $\Omega(1/\log \Delta)$ times the value of the best no-regret behavior with just IC.
\end{theorem}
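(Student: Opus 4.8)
The plan is to prove the two directions separately: a geometric \emph{construction} realizing the $\Omega(\log\Delta)$ separation, and a general \emph{charging argument} giving the matching $O(\log\Delta)$ upper bound. For the construction I would place $k=\Theta(\log\Delta)$ links whose lengths grow geometrically, say $\ell_i$ of length $\approx 2^i$, so that $\Delta=2^{\Theta(k)}$ spans the full dynamic range. The links are arranged so that whenever a subset of them transmits \emph{at the uniform reference power}, the received signals at each receiver are geometrically separated and interference cancellation can peel them off one at a time, so that \emph{every} transmitting link succeeds regardless of which subset transmits. Consequently, in the IC-only game transmitting yields strictly positive utility against every opponent profile (all powers being uniform there), and the no-regret property forces each link to succeed in a $1-o(1)$ fraction of rounds; hence even the \emph{worst} IC-only no-regret behavior has value $(1-o(1))k=\Theta(\log\Delta)$. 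The delicate point is engineering the geometry so that decodability survives for \emph{all} subsets, which is exactly what makes the worst IC-only value large.

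I would then exhibit a pure Nash equilibrium of the PC+IC game in which the senders adopt a power profile that destroys this separation: powers are chosen so that the strongest interfering signals at the critical receivers become \emph{comparable} in strength. When several signals arrive within a constant factor, IC can cancel only $O(1)$ of them and the residual interference keeps each post-cancellation SINR below $\beta$, so only $O(1)$ links succeed. Crucially this does not contradict the previous paragraph: a link that deviates to the uniform reference power now faces the \emph{non-uniform} interference created by the others and still fails. The heart of this half is the Nash verification, using the Andrews--Dinitz payoffs (positive reward for a successful transmission, small penalty for a failed one): a failing link that lowers its power only gets weaker and still fails, while one that raises its power disturbs the decoding order but, given the $\approx 1/(k-1)$ SINR margins, still cannot lift its own post-cancellation SINR above $\beta$. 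Thus no unilateral deviation is profitable, the worst-case PC+IC value is $O(1)$, and the ratio is $O(1/\log\Delta)$.

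For the upper bound I would prove the stronger fact that \emph{every} no-regret behavior with PC+IC has value $\Omega(C/\log\Delta)$, where $C$ is the IC-capacity (maximum number of simultaneously decodable links at uniform power); since the best IC-only no-regret value is at most $C$, this suffices. Fix a maximum IC-feasible set $S$ with $|S|=C$ and bucket its links into $O(\log\Delta)$ classes by length scale (consecutive powers of two); pigeonhole gives one class $S'$ with $|S'|=\Omega(C/\log\Delta)$ links of nearly equal length. Given any PC+IC no-regret behavior, consider for each $\ell\in S'$ the fixed deviation ``transmit at the uniform reference power.'' Because the links of $S'$ are equi-length and were IC-feasible, a constant fraction of them succeed against the empirical play of the others, so each such deviation has positive expected utility; the no-regret guarantee then lower-bounds each link's time-averaged utility by (essentially) this deviation value, and summing over $S'$ yields $\Omega(|S'|)=\Omega(C/\log\Delta)$ expected successes. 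Note this is exactly tight against the construction, where $C=\Theta(\log\Delta)$ but every length class contains only $O(1)$ links.

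The main obstacle is the single-scale deviation step in the upper bound. Unlike the plain SINR analysis, the \emph{other} players here are free to use power control, so when $\ell\in S'$ plays its reference power the interference it sees is generated by an adversarially chosen, possibly wildly non-uniform power profile -- indeed the construction shows that across \emph{all} scales such a profile can defeat the reference-power deviation entirely. The content of the single-scale claim is that, restricted to one length class, the residual interference at $\ell$'s receiver \emph{after} cancellation of the $O(1)$ strongest signals is bounded no matter what the others do. I expect this to reduce to a geometric interference-summation (affordability) estimate over the equi-length links of $S'$, analogous to the bounds underlying Theorems~\ref{thm:IC-intro} and~\ref{thm:PC-intro} but now carried out scale-by-scale, which is precisely where the unavoidable $\log\Delta$ factor enters.
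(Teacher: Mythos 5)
Your high-level plan matches the paper's (a geometric-chain construction plus a $\log\Delta$-losing reduction to the non-IC charging argument), but both halves have concrete gaps as written. In the construction, recall the game's utilities: a link that transmits and fails earns $-1$, while power $0$ earns $0$. Hence in \emph{any} pure Nash every transmitting link succeeds, and your proposed equilibrium --- many chain links transmitting at ``equalizing'' powers with only $O(1)$ of them succeeding --- is not an equilibrium at all: each failing transmitter profitably deviates to power $0$, a deviation your verification (``a failing link that lowers its power only gets weaker and still fails'') never considers. The fix, which is what the paper does, is to make the Nash support a small set of \emph{always-successful} blocker links whose equilibrium powers place two equal-strength signals at the critical (co-located) receivers, so the strongest signal cannot be decoded and the cancellation chain is blocked; all chain links are then silent in equilibrium because \emph{every} power in $[1,P_{\max}]$ fails against the blockers. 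Note also that power control is exactly what enables the equalization: at uniform powers the blockers' received strengths differ by a factor $\geq \beta$, the chain receivers peel them off, and no-regret forces the redundant blockers off, giving the $\Omega(\log\Delta)$ IC-only value.

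The upper bound has two more serious problems. First, bucketing a maximum IC-feasible set by \emph{link length} and pigeonholing does not yield a set feasible without IC: at uniform power, received interference is governed by sender-to-receiver distances, not link lengths, so an equi-length class can still contain links whose decoding relies on cancelling each other (an interferer's sender may be far closer to my receiver than my own sender). The paper's decomposition is different and is where $\log\Delta$ actually enters: each receiver can cancel at most $O(\log_\beta(\Delta^\alpha\cdot P_{\max}))$ signals, since successively cancelled strengths grow geometrically by $\beta$ within the range $[\,1/\Delta^\alpha, P_{\max}]$, and a greedy pass (pick a link, delete the links it cancelled) extracts a $1/O(\log\Delta)$ fraction that is $\beta$-feasible \emph{without} IC at the same powers (Lemma~\ref{lem:ic_no_ic_feas}). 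Second, your ``single-scale deviation succeeds'' claim is false, as you half-acknowledge: when $\ell$ deviates to the reference power, links outside $S'$ may jam it at $P_{\max}$ with comparable-strength signals that IC cannot cancel, and no interference summation over the equi-length links of $S'$ alone can control this. The paper's argument never needs the deviation to \emph{succeed}: for links with $q_j<1/2$, the no-regret guarantee forces that in at least a $\frac14-\frac{\epsilon}{2}$ fraction of rounds the affectance on $\ell_j$ is at least $1$, and summing these lower bounds over an amenable subset $\widehat A$ of the extracted non-IC-feasible set, against the upper bound $\sum_{i\in L}\sum_{j\in\widehat A} q_i a_i(j) \leq O\bigl(\sum_{i\in L} q_i\bigr)$ from Fact~\ref{fc:amenable}, yields $|\widehat A| \leq O\bigl(\sum_{i\in L} q_i\bigr)$ unconditionally; Lemma~\ref{lem:attempts} then converts attempts to successes. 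Replacing your deviation-succeeds step with this contrapositive affectance accounting (and your length-bucketing with the cancellation-chain lemma) is necessary, not optional.
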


This theorem means that interference cancellation ``changes the game": if interference control were not an option then power control can only hurt the equilibria by a constant amount (from Theorem~\ref{thm:PC-intro}), but if we assume that interference cancellation is present then adding power control can hurt us by $\Omega(\log \Delta)$.  Thus when deciding whether to use both power control and interference cancellation, one must be particularly careful to analyze how they act in combination.

Finally, we consider the effect of decreasing the SINR threshold $\beta$ (this value will be formally described in Section~\ref{sec:models}).  We show that, as with IC, there are networks in which a decrease in the SINR threshold can lead to \emph{every} equilibrium being worse than even the worst equilibrium at the higher threshold, despite the capacity increasing or staying the same:
\begin{theorem}
There exists a set of links and constants $1 < \beta' < \beta$ in which the best no-regret behavior under threshold $\beta'$ has value at most $c$ times the value of the worst no-regret behavior under threshold $\beta$, for some constant $c < 1$.  However, for any set of links and any $1 < \beta' < \beta$ the value of the worst no-regret behavior under $\beta'$ is at least a constant fraction of the value of the best no-regret behavior under $\beta$.
\end{theorem}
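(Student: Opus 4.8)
The plan is to prove the two halves separately: first exhibit a constant-size gadget realizing the paradox, and then give an instance-independent argument that the loss is at most a constant factor. Throughout I use the standard utility structure in which a silent link gets $0$, a transmitting link that meets its threshold gets a positive benefit (normalized to at most $1$), and a transmitting link that fails pays a positive cost. Thus ``always silent'' is always available and guarantees utility $0$, so in any no-regret behavior every link's time-averaged utility is at least $-o(1)$; this single inequality, together with the comparison to ``always transmit,'' drives both directions.

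For the existence part I would build a small instance with nonzero background noise $N$ consisting of a \emph{good} pair of short, strong links $A,B$ that are jointly feasible, together with a single long \emph{disruptor} link $C$, chosen with an asymmetric geometry: the sender of $C$ sits among the receivers of $A$ and $B$ (so $C$'s transmission causes heavy interference at both), while the receiver of $C$ lies far from the senders of $A$ and $B$ (so $A,B$ barely reach $C$'s receiver). I tune $N$ and the length of $C$ so that $C$'s stand-alone SINR lies in $[\beta',\beta)$, which makes the threshold a switch. After verifying the handful of SINR inequalities, the two equilibrium values can be read off. Under $\beta$, link $C$ fails even in isolation, so transmitting is dominated by silence and $C$ stays silent in every no-regret behavior (its activity probability is forced to $o(1)$); with $A,B$ chosen so that transmitting is dominant for each, every no-regret behavior has value $2-o(1)$, so in particular the worst one does. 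Under $\beta'$, the link $C$ succeeds essentially whenever it transmits regardless of $A,B$, so the no-regret comparison to ``always transmit'' forces $C$ to be active a $(1-o(1))$ fraction of the time; but every round in which $C$ is on destroys both $A$ and $B$, so the comparison to silence forces $A$ and $B$ to be active only an $o(1)$ fraction, and the best no-regret behavior under $\beta'$ has value $1+o(1)$. This gives the gap with (say) $c=1/2$, and since $\{A,B\}$ remains $\beta'$-feasible the capacity does not drop --- only the equilibria degrade.

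For the ``however'' part I would argue uniformly over instances. First, the value of any no-regret behavior at a threshold is at most the capacity at that threshold, since at each round the set of \emph{successful} links is feasible. Second, for the worst behavior under $\beta'$ I would lower bound its value by $\Omega$ of the size of a \emph{maximal} $\beta'$-feasible set: a rarely-active link must, by no-regret against ``always transmit,'' fail whenever it would transmit, so the active links block every inactive one and their (averaged) set is essentially $\beta'$-maximal, while each active link succeeds with constant conditional probability (no-regret against silence) and thus contributes $\Omega(1)$. Finally I would compare a maximum $\beta$-feasible set $O$ (which is also $\beta'$-feasible, since $\beta'<\beta$) with any maximal $\beta'$-feasible set $M$ by a standard SINR charging argument: each link of $O$ is in $M$ or is blocked by $M$, and each link of $M$ is charged by only $O(1)$ members of $O$ because those members are pairwise $\beta$-separated. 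Hence $|M| = \Omega(|O|)$, which is $\Omega$ of the $\beta$-capacity and therefore $\Omega$ of the best no-regret behavior under $\beta$, giving the claimed constant.

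The main obstacle I anticipate is the general lower bound, and specifically making its two structural reductions uniform in the thresholds. Extracting an honest maximal $\beta'$-feasible set from a randomized (correlated) no-regret behavior requires an averaging argument to convert ``every inactive link fails in expectation'' into genuine blocking by a single fixed feasible set, and the charging step must be shown to have a constant depending only on the path-loss exponent $\alpha$ and not on $\beta$ or $\beta'$. The favorable monotonicity is that a smaller $\beta'$ only forces maximal sets to be larger, while a larger $\beta$ only makes $O$ more separated, so both effects push the inequality in the right direction; nonetheless, verifying that the charging constant is genuinely threshold-free is the delicate point on which the constant-factor claim rests.
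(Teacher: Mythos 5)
Your gadget for the first half is, up to parameter choices, exactly the paper's construction (Theorem~\ref{thm:beta_nash_lb}, Fig.~\ref{fig:nash_all}(d)): two mutually feasible links plus a single noise-limited ``switch'' link whose stand-alone SINR is wedged between $\beta'$ and $\beta$, placed so that when it transmits it kills both good links. The paper instantiates this with $\beta'=101/100$, $\beta=4$, $\alpha=2$, $c=\sqrt{1/(\beta\Noise)}$, $a=b=\tfrac23 c$, obtaining worst value $2-6\epsilon$ at threshold $\beta$ versus best value $1+\epsilon$ at threshold $\beta'$; your no-regret bookkeeping (silence guarantees average utility $\geq-\epsilon$, comparison with always-transmit forces the activity fractions) is verbatim the paper's. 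So this half is the same approach and is fine.

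For the second half your route is genuinely different in decomposition but rests on the same engine, and your two anticipated obstacles are real; it is instructive to see how the paper's proof (Lemma~\ref{lem:beta_nash_lb}) avoids rather than solves them. The paper never extracts a fixed maximal $\beta'$-feasible set from the time-varying history: it takes the $\beta$-feasible set $A$ realized in the best $\beta$-history, discards links with $q_i\geq 1/2$ (those already pay for themselves), passes to an amenable subset $\widehat A$ via Fact~\ref{fc:amenable}(a), upper-bounds $\sum_{i\in L}\sum_{j\in\widehat A} q_i\,a^{\beta}_i(j)\leq c'\sum_i q_i$ via Fact~\ref{fc:amenable}(b), and lower-bounds the same double sum by $\Omega(|\widehat A|)$ using precisely your observation that a link with $q_j<1/2$ must, by no-regret against always-transmit, have $y_j\leq \frac34+\frac{\epsilon}{2}$, i.e.\ affectance at threshold $\beta'$ at least $1$ on a quarter of the rounds. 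The blocking statement thus stays in expectation and enters only through the summed affectance, so the averaging conversion to ``a single fixed feasible set'' that worries you is simply unnecessary. Two corrections to your sketch. First, your charging-depth bound should be justified by amenability, not by pairwise $\beta$-separation: pairwise separation of a feasible set does not by itself bound the total affectance a single external link receives from (or exerts on) the set, which is why Fact~\ref{fc:amenable}(a) must be invoked to pass to a subset before Fact~\ref{fc:amenable}(b) applies. Second, the delicate constant is governed by noise proximity more than by the thresholds: the paper handles threshold-uniformity by the linearity of affectance in the threshold (its quantity $\widehat E^{\hat\beta}$ satisfies $\widehat E^{2\beta'}=(2\beta'/\beta)\widehat E^{\beta}$, and $\beta'/\beta\leq 1$ works in your favor), but this requires the standing hypothesis $P_{vv}\geq 2\beta\Noise$, under which alone Lemma~\ref{lem:beta_nash_lb} is stated -- indeed the paper notes its own bad example violates it before modification. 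Your general argument needs the same caveat, since $c_v(\beta')=\beta'/(1-\beta'\Noise d_{vv}^{\alpha}/P_v)$, and with it every charging constant, blows up for links near the noise-limited regime. Finally, note that if you are willing to cite Theorem~2 of \cite{Asgeirsson11} (constant price of total anarchy for uniform powers), your chain collapses to $\NoRegret^{\beta'}_{\min}(L)\geq \Omega(OPT_{\beta'})\geq \Omega(OPT_{\beta})\geq \Omega(\NoRegret^{\beta}_{\max}(L))$ using only monotonicity of feasibility in the threshold, which buys brevity at the cost of having to track that citation's dependence on $\beta'$ and noise -- the same caveat again.
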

Our main network constructions illustrating Braess's paradox in the studied settings are summarized in Fig. \ref{fig:nash_all}.
\begin{figure}[ht!]
\begin{center}
\includegraphics[scale=0.2]{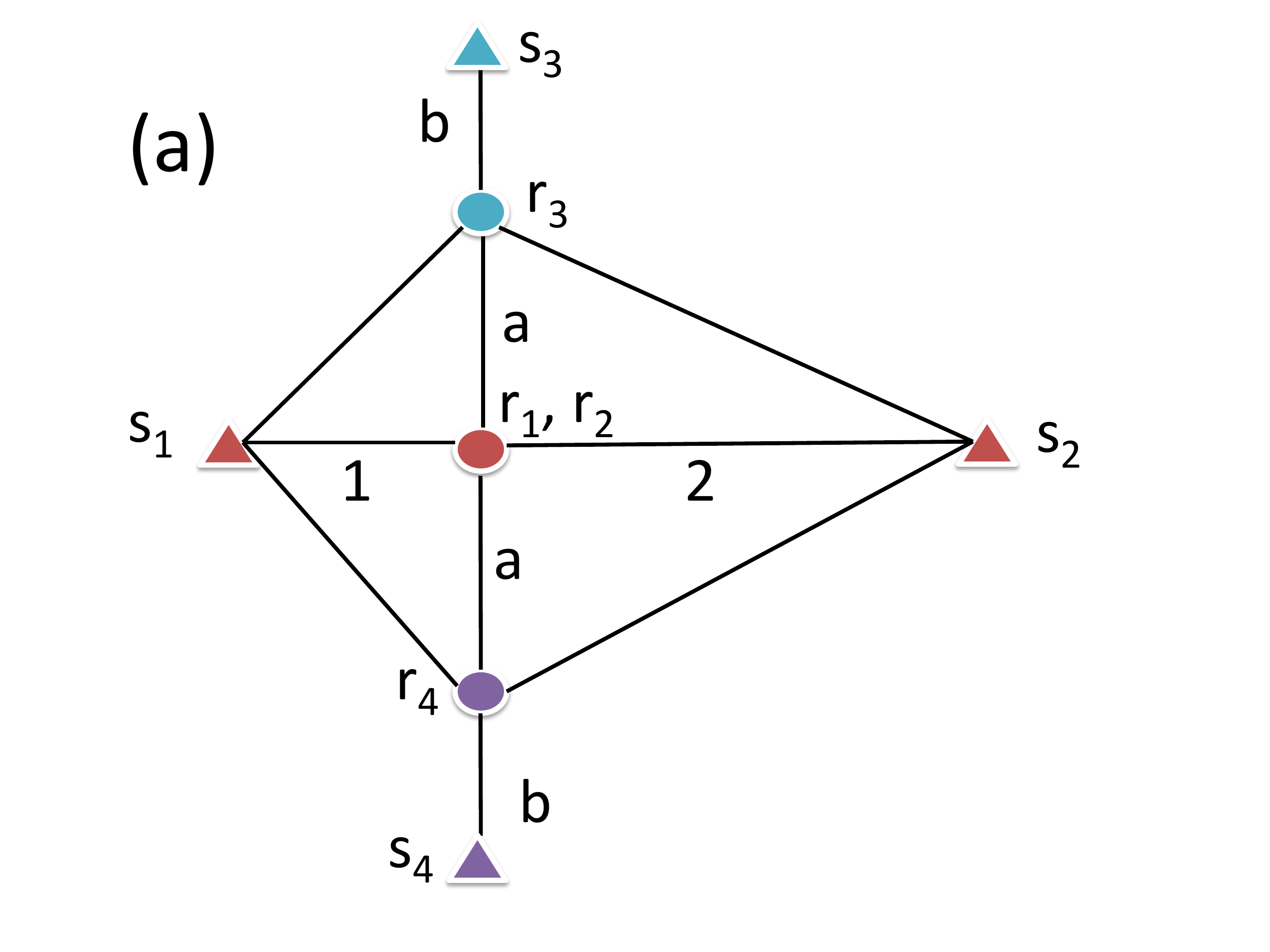}
\hfill
\includegraphics[scale=0.2]{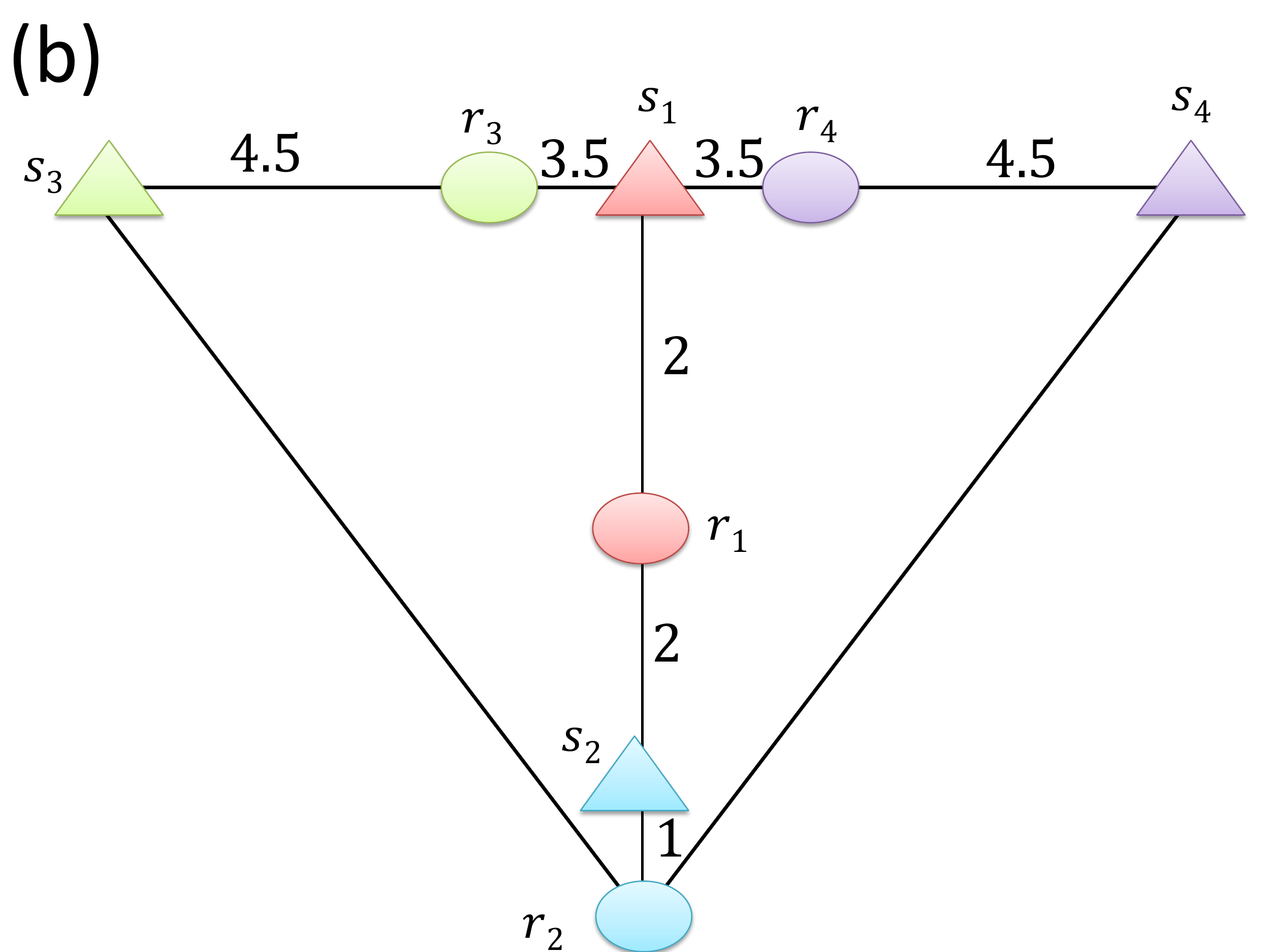}
\hfill
\includegraphics[scale=0.2]{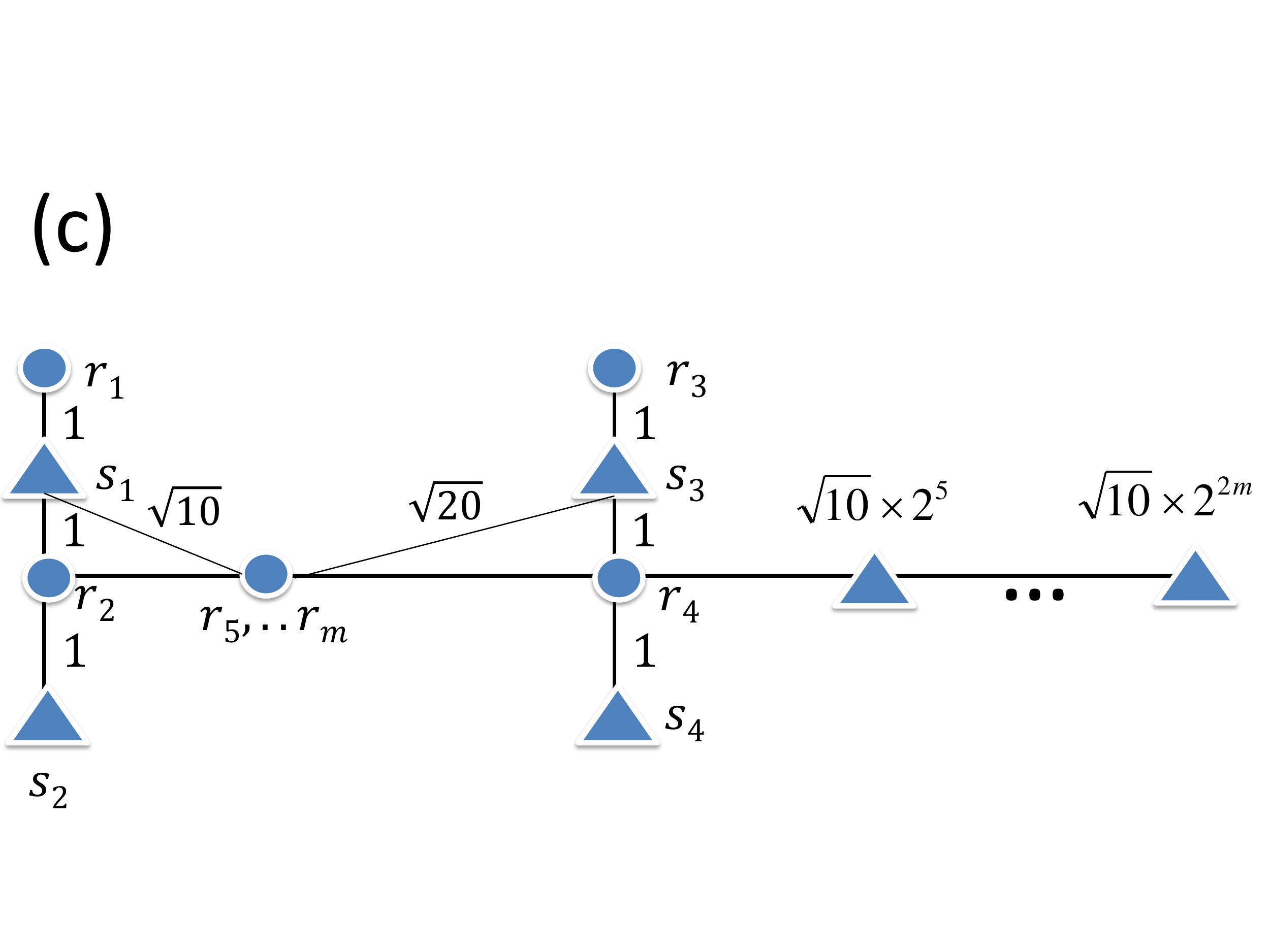}
\hfill
\includegraphics[scale=0.2]{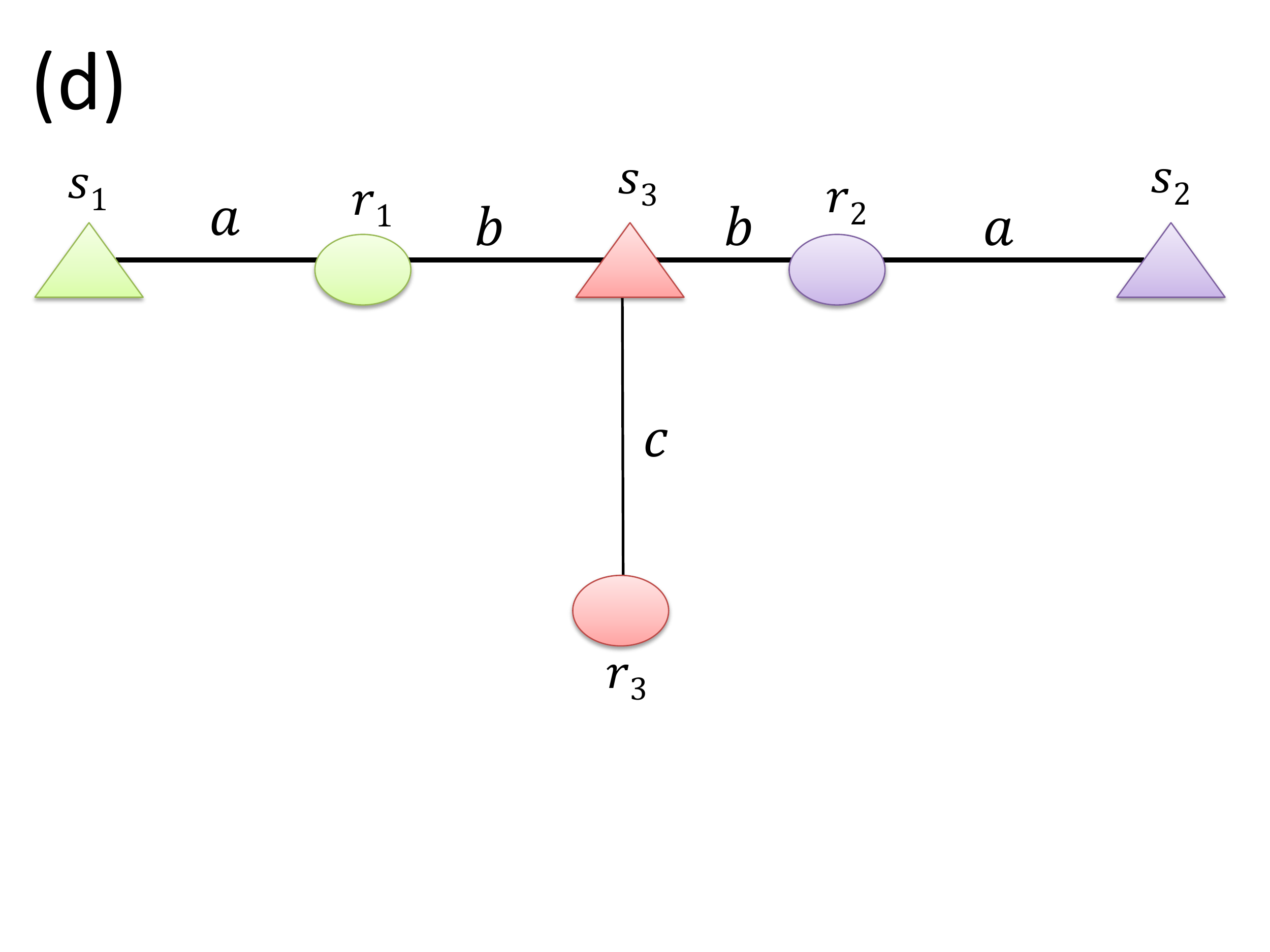}
\caption{ \label{fig:nash_all}
\sf
Schematic illustration of the main lower bounds illustrating the Braess's paradox with
(a) IC: a network in which every no-regret behavior without IC is better than any no-regret behavior solution with IC;
(b) PC: a network in which there exists a pure Nash equilibrium with PC which is worse than any no-regret behavior with IC;
(c) PIC: a network with a pure Nash equilibrium in the PIC setting which is $\Omega(\log \Delta)$ worse than any no-regret behavior in the IC setting but \emph{without} power control; and
(d) Decreased SINR threshold $\beta'<\beta$: a network in which every no-regret behavior with $\beta'$ has a smaller value than any no-regret behavior with higher SINR threshold $\beta$.
Edge weights represent distances.
}
\end{center}
\end{figure}


\subsection{Related Work}

The capacity of \emph{random} networks was examined in the seminal paper of Gupta and Kumar~\cite{GuKu00}, who proved tight bounds in a variety of models.  But only recently has there been a significant amount of work on algorithms for determining the capacity of \emph{arbitrary} networks, particularly in the SINR model.  This line of work began with Goussevskaia, Oswald, and Wattenhofer~\cite{GoOsWa07}, who gave an $O(\log \Delta)$-approximation for the uniform power setting (i.e.~the vanilla model we consider).  Goussevskaia, Halld\'orson, Wattenhofer, and Welzl~\cite{GHWW09} then improved this to an $O(1)$-approximation (still under uniform powers), while Andrews and Dinitz~\cite{AD09} gave a similar $O(\log \Delta)$-approximation algorithm for the power control setting.  This line of research was essentially completed by an $O(1)$-approximation for the power control setting due to Kesselheim~\cite{K11}.

In parallel to the work on approximation algorithms, there has been some work on using game theory (and in particular the games used in this paper) to help design distributed approximation algorithms.  This was begun by Andrews and Dinitz~\cite{AD09}, who gave an upper bound of $O(\Delta^{2\alpha})$ on the price of anarchy for the basic game defined in Section~\ref{sec:game-theory}.  But since computing the Nash equilibrium of a game is PPAD-complete~\cite{DGP06}, we do not expect games to necessarily converge to a Nash equilibrium in polynomial time.  Thus Dinitz~\cite{Din10} strengthened the result by showing the same upper bound of $O(\Delta^{2\alpha})$ for no-regret behavior.  This gave the first distributed algorithm with a nontrivial approximation ratio, simply by having every player use a no-regret algorithm.  The analysis of the same game was then improved to $O(\log \Delta)$ by \'Asgeirsson and Mitra~\cite{Asgeirsson11}.

There is very little work on interference cancellation in arbitrary networks from an algorithmic point of view, although it has been studied quite well from an information-theoretic point of view (see e.g.~\cite{ETW08,CE87}).  Recently Avin et al.~\cite{AvinCHKLPP12} studied the topology of \emph{SINR diagrams} under interference cancellation, which are a generalization of the SINR diagrams introduced by Avin et al.~\cite{Avin2009PODC} and further studied by Kantor et al.~\cite{KLPP11} for the SINR model without interference cancellation.  These diagrams specify the reception zones of transmitters in the SINR model, which turn out to have several interesting topological and geometric properties but have not led to a better understanding of the fundamental capacity question.

\section{Preliminaries}

\subsection{The Communication Model} \label{sec:models}
We model a wireless network as a set of links $L=\{\ell_1, \ell_2, \ldots, \ell_n\}$ in the plane, where each link $\ell_v=(s_v, r_v)$ represents a communication request from a sender $s_v$ to a receiver $r_v$. The $n$ senders and receivers are given as points in the Euclidean plane. The Euclidean distance between two points $p$ and $q$ is denoted $d(p,q)$.
The distance between sender $s_i$ and receiver $r_j$ is denoted by $d_{i,j}$. We adopt the physical model (sometime called the SINR model) where the received signal strength of transmitter $s_i$ at the receiver $r_j$ decays with the distance and it is given by $P_{i,j}=P_i/d_{i,j}^\alpha$, where $P_i \in [0, P_{\max}]$ is the transmission power of sender $s_i$ and $\alpha >0$ is the path-loss exponent.  Receiver $r_j$ successfully receives a message from sender $s_j$ iff $\SINR_{j}(L) ~=~ \frac{P_{j,j}}{\sum_{\ell_i \in L} P_{i,j}+\Noise} \geq \beta$, where $\Noise$ denotes the amount of background noise and $\beta>1$ denotes the minimum SINR required for a message to be successfully received.
The \emph{total interference} that receiver $r_j$ suffers from the set of links $L$ is given by $\sum_{i \neq j} P_{i,j}$.
Throughout, we assume that all distances $d_{i,j}$ are normalized so that $\min_{s_i, r_j}d_{i,j} = 1$ hence the maximal link length is $\Delta$, i.e., $\Delta=\max_{i} d_{i,i}$ and any received signal strength $P_{i,j}$ is bounded by $P_{i,j} \leq P_{\max}$.

In the vanilla SINR model we require that $P_i$ is either $0$ or $P_{\max}$ for every transmitter.  This is sometimes referred to as \emph{uniform} powers.  When we have power control, we allow $P_i$ to be any integer in $[0, P_{\max}]$.

Interference cancellation allows receivers to cancel signals that they can decode.  Consider link $\ell_j$.  If $r_j$ can decode the signal with the largest received signal, then it can decode it and remove it.  It can repeat this process until it decodes its desire message from $s_j$, unless at some point it gets stuck and cannot decode the strongest signal.  Formally, $r_j$ can decode $s_i$ if $
P_{i,j} / (\sum_{\ell_k : P_{k,j} < P_{i,j}} P_{k,j}+\Noise) \geq \beta$ (i.e.~it can decode $s_i$ in the presence of weaker signals) and if it can decode $s_k$ for all links $\ell_k$ with $P_{k,j} > P_{i,j}$.  Link $\ell_j$ is successful if $r_j$ can decode $s_j$.

The following key notion, which was introduced in~\cite{HW09} and extended to arbitrary powers by \cite{Kesselheim}, plays an important role in our analysis. The \emph{affectance} $a_{w}^P(v)$ of link $\ell_v$ caused by another link $\ell_w$ with a given power assignment vector $P$ is defines to be $a_{w}^P(v)=\min\left\{1, c_v(\beta) P_{w,v}/P_{v,v} \right\}$,
where $c_v(\beta)=\beta/(1-\beta \Noise d_{v,v}^{\alpha}/P_v)$.
Informally, $a_{w}^P(v)$ indicates the amount of (normalized) interference that link $\ell_w$ causes at link $\ell_v$.  It is easy to verify that link $\ell_v$ is successful if and only if $\sum_{w \neq v} a_w^P(v) \leq 1$.

When the powers of $P$ are the same $P_u=P_{\max}$ for every $\ell_u$, (i.e.~uniform powers), we may omit it and simply write $a_{w}(v)$.  For a set of links $L$ and a link $\ell_w$, the total affectance caused by $\ell_w$ is $a_w(L) = \sum_{\ell_v \in L} a_w(v)$. In the same manner, the total affectance caused by $L$ on the link $\ell_w$ is $a_L(w) = \sum_{\ell_v \in L} a_v(w)$.
We say that a set of links $L$ is \emph{$\beta$-feasible} if $\SINR_{v}(L)\geq \beta$ for all $\ell_v \in L$, i.e.~every link achieves SINR above the threshold (and is thus successful even without interference cancellation).  It is easy to verify that $L$ is $\beta$-feasible if and only if $\sum_{\ell_w \in L} a_{w}(v)\leq 1$ for every $\ell_v \in L$.

Following \cite{HM12}, we say that a link set $L$ is \emph{amenable} if the total affectance caused by any single link is bounded by some constant, i.e.,  $a_{u}(L) =O(1)$ for every $u \in L$.
The following basic properties of amenable sets play an important role in our analysis.
\begin{fact}\cite{Asgeirsson11}
\label{fc:amenable}
(a) Every feasible set $L$ contains a subset $L' \subseteq L$, such that $L'$ is amenable and $|L'|\geq |L|/2$.\\
(b) For every amenable set $L'$ that is $\beta$-feasible with uniform powers, for every other link $u$, it holds that $\sum_{v \in L'}a_{u}(v)=O(1)$.
\end{fact}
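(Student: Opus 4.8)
The plan is to prove the two parts by separate arguments: a one-line averaging (Markov) bound for (a), and a geometric decomposition for (b) that reduces the ``far'' contribution to amenability and leaves a packing estimate as the real work.

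For part (a) I would start from the feasibility constraints $\sum_{w \ne v} a_w(v) \le 1$, one for each $v \in L$, and sum them over all $v$. Swapping the order of summation rewrites the left-hand side as $\sum_{w \in L} b_w$, where $b_w := \sum_{v \ne w} a_w(v)$ is the total out-affectance of $\ell_w$ inside $L$; hence $\sum_{w} b_w \le |L|$. By Markov's inequality fewer than $|L|/2$ links can have $b_w > 2$, so $L' := \{w : b_w \le 2\}$ satisfies $|L'| \ge |L|/2$. For $w \in L'$ we then get $a_w(L') = a_w(w) + \sum_{v \in L', v \ne w} a_w(v) \le 1 + b_w \le 3 = O(1)$, using $a_w(w) \le 1$ and the fact that restricting the sum to $L'$ only deletes (nonnegative) terms; thus $L'$ is amenable. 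The only thing to verify is that amenability passes to subsets, which is immediate since affectances are nonnegative.

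For part (b) I would fix an external link $\ell_u$ and let $\ell_w \in L'$ be the link whose receiver is closest to $s_u$, writing $\rho := d_{u,w} = \min_{v \in L'} d_{u,v}$. With uniform powers $a_u(v) \le \beta\,(d_{v,v}/d_{u,v})^\alpha$, so the task is to control these distance ratios. The main tool is a relaxed triangle inequality: since $d_{w,v} \le d_{w,w} + d(r_w,r_v) \le d_{w,w} + \rho + d_{u,v}$, we get $d_{u,v} \ge d_{w,v} - d_{w,w} - \rho$. I would split $L'$ into the \emph{far} links, with $d_{w,v} \ge 2(d_{w,w}+\rho)$, and the remaining \emph{near} links. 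For a far link this bound gives $d_{u,v} \ge d_{w,v}/2$, hence $a_u(v) \le 2^\alpha a_w(v)$ (checking the capped case separately), and summing yields a far-part bound of $2^\alpha\, a_w(L') = O(1)$, which is exactly where amenability of $L'$ enters.

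The hard part is the near links, whose receivers all lie in the ball $B(s_w, 2(d_{w,w}+\rho))$ and for which the relaxed triangle inequality is useless. Here I would invoke $\beta$-feasibility through a packing argument: two links of length in $[\ell,2\ell)$ whose receivers are within $o(\ell)$ of each other force $a_{v'}(v) = \Omega(1)$ (because $d_{v',v} \le d_{v',v'} + d(r_{v'},r_v)$ is then $O(\ell)$), so feasibility $\sum_v a_v(w) \le 1$ implies that same-length-class receivers are $\Omega(\ell)$-separated; consequently a radius-$R$ ball contains only $O((R/\ell)^2)$ near links of that class. Combining this count with the per-link estimate $a_u(v) \le \min\{1,\beta(d_{v,v}/\rho)^\alpha\}$ and summing the resulting series over length classes should give an $O(1)$ bound for the near part as well. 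I expect this packing-and-summation step to be the main obstacle: one must treat separately the links much longer than the ball radius (whose receivers may cluster, yet feasibility still caps their number per length class) and use the affectance cap $a_u(v) \le 1$ to make the series over length classes converge, which is the delicate interplay that carries the whole estimate. Adding the far and near contributions then gives $\sum_{v \in L'} a_u(v) = O(1)$, as claimed.
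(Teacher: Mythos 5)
Your part (a) is correct, and it is exactly the argument in \'Asgeirsson--Mitra (note the paper itself does not prove this fact; it imports it from \cite{Asgeirsson11}, so that is the benchmark): sum the feasibility constraints $\sum_{w\neq v}a_w(v)\le 1$ over $v$, swap the summation order to bound the total out-affectance $\sum_w b_w\le |L|$, apply Markov to keep the half with $b_w\le 2$, and observe that amenability passes to subsets by nonnegativity. Your far-link treatment in part (b) is also the standard one: comparing $d(s_u,r_v)$ to $d(s_w,r_v)$ via the anchor link $\ell_w$ and charging to $a_w(L')=O(1)$.

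The genuine gap is in your near-link argument, and it is quantitative, not just a missing detail. With $R=2(d_{w,w}+\rho)$, your per-class packing count $O((R/\ell)^2)$ combined with the only distance information you retain, $a_u(v)\le\min\{1,\beta(d_{v,v}/\rho)^\alpha\}$, sums over the length classes with $\ell\lesssim R$ to $\Theta((R/\rho)^2)$, which is unbounded whenever $d_{w,w}\gg\rho$, i.e., whenever the receiver closest to $s_u$ happens to belong to a long link; and for the up to $\Theta(\log\Delta)$ classes with $\ell>R$, feasibility caps each class at $O(1)$ links but each such link can have $a_u(v)=1$, so this piece alone only yields $O(\log\Delta)$ --- precisely the loss the fact is supposed to avoid (and refining to dyadic annuli around $s_u$ still leaves $O(\log(R/\rho))$). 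The missing idea is that the near region must also be routed through the anchor $\ell_w$ rather than through packing, using amenability a second time --- your near argument never invokes it. Concretely, for $v$ with $d(s_u,r_v)<d_{w,w}$ (so also $\rho<d_{w,w}$): if $d_{v,v}\ge d_{w,w}$ then $d(s_w,r_v)\le d_{w,w}+\rho+d(s_u,r_v)<3d_{w,w}\le 3d_{v,v}$, hence $a_w(v)\ge\min\{1,\beta 3^{-\alpha}\}=\Omega(1)$, and amenability of $L'$ (via $a_w(L')=O(1)$) permits only $O(1)$ such links \emph{in total, across all length classes}; if instead $d_{v,v}<d_{w,w}$ then $d(s_v,r_w)\le d_{v,v}+d(s_u,r_v)+\rho<3d_{w,w}$, hence $a_v(w)=\Omega(1)$, and the feasibility constraint $\sum_{v\neq w}a_v(w)\le 1$ at $w$ again permits only $O(1)$ such links. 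Each of these $O(1)$ near links contributes at most $1$, so no series over length classes is needed at all, and the two $\Omega(1)$-affectance observations are exactly what replaces your packing estimate where it loses the logarithm.
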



\subsection{Basic Game Theory} \label{sec:game-theory}
We will use a game that is essentially the same as the game of Andrews and Dinitz~\cite{AD09}, modified only to account for the different models we consider.  Each link $\ell_i$ is a player with $P_{\max}+1$ possible strategies: broadcast at power $0$, or at integer power $P \in \{1, \dots, P_{\max}\}$. A link has utility $1$ if it is successful, has utility $-1$ if it uses nonzero power but is unsuccessful, and has utility $0$ if it does not transmit (i.e.~chooses power $0$).  Note that if power control is not available, this game only has two strategies: power $0$ and power $P_{\max}$. Let $S$ denote the set of possible strategies.  A \emph{strategy profile} is a vector in $S^n$, where the $i$'th component is the strategy played by link $\ell_i$.  For each link $\ell_i$, let $f_i: S^n \rightarrow \{-1, 0, 1\}$ be the function mapping strategy profiles to utility for link $i$ as described.  Given a strategy profile $a$, let $a_{-i}$ denote the profile without the $i$'th component, and given some strategy $s \in S$ let $f_i(s, a_{-i})$ denote the utility of $\ell_i$ if it uses strategy $s$ and all other links use their strategies from $a$.

A pure Nash equilibrium is a strategy profile in which no player has any incentive to deviate from their strategy.  Formally, $a \in S^n$ is a pure Nash equilibrium if $f_i(a) \geq f_i(s, a_{-i})$ for all $s \in S$ and players $\ell_i \in L$.  In a mixed Nash equilibrium~\cite{Nash50}, every player $\ell_i$ has a probability distribution $\pi_i$ over $S$, and the requirement is that no player has any incentive to change their distribution to some $\pi'$.  So $\E[f_i(a)] \geq \E[f_i(\pi', a_{-i})]$ for all $i \in \{1, \ldots, n\}$, where the expectation is over the random strategy profile $a$ drawn from the product distribution defined by the $\pi_i$'s, and $\pi'$ is any distribution over $S$.

To define no-regret behavior, suppose that the game has been played for $T$ rounds and let $a^t$ be the realized strategy profile in round $t \in \{1, \ldots, T\}$. The \emph{history} $\mathrm{H}=\{a^1, \ldots, a^T\}$ of the game is the sequence of the $T$ strategy profiles.
The \emph{regret} $\mathcal{R}_i$ of player $i$ in an history $\mathrm{H}$ is defined to be
\begin{equation*}
\mathcal{R}_i(\mathrm{H})=\textstyle \max_{s \in S} \frac1T \sum_{t=1}^T f_i(s, a^t_{-i}) - \frac1T \sum_{t=1}^T f_i(a^t).
\end{equation*}
The regret of a player is intuitively the amount that it lost by not playing some fixed strategy.  An algorithm used by a player is known as a \emph{no-regret} algorithm if it guarantees that the regret of the player tends to $0$ as $T$ tends to infinity.
There is a large amount of work on no-regret algorithms, and it turns out that many such algorithms exist (see e.g.~\cite{ACFS03,LW94}). Thus we will analyze situations where every player has regret at most $\epsilon$, and since this tends to $0$ we will be free to assume that $\epsilon$ is arbitrarily small, say at most $1/n$.  Clearly playing a pure or mixed Nash is a no-regret algorithm (since the fact that no one has incentive to switch to any other strategy guarantees that each player will have regret $0$ in the long run), so analyzing the worst or best history with regret at most $\epsilon$ is more general than analyzing the worst or best mixed or pure Nash.  We will call a history in which all players have regret at most $\epsilon$ an $\epsilon$-regret history. Formally an history $\mathrm{H}=\{a^1, \ldots, a^T\}$ is an $\epsilon$-\emph{regret} history if
$\mathcal{R}_i(\mathrm{H})\leq \epsilon$ for every player $i \in \{1, \ldots, n\}$.

A simple but important lemma introduced in~\cite{Din10} and used again in~\cite{Asgeirsson11} relates the average number of \emph{attempted} transmissions to the average number of \emph{successful} transmissions.  Fix an $\epsilon$-regret history, let $s_u$ be the fraction of times in which $u$ successfully transmitted, and let $p_u$ be the fraction of times in which $u$ attempted to transmit.  Note that the average number of successful transmissions in a time slot is exactly $\sum_u s_u$, so it is this quantity which we will typically attempt to bound.  The following lemma lets us get away with bounding the number of attempts instead.

\begin{lemma}[\cite{Din10}] \label{lem:attempts}
$\sum_u s_u \leq \sum_u p_u \leq 2 \sum_u s_u + \epsilon n \leq O(\sum_u s_u)$~.
\end{lemma}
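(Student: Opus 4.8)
The plan is to read each player's long-run average utility directly off the three frequencies that describe its play — the success frequency $s_u$, the attempt frequency $p_u$, and the silent frequency $1-p_u$ — and then to extract all three inequalities from the single hypothesis that every player has regret at most $\epsilon$. First I would record that in each round link $\ell_u$ earns $+1$ when it succeeds, $-1$ when it transmits at nonzero power but fails, and $0$ when it stays silent; since it succeeds in an $s_u$ fraction of rounds, transmits unsuccessfully in a $p_u - s_u$ fraction, and is silent otherwise, its empirical average utility over the history $\mathrm{H}$ is exactly
\[\frac1T\sum_{t=1}^T f_u(a^t) = s_u - (p_u - s_u) = 2 s_u - p_u.\]
The leftmost inequality $\sum_u s_u \le \sum_u p_u$ then needs no game theory at all: a link cannot succeed in a round in which it does not transmit, so $s_u \le p_u$ for every $u$, and we sum.

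For the middle inequality I would compare each player against the fixed ``power $0$'' strategy. Staying silent yields utility $0$ in every round no matter what the others do, so the maximum in the definition of $\mathcal{R}_u(\mathrm{H})$ is at least $0$. Hence $\mathcal{R}_u(\mathrm{H}) \ge 0 - (2 s_u - p_u) = p_u - 2 s_u$, and the $\epsilon$-regret hypothesis gives $p_u \le 2 s_u + \epsilon$ for every $u$; summing over the $n$ players yields $\sum_u p_u \le 2\sum_u s_u + \epsilon n$.

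The last inequality is where the real care is needed, and I expect it to be the main obstacle. Since $\epsilon$ is not a fixed constant but a quantity the no-regret guarantee drives to $0$, the additive term $\epsilon n$ is genuinely lower order: the text already restricts to $\epsilon \le 1/n$, so $\epsilon n \le 1$, and whenever the achieved capacity $\sum_u s_u$ is nonzero we may shrink $\epsilon$ further until $\epsilon n \le \sum_u s_u$, giving $2\sum_u s_u + \epsilon n \le 3\sum_u s_u = O(\sum_u s_u)$. The delicate point is making this uniform, i.e.\ ruling out histories in which $\sum_u s_u$ is positive but vanishingly small relative to $\epsilon n$. To pin $\sum_u s_u$ above a constant I would again use a deviation: fix a link $\ell_v$ that is feasible in isolation (automatic when $\Noise = 0$, and the natural nondegeneracy assumption otherwise) and consider its deviation to transmitting at $P_{\max}$ in every round. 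Such a round fails only when the incoming affectance $\sum_{w} a_w(v)\,\mathbbm{1}[w\text{ transmits}]$ exceeds $1$, so by a Markov-type bound the failure frequency is at most $\sum_{w} a_w(v)\,p_w$; forcing this quantity below $1/2$ drives the deviation's average utility near $+1$ against a current utility $2s_v - p_v$ near $0$, contradicting $\mathcal{R}_v \le \epsilon$.

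Controlling $\sum_{w} a_w(v)\,p_w$ is exactly the step that requires structural input rather than bookkeeping, and it is the crux of the argument: the naive bound uses $p_w \le 2 s_w + \epsilon$ from the middle inequality, but the affectance weights $a_w(v)$ received at a single receiver need not be summable, so one must either choose $v$ so that the affectance it receives is $O(1)$ or pass to an amenable subset via Fact~\ref{fc:amenable}. Everything else in the chain is immediate once the average utility $2s_u - p_u$ and the silence deviation are in place, so I would organize the write-up around those two observations and isolate the lower bound on $\sum_u s_u$ as the single nontrivial lemma.
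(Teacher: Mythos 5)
Your treatment of the first two inequalities is exactly the intended argument (the paper itself states this lemma without proof, importing it from \cite{Din10}): the average-utility identity $\frac1T\sum_{t=1}^T f_u(a^t) = 2s_u - p_u$, combined with the regret of the fixed all-silent strategy (which earns $0$ in every round), gives $p_u \leq 2s_u + \epsilon$ for each $u$, and $s_u \leq p_u$ is trivial. For the last inequality you also correctly identify the two essential points: after taking $\epsilon$ small relative to $1/n$ (as the paper licenses, since the no-regret guarantee drives $\epsilon \to 0$), the claim reduces to showing $\sum_u s_u = \Omega(1)$, and the right tool is the deviation in which a link $\ell_v$ that is feasible in isolation transmits at $P_{\max}$ in every round; your remark that some such nondegeneracy assumption is genuinely required is also right, since if no link can succeed even alone then $\sum_u s_u = 0$ while $\epsilon n > 0$.

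Where you go astray is in diagnosing the closing step as the crux requiring structural input. The quantity you worry about, $\sum_w a_w(v) p_w$, needs no summability control: by definition $a_w(v) = \min\{1, \cdot\} \leq 1$, so it is at most $\sum_{w \neq v} p_w$, and a simple dichotomy finishes. If $\sum_{w \neq v} p_w \geq 1/4$, the middle inequality you already proved gives $\sum_u s_u \geq (1/4 - \epsilon n)/2 = \Omega(1)$ directly, with no deviation argument at all. Otherwise, by a union bound over rounds, some other link transmits in at most a $1/4$ fraction of rounds, so $\ell_v$'s always-transmit deviation succeeds in at least a $3/4$ fraction and has average utility at least $1/2$; the $\epsilon$-regret bound then forces $2s_v - p_v \geq 1/2 - \epsilon$, hence $s_v \geq 1/4 - \epsilon/2 = \Omega(1)$. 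So neither of your proposed fixes is needed: Fact~\ref{fc:amenable} plays no role here (it is used in the theorems that \emph{apply} this lemma, such as Theorem~\ref{thm:nash_ic_upper}, not in the lemma itself), and your alternative of ``choosing $v$ so that the affectance it receives is $O(1)$'' is not justified in general — fortunately it does not have to be.
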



\paragraph{Notation:}
Let $L$ be a fixed set of $n$ links embedded in $\R^2$.  Let $\NoRegret_{\min}(L)$ denote the minimum number of successful links (averaged over time) in any $\epsilon$-regret history, and similarly let $\NoRegret_{\max}(L)$ denote the maximum number of successful links (averaged over time) in any $\epsilon$-regret history.  Define $\NoRegret_{\max}^{IC}(L)$, and $\NoRegret_{\min}^{IC}(L)$ similarly for the IC setting, $\NoRegret_{\max}^{PC}(L)$ and $\NoRegret_{\min}^{PC}(L)$ for the PC setting, and $\NoRegret_{\max}^{PIC}(L)$ and $\NoRegret_{\min}^{PIC}(L)$ for the setting with both PC and IC.  Finally, let $\NoRegret_{\max}^{\beta}(L)$, $\NoRegret_{\min}^{\beta}(L)$ be for the corresponding values for the vanilla model when the SINR threshold is set to $\beta$ (this is hidden in the previous models, but we pull it out in order to compare the effect of modifying $\beta$).

While we will focus on comparing the equilibria of games utilizing different wireless technologies, much of the previous work on these games instead focuses on a single game and analyzes its equilibria with respect to OPT, the maximum achievable capacity.  The \emph{price of anarchy} (PoA) is the ratio of OPT to the value of the worst mixed Nash~\cite{KP99}, and the \emph{price of total anarchy} (PoTA) is the ratio of OPT to the value of the worst $\epsilon$-regret history~\cite{BHLR08}.  Clearly PoA $\leq$ PoTA.  While it is not our focus, we will prove some bounds on these values as corollaries of our main results.

\section{Interference Cancellation}

We begin by analyzing the effect on the equilibria of adding interference cancellation.  We would expect that using IC would result equilibria with larger values, since the capacity of the network might go up (and can certainly not go down).   We show that this is not always the case: there are sets of links for which even the \emph{best} $\epsilon$-regret history using IC is a constant factor worse than the \emph{worst} $\epsilon$-regret history without using IC.

\begin{theorem} \label{thm:IC_lower}
There exists a set of links $L$ such that
$\NoRegret_{\max}^{IC}(L)\leq \NoRegret_{\min}(L)/c$ for some constant $c > 1$.
\end{theorem}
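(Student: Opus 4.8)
The plan is to exhibit an explicit instance $L$ consisting of a small (four-link) gadget with zero background noise -- the noise only complicates the arithmetic, and the gadget can be replicated at large mutual distance to blow up $n$ -- and to prove two one-sided bounds: that \emph{every} $\epsilon$-regret history of the vanilla game is good, so $\NoRegret_{\min}(L) \ge k$ for a constant $k$, and that \emph{every} $\epsilon$-regret history of the IC game is bad, so $\NoRegret_{\max}^{IC}(L) \le k'$ for a constant $k' < k$. Taking $c = k/k'$ and letting $\epsilon \to 0$ (which we are free to do, since $\epsilon$ can be made as small as $1/n$) then yields the theorem. The gadget is organized around a set of \emph{primary} links that are mutually independent, together with one or more \emph{auxiliary} links whose incentive to transmit is \emph{flipped} by the availability of interference cancellation.

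First I would lower bound $\NoRegret_{\min}(L)$ for the vanilla game. The instance is arranged so that, as long as the auxiliary links stay silent, the primaries are mutually independent and each succeeds at full power regardless of the others, while an auxiliary link can never succeed while a primary is transmitting (the primary signals are strong and, without IC, uncancellable at the auxiliary receivers). Given the latter, a link that never succeeds in the realized play transmits with utility $-1$ and silence with utility $0$, so comparing to the ``never transmit'' benchmark forces its attempt frequency $p_u \le \epsilon$; symmetrically, a link that succeeds under every profile of the others has average utility at least $1-\epsilon$ against the ``always transmit'' benchmark, and since $2 s_u - p_u \le s_u$ this gives success frequency $s_u \ge 1-\epsilon$. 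Summing over the primaries gives $\NoRegret_{\min}(L) \ge k - O(\epsilon)$; the delicate point is that these two statements are mutually reinforcing, so I must rule out the ``bad'' vanilla histories in which an auxiliary transmits while the primaries stay silent, which is handled by the extra structure described below.

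Next I would upper bound $\NoRegret_{\max}^{IC}(L)$. With interference cancellation, an auxiliary receiver can now decode and cancel the strong primary signals at its own location, so transmitting becomes profitable for the auxiliaries; the regret inequality against ``never transmit'' then forces each auxiliary to transmit a $1-O(\epsilon)$ fraction of the time in every $\epsilon$-regret history. The geometry is chosen so that, once an auxiliary is active, a primary receiver can no longer decode its desired signal -- its own signal is neither the strongest present nor strong enough to permit the stronger auxiliary signal to be cancelled first, so the cancellation procedure gets \emph{stuck} -- causing the primaries to fail. Thus in the $1-O(\epsilon)$ fraction of rounds in which the auxiliaries are active, the number of successful links is capped at $k'$, and applying Lemma~\ref{lem:attempts} to the remaining $O(\epsilon)$ fraction (where at most $n$ links succeed) gives $\NoRegret_{\max}^{IC}(L) \le k' + O(\epsilon)$.

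The main obstacle is the design of the gadget, and it is genuinely subtle precisely because interference cancellation never makes a \emph{fixed} transmitting configuration worse: the entire effect must come from changed incentives. The tension is that, for an auxiliary to harm a primary under IC, it must be a strong interferer at that primary's receiver; but a link that is strong at a primary's receiver already blocks that primary without IC, which threatens to create a low-value pure Nash in the vanilla game (the auxiliary transmitting while the primaries stay silent) and thereby destroy the lower bound on $\NoRegret_{\min}(L)$. Reconciling these opposing requirements is exactly the role of the remaining links and of the exact choice of distances (and of $\beta$ close to $1$, so that the stuck-decoding threshold can be hit): they must be placed so as to destabilize every low-value configuration of the vanilla game -- forcing some link to profitably deviate back toward the primaries -- while leaving the analogous IC configuration stable and self-reinforcing. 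Verifying that these stability and instability properties hold for every $\epsilon$-regret history, rather than merely for the pure Nash equilibria, is where the bulk of the case analysis lies, and it is the step I expect to be the hardest.
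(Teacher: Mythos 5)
Your proposal reproduces the paper's scaffolding faithfully -- a four-link gadget with zero noise, replication to inflate $n$, the regret comparisons against the ``never transmit'' and ``always transmit'' benchmarks to force $p_u \le O(\epsilon)$ or $s_u \ge 1-O(\epsilon)$, and an auxiliary link whose incentive to transmit is flipped by IC -- but it stops short of the one thing this existential theorem actually consists of: a concrete witness. The paper's proof \emph{is} the explicit instance (with $b=3/2$, $a=\sqrt{8.8}-b$, $\beta=1.1$, $\alpha=2$, zero noise) together with line-by-line arithmetic verifying $\NoRegret_{\min}(L) \ge 3-11\epsilon$ and $\NoRegret^{IC}_{\max}(L) \le 2+4\epsilon$. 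Since the regret bookkeeping you supply is routine and identical to the paper's, the unconstructed gadget is essentially the entire mathematical content, and you explicitly defer it (``the step I expect to be the hardest''). As it stands the proposal is an architecture, not a proof.

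More substantively, the design tension you identify -- that an auxiliary must be a \emph{strong} interferer at a primary receiver to harm it under IC, which would create a bad low-value vanilla equilibrium needing extra ``destabilizing links'' -- is a mis-framing, and it would steer your search toward unnecessary machinery. In the paper's gadget the auxiliary $\ell_2$ is \emph{sub-dominant} at the primary receivers: at $r_3$ the desired signal $1/b^2 = 4/9$ is strictly the strongest present (versus $\approx 0.317$ from $s_1$ and $\approx 0.162$ from $s_2$), so IC is provably useless at $r_3$ and no ``stuck cancellation'' mechanism is needed; $\ell_3$ fails under IC simply because $s_2$'s moderate interference, added to that of $s_1$, tips an SINR sitting at about $1.2$ (barely above $\beta = 1.1$) down to about $0.92$. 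Because $\ell_2$ alone cannot block the primaries, the bad vanilla configuration you worry about never arises, and the circularity in the vanilla lower bound is broken not by auxiliary stabilizer links but by an anchor link $\ell_1$ that succeeds unconditionally in both settings, hence transmits $\ge 1-\epsilon$ of the time, which in turn forces $p_2 \le 3\epsilon$ without IC; with IC, $r_2$ cancels $s_1$ (leaving SINR exactly $8.8/8 = \beta$), so $\ell_2$ transmits $\ge 1-\epsilon$ of the time and, jointly with $s_1$, kills $\ell_3$ and $\ell_4$ in $\ge 1-2\epsilon$ of the rounds. The key observation your plan is missing is that in a threshold model a sub-dominant interferer can flip a tight link from success to failure while being individually harmless -- once seen, the gadget is a short calculation rather than the delicate stabilization scheme you anticipate.
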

\begin{proof}
Let $L'$ be the four link network depicted in Figure~\ref{fig:nash_all}(a), with $b = 3/2$ and $a = \sqrt{8.8} - b$.  We will assume that the threshold $\beta$ is equal to $1.1$, the path-loss exponent $\alpha$ is equal to $2$, and the background noise $\Noise = 0$ (none of these are crucial, but make the analysis simpler).  Let us first consider what happens without using interference cancellation.  Suppose each link has at most $\epsilon$-regret, and for link $\ell_i$ let $p_i$ denote the fraction of times at which $s_i$ attempted to transmit.  It is easy to see that link $1$ will always be successful, since the received signal strength at $r_1$ is $1$ while the total interference is at most $(1/4) + 2(1/8.8) = 1/\beta$.  Since $\ell_1$ has at most $\epsilon$-regret, this implies that $p_1 \geq 1- \epsilon$.

On the other hand, whenever $s_1$ transmits  it is clear that link $\ell_2$ cannot be successful, as its SINR is at most $1/4$.  So if $s_2$ transmitted every time it would have average utility at most $-(1-\epsilon) + \epsilon = -1 + 2\epsilon < 0$ (since $\epsilon < 1/2$), while if it never transmitted it would have average utility $0$.  Thus its average utility is at least $-\epsilon$.  Since it can succeed only an $\epsilon$ fraction of the time (when link $1$ is not transmitting), we have that $\epsilon - (p_2 - \epsilon) \geq -\epsilon$ and thus $p_2 \leq 3\epsilon$.
Since the utility of $s_2$ is at least $-\epsilon$, it holds that the fraction of times at which both $s_1$ and $s_2$ are transmitting is at most $2\epsilon$.

Now consider link $\ell_3$.  If links $\ell_1$ and $\ell_2$ both transmit, then $\ell_3$ will fail since the received SINR will be at most $\frac{1/b^2}{(1/(1+a^2)) + (1/(4+a^2))} \approx 0.92 < 1.1$.  On the other hand, as long as link $\ell_2$ does not transmit then $\ell_3$ will be successful, as it will have SINR at least $\frac{1/b^2}{(1/(1+a^2)) + (1/(2a+b)^2)}\geq 1.2 > 1.1$.  Thus by transmitting at every time step $\ell_3$ would have average utility at least $(1-2\epsilon) - 2\epsilon = 1-4\epsilon > 0$ (since $\epsilon < 1/4$), and thus we know that $\ell_3$ gets average utility of at least $1-5\epsilon$, and thus successfully transmits at least $1-5\epsilon$ fraction of the times.  $\ell_4$ is the same by symmetry.  Thus the total value of any history in which all links have regret at most $\epsilon$ is at least $\NoRegret_{\min}(L) \geq 1-\epsilon  + 2(1-5\epsilon) = 3 - 11\epsilon$.

Let us now analyze what happens when using interference cancellation and bound $\NoRegret^{IC}_{\max}(L)$.
Suppose each link has at most $\epsilon$-regret, and for link $\ell_i$ let $q_i$ denote the fraction of times at which $s_i$ attempted to transmit.  As before, $\ell_1$ can always successfully transmit and thus does so in at least $1-\epsilon$ fraction of times.  But now, by using interference cancellation it turns out that $\ell_2$ can also always succeed.  This is because $r_2$ can first decode the transmission from $s_1$ and cancel it, leaving a remaining SINR of at least $\frac{1/4}{2/(a+b)^2} = \beta$.  Thus $\ell_2$ will also transmit in at least $1-\epsilon$ fraction of times and hence so far $1-\epsilon \leq q_1,q_2 \leq 1$.  Note that since $a^2+1<b^2$, it holds that
$r_3$ cannot cancel $s_1$ or $s_2$ before decoding $s_3$ (i.e., $P_{1,3},P_{2,3}<P_{3,3}$). Hence, cancellation is useless.
But now at $r_3$ the strength of $s_1$ is $1/(1+a^2)>0.317$, the strength of $s_2$ is $1/(4+a^2)> 0.162$, and the strength of $s_3$ is $1/b^2 =4/9$.  Thus $r_3$ cannot decode any messages when $s_1, s_2$, and $s_3$ are all transmitting since its SINR is at most $0.92<\beta$, which implies that $\ell_3$ can only succeed on at most $2\epsilon$ fraction of times. The link $\ell_4$ is the same as the link $\ell_3$ by symmetry.  Thus the total value of any history in which all links have an $\epsilon$-regret is at most $\NoRegret^{IC}_{\max}(L) \leq 2 + 4\epsilon$. Thus $\NoRegret_{\min}(L)/\NoRegret^{IC}_{\max}(L)\geq 3/2-o(1)$ as required.
\QED \end{proof}
It turns out that no-regret behavior with interference cancellation cannot be much worse than no-regret behavior without interference cancellation -- as in Braess's paradox, it can only be worse by a constant factor.
\begin{theorem}
\label{thm:nash_ic_upper}
$\NoRegret^{IC}_{\min}(L)\geq \NoRegret_{\max}(L)/c$
for any set of links $L$ and some constant $c \geq 1$.
\end{theorem}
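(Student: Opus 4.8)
The plan is to sandwich both quantities between two numbers that differ only by a constant: the size $\mathrm{OPT}$ of a maximum $\beta$-feasible subset of $L$, and the size $|F|$ of a maximum \emph{amenable} $\beta$-feasible subset $F \subseteq L$, which by Fact~\ref{fc:amenable}(a) satisfy $\mathrm{OPT} \le 2|F|$. The upper bound on the ``old'' quantity is immediate and does not even use no-regret: in any history, at every time step the set of successful links has $\SINR_v \ge \beta$ against all transmitters and hence is $\beta$-feasible, so at most $\mathrm{OPT}$ links succeed per slot and $\NoRegret_{\max}(L) \le \mathrm{OPT} \le 2|F|$. It therefore suffices to prove the matching lower bound $\NoRegret^{IC}_{\min}(L) = \Omega(|F|)$ on the \emph{worst} IC history, and the whole theorem reduces to a price-of-total-anarchy style bound in the IC setting.

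The new ingredient, and the crux of the argument, is that interference cancellation can only help a \emph{single} link: for any fixed set of transmitters, if $v$ succeeds without IC then it also succeeds with IC. The reason is that $\SINR_v \ge \beta$ forces $\sum_{i \ne v} P_{i,v} \le P_{v,v}/\beta < P_{v,v}$, so no interfering signal at $r_v$ is as strong as $v$'s own signal; thus the cancellation stage is vacuous and IC-decoding of $s_v$ collapses to the plain SINR condition. Consequently, for a link $v \in F$ that deviates to transmitting at every step, the fraction of slots in which it fails \emph{with} IC is at most the fraction in which its total affectance exceeds $1$. Fixing any IC $\epsilon$-regret history with attempt fractions $q_w$, I would bound this failure fraction by $\sum_{w \ne v} q_w\, a_w(v)$ via Markov's inequality, so the always-transmit deviation has average utility at least $1 - 2\sum_{w \ne v} q_w\, a_w(v)$; since the history is $\epsilon$-regret and the true utility of $v$ is $2 s^{IC}_v - q_v$ with $q_v \ge s^{IC}_v$, this gives $s^{IC}_v \ge 1 - 2\sum_{w \ne v} q_w\, a_w(v) - \epsilon$.

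Summing over $v \in F$ and exchanging the order of summation yields $\sum_{v \in F} s^{IC}_v \ge |F|(1-\epsilon) - 2\sum_w q_w \sum_{v \in F} a_w(v)$. Here I would invoke the two facts already available: because $F$ is amenable and $\beta$-feasible, Fact~\ref{fc:amenable}(b) gives $\sum_{v \in F} a_w(v) = O(1)$ for every link $w$, and Lemma~\ref{lem:attempts} bounds the total number of attempts $\sum_w q_w = O(N)$ by the value $N = \NoRegret^{IC}_{\min}(L) = \sum_w s^{IC}_w \ge \sum_{v \in F} s^{IC}_v$ of the history itself. These turn the inequality into $N \ge |F| - O(N) - O(1)$, which rearranges to $N = \Omega(|F|)$ once $\epsilon \le 1/n$; combined with $\NoRegret_{\max}(L) \le 2|F|$ this produces $\NoRegret^{IC}_{\min}(L) \ge \NoRegret_{\max}(L)/c$ for a universal constant $c$. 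I expect the monotonicity step to be the delicate one, since it is exactly what separates the IC analysis from the standard lower bound and must be extracted carefully from the formal decoding rule (in particular, the observation that a non-IC-successful link is automatically the strongest signal at its own receiver, so no undecodable stronger signal can block it). The remaining steps follow the established deviation-plus-amenability template; the only real care needed is the bookkeeping of the additive $\epsilon n$ and $O(1)$ terms, so that the conclusion is a genuine constant ratio rather than merely $\Omega(1)$ up to additive slack (with the handful of networks where $|F|$ is a fixed constant handled separately, since there the ratio is trivially bounded).
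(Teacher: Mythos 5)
Your proposal is correct, and it reaches the theorem by a genuinely different route than the paper's proof, though both rest on the same two pillars. The paper compares the two histories directly: it extracts from the best non-IC history a single round whose successful links form a feasible set $A$ with $|A| \geq \NoRegret_{\max}(L)$, performs a case split on the heavy IC-transmitters $B = \{i : q_i \geq 1/2\}$ (the case $|B \cap A| \geq |A|/2$ being handled immediately via $\sum_i q_i$), passes to an amenable subset $\widehat{A} \subseteq A \setminus B$, and for each $j \in \widehat{A}$ uses $q_j < 1/2$ plus the $\epsilon$-regret condition to force the always-transmit success fraction to satisfy $y_j \leq \frac34 + \frac{\epsilon}{2}$, hence average affectance on $j$ at least $\frac14 - \frac{\epsilon}{2}$; double counting against Fact~\ref{fc:amenable}(b) then yields $|\widehat{A}| = O(\sum_i q_i)$. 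You instead decouple the two sides through the optimum: $\NoRegret_{\max}(L) \leq |OPT| \leq 2|F|$ for a maximum amenable feasible set $F$, and then prove the self-contained, price-of-total-anarchy-style bound $\NoRegret^{IC}_{\min}(L) = \Omega(|F|)$ via the per-link deviation inequality $s^{IC}_v \geq 1 - 2\sum_{w \neq v} q_w\, a_w(v) - \epsilon$, which is valid for \emph{every} $v \in F$ (Markov's inequality replacing the paper's explicit round counting), summed over $F$ using Fact~\ref{fc:amenable}(b) and Lemma~\ref{lem:attempts}. The two essential shared ingredients are the monotonicity of IC success — which the paper leaves as the parenthetical ``or else $j$ could succeed in those rounds even without using IC,'' and which you correctly justify by observing that $\SINR_v \geq \beta > 1$ makes $s_v$'s signal strictly strongest at $r_v$, collapsing IC decoding to the plain SINR test — and the amenability double count. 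What your route buys: it avoids the case split on $B$ entirely, never needs the witness round of the non-IC history (only the trivial bound $\NoRegret_{\max}(L) \leq |OPT|$), and proves the strictly stronger statement $\NoRegret^{IC}_{\min}(L) = \Omega(|OPT|)$, i.e., a constant bound on the IC equilibria against the non-IC optimum, which the paper obtains only by citing Theorem~2 of \cite{Asgeirsson11} inside Corollary~\ref{cor:nash_ic_approx}; what the paper's route buys is a direct history-to-history comparison needing no structural statement about $OPT$. One bookkeeping remark: the residual additive $O(\epsilon n)$ slack you flag (your ``$-O(1)$'' term and the small-$|F|$ caveat) is present in exactly the same form in the paper's proof through Lemma~\ref{lem:attempts}, and in both arguments it vanishes under the paper's standing convention (Section~\ref{sec:game-theory}) that $\epsilon$ may be taken arbitrarily small, so your separate handling of constant-size $|F|$ is a sound precaution but not a divergence in substance.
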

\begin{proof}
Consider an $\epsilon$-regret history without IC that maximizes the average number of successful links, i.e.~one that achieves $\NoRegret_{\max}(L)$ value.  Let $p_i$ denote the fraction of times at which $s_i$ attempted to transmit in this history, so $\sum_{i \in L} p_i = \Theta(\NoRegret_{\max}(L))$ by Lemma~\ref{lem:attempts}.  Similarly, let $q_i$ denote that fraction of times at which $s_i$ attempted to transmit in an $\epsilon$-regret history \emph{with} IC that achieves value of only $\NoRegret^{IC}_{\min}(L)$, and so $\sum_{i \in L} q_i = \Theta(\NoRegret^{IC}_{\min}(L))$.
%
%


Note that since the best average number of successful connections in the non-IC case is $\NoRegret_{\max}(L)$, there must exist some set of connections $A \subseteq L$ such that $|A| \geq \NoRegret_{\max}(L)$ and $A$ is feasible without IC.  Let $B = \{i : q_i \geq 1/2\}$ and let $A' = A \setminus B$.  If $|B \cap A| \geq |A| / 2$ then we are done, since then
\begin{equation*}
\textstyle \NoRegret_{\max}(L) \leq |A| \leq 2|B| \leq 4 \sum_{\ell_i \in L} q_i=4 \cdot \NoRegret^{IC}_{\min}(L)
\end{equation*}
as required.
So without loss of generality we will assume that $|B \cap A| < |A|/2$, and thus that $|A'| > |A|/2$.  Note that $A'$ is a subset of $A$, and so it is feasible in the non-IC setting.

Now let $\widehat{A} = \{i \in A' : \sum_{j \in A'} a_i(j) \leq 2\}$ be an amenable subset of $A'$.  By Fact \ref{fc:amenable}(a), it holds that $\widehat{A} \geq |A'|/ 2 \geq |A| / 4$.  Fact \ref{fc:amenable}(b) then implies that for any link $i \in L$, its total affectance on $A$ is small: $\sum_{j \in A} a_i(j) \leq c'$ for some constant $c' \geq 0$.  Thus we have that
\begin{equation} \label{eq:ic1}
\textstyle \sum_{i \in L} \sum_{j \in \widehat{A}} q_i a_i(j) \leq c' \cdot \left(\sum_{i \in L} q_i\right).
\end{equation}

On the other hand, we know that the $q_i$ values correspond to the worst history in which every link has regret at most $\epsilon$ (in the IC setting).  Let $j \in A'$.  Then $q_j < 1/2$, which means the average utility of link $\ell_j$ is at most $1/2$. Let $y_j$ be the fraction of time $s_j$ would have succeeded had it transmitted in every round. Since the average utility of the best single action is at most $1/2+\epsilon$ it holds that $y_j-(1-y_j)\leq 1/2+\epsilon$ or the that $y_j \leq \frac34 + \frac{\epsilon}{2}$.
In other words, in at least $1-y_j=\frac14 - \frac{\epsilon}{2}$ fraction of the rounds the affectance of the other links on the link $\ell_j$ must be at least $1$ (or else $j$ could succeed in those rounds even without using IC).  Thus the expected affectance (taken over a random choice of time slot) on $\ell_j$ is at least $\sum_{i \in L} a_i(j) q_i \geq \frac14 - \frac{\epsilon}{2}$.  Summing over all $j \in \widehat{A}$, we get that
\begin{equation} \label{eq:ic2}
\textstyle \sum_{j \in \widehat{A}} \sum_{i \in L} a_i(j) q_i \geq \sum_{j \in A} \frac{1-2\epsilon}{4} \geq \Omega(|\widehat{A}|).
\end{equation}

Combining equations~(\ref{eq:ic1}) and~(\ref{eq:ic2}) (and switching the order of summations) implies that $|\widehat{A}| \leq O(\sum_{i \in L} q_i)$.  Since $|\widehat{A}| \geq |A|/4 \geq \Omega(N_{\max}(L)) \geq \Omega(\sum_{i \in L} p_i)$, we get that $\sum_i p_i \leq O(\sum_{i \in L} q_i)$ as desired.
\QED \end{proof}

As a simple corollary, we will show that this lets us bound the price of total anarchy in the interference cancellation model (which, as far as we know, has not previously been bounded).  Let $OPT \subseteq L$ denote some optimal solution without IC, i.e., the set of transmitters forming a maximum $\beta$-feasible set, and let $OPT^{IC} \subseteq L$ denote some optimal solution with IC. 
\begin{corollary}
\label{cor:nash_ic_approx}
For every set of links $L$ it holds that the price of total anarchy with IC is $O(\log \Delta)$, or $|OPT^{IC}| / \mathcal{N}^{IC}_{\min}(L)=O(\log \Delta)$.
\end{corollary}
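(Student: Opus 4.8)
The plan is to combine Theorem~\ref{thm:nash_ic_upper} with a bound on how much interference cancellation can inflate the optimum. First I would observe that the proof of Theorem~\ref{thm:nash_ic_upper} establishes slightly more than its statement: the chain~(\ref{eq:ic1})--(\ref{eq:ic2}) only uses that $A$ is $\beta$-feasible (so that Fact~\ref{fc:amenable} applies to extract the amenable $\widehat{A}$), and it concludes $|A| = O(\sum_i q_i) = O(\NoRegret^{IC}_{\min}(L))$ for \emph{every} $\beta$-feasible set $A$. Applying this with $A = OPT$ gives $\NoRegret^{IC}_{\min}(L) = \Omega(|OPT|)$. It then remains only to compare $|OPT^{IC}|$ against $|OPT|$, so the whole corollary reduces to showing $|OPT^{IC}| = O(\log\Delta)\cdot|OPT|$, i.e.\ that IC inflates the uniform-power capacity by at most an $O(\log\Delta)$ factor.

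The heart of the argument is this capacity-gain bound, which I would prove by extracting a large $\beta$-feasible subset from any IC-feasible set $S$. The key structural observation is that with uniform powers and $\beta>1$, a \emph{stronger} interferer $\ell_k$ at $r_j$ (one with $d_{k,j} < d_{j,j}$) contributes affectance $a_k(j) = \min\{1,\beta (d_{j,j}/d_{k,j})^\alpha\} = 1$, whereas the IC decoding condition for $\ell_j$ guarantees $\sum_{k:\, d_{k,j} > d_{j,j}} a_k(j) \le 1$ for its \emph{weaker} interferers. Consequently a subset $F \subseteq S$ is $\beta$-feasible precisely when no link of $F$ is a stronger interferer of another link of $F$: such an $F$ inherits the weaker-affectance bound of at most $1$ and has no stronger interferers, so $\sum_{k \in F} a_k(j) \le 1$ for every $j \in F$.

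To find a large such $F$, I would bound the number of senders any single receiver must cancel. At $r_j$ the canceled (stronger) senders lie at distances $\rho_1 < \rho_2 < \cdots < \rho_m < d_{j,j}$, and successive decodability forces $1/\rho_i^\alpha \ge \beta/\rho_{i+1}^\alpha$ (the sender at $\rho_{i+1}$ is still present when the one at $\rho_i$ is decoded), hence $\rho_{i+1} \ge \beta^{1/\alpha}\rho_i$. Since $\rho_1 \ge 1$ and $\rho_m < d_{j,j} \le \Delta$, this yields $m = O(\log_\beta \Delta) = O(\log\Delta)$. Orienting $j \to k$ whenever $\ell_k$ is a stronger interferer at $r_j$, every out-degree is $O(\log\Delta)$, so the undirected ``must-cancel'' graph is $O(\log\Delta)$-degenerate and has an independent set of size $\Omega(|S|/\log\Delta)$; by the previous paragraph this set is $\beta$-feasible, giving $|OPT| = \Omega(|OPT^{IC}|/\log\Delta)$. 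Combining the two bounds, $|OPT^{IC}|/\NoRegret^{IC}_{\min}(L) = O(\log\Delta)\,|OPT|/\Omega(|OPT|) = O(\log\Delta)$.

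The main obstacle is the capacity-gain bound of the second and third paragraphs: one must verify that the geometric spacing of canceled signals genuinely holds (the decodability of the $i$-th canceled sender must be charged against the still-present $(i{+}1)$-th one, and the last one against $s_j$ itself, so that the spacing is robust even in the presence of noise and the remaining weaker links), and that passing to an independent set preserves $\beta$-feasibility rather than merely IC-feasibility. The reduction is then routine once $\NoRegret^{IC}_{\min}(L) = \Omega(|OPT|)$ is read off from the proof of Theorem~\ref{thm:nash_ic_upper}; note it is essential to use this $\Omega(|OPT|)$ bound rather than the weaker $\Omega(|OPT|/\log\Delta)$ that would follow from the known non-IC price-of-total-anarchy result, since the latter combined with the $O(\log\Delta)$ capacity gain would only yield $O(\log^2\Delta)$.
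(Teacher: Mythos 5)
Your proposal is correct and takes essentially the same route as the paper: the paper likewise reduces the corollary to (i) the bound $\mathcal{N}^{IC}_{\min}(L)=\Omega(|OPT|)$, which it gets by citing the constant non-IC price of total anarchy of \'Asgeirsson--Mitra together with Theorem~\ref{thm:nash_ic_upper} (where you instead re-read that theorem's proof and note it applies to any $\beta$-feasible set $A$, a valid observation), and (ii) a capacity-gain lemma (its Lemma~\ref{lem:ic_no_ic_feas}) proved by exactly your geometric-spacing argument bounding the number of signals any receiver cancels by $O(\log_{\beta}\Delta^{\alpha})$, stated there for general powers (whence its extra $P_{\max}$ term, which vanishes under uniform powers as in your distance-based version). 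Your extraction of the non-IC-feasible subset as an independent set in the $O(\log\Delta)$-out-degree ``must-cancel'' graph is a minor variant of the paper's greedy removal, costing only a constant factor.
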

\def\APENNEDNASHICCOR{
\begin{proof}
We begin by providing the following auxiliary lemma which demonstrates that any feasible set of links under the setting of IC and power control in the range $[1, P_{\max}]$ contains a non-negligible subset of links that are feasible with the same transmission powers \emph{without} IC.
\begin{lemma}
\label{lem:ic_no_ic_feas}
For every feasible set $L$ with IC, there exists a subset $L' \subseteq L$ such that $L$ is feasible without IC and satisfies $|L'|>|L|/O \left(\log_{\beta} \left(\Delta^{\alpha} \cdot P_{\max}\right) \right)$
\end{lemma}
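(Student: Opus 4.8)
The plan is to reduce IC-feasibility to ordinary feasibility by passing to a large subset of $L$ on which \emph{no} receiver needs to cancel any interferer; on such a subset the IC decoding condition collapses directly into the standard SINR condition. The starting observation is that IC-feasibility is extremely restrictive about how many signals a single receiver can cancel. Fix a receiver $r_j$ and sort the received powers $P_{i,j}$ that exceed $P_{j,j}$ in decreasing order as $p_1 > p_2 > \cdots > p_m$. Since $L$ is IC-feasible, $r_j$ decodes and cancels each of these in turn before decoding $s_j$, and when it decodes $p_\ell$ all weaker signals (in particular $p_{\ell+1}$, and $P_{j,j}$ itself) are still present. The decoding condition therefore forces $p_\ell \geq \beta\, p_{\ell+1}$ and $p_m \geq \beta\, P_{j,j}$, so the cancelled powers are geometrically spaced: $p_1 \geq \beta^m P_{j,j}$. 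Because every received power satisfies $P_{i,j} \leq P_{\max}$ (as $d_{i,j} \geq 1$) while $P_{j,j} \geq 1/\Delta^\alpha$ (as $P_j \geq 1$ and $d_{j,j} \leq \Delta$), this yields $\beta^m \leq \Delta^\alpha P_{\max}$, i.e.\ every receiver cancels at most $K := \log_\beta(\Delta^\alpha P_{\max})$ interferers.

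Next I would convert this per-receiver bound into a global one via a degeneracy argument. Build a directed graph on $L$ with an arc $j \to i$ whenever $P_{i,j} \geq P_{j,j}$ (read as ``$r_j$ must cancel $s_i$''). By the previous step every vertex has out-degree at most $K$, so the number of arcs, and hence the number of edges of the underlying undirected graph, is at most $K|L|$. Restricting this orientation to any induced subgraph shows that it always contains a vertex of degree at most $2K$, so the graph is $2K$-degenerate and therefore contains an independent set $L'$ with $|L'| \geq |L|/(2K+1) = |L|/O\!\left(\log_\beta(\Delta^\alpha P_{\max})\right)$.

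It remains to verify that $L'$ is feasible \emph{without} IC using the same powers. Since $L'$ is independent, for any two distinct $i,j \in L'$ neither receiver must cancel the other's sender, so in particular $P_{i,j} < P_{j,j}$: every $\ell_i \in L'$ is a \emph{weak} interferer at $r_j$. Consequently, for each $\ell_j \in L'$,
\[
\textstyle \sum_{i \in L' \setminus \{j\}} P_{i,j} + \Noise \;\leq\; \sum_{k:\, P_{k,j} < P_{j,j}} P_{k,j} + \Noise \;\leq\; P_{j,j}/\beta,
\]
where the last inequality is exactly the condition under which $r_j$ decodes $s_j$ after cancellation in the IC-feasible set $L$. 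This says $\SINR_j(L') \geq \beta$, so $L'$ is $\beta$-feasible without IC, which is the claim (the lemma statement's ``$L$ is feasible'' should read ``$L'$ is feasible'').

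I expect the main obstacle to be the first step: pinning down that the relevant dynamic range is precisely $\Delta^\alpha P_{\max}$ (combining the spread of link lengths with the spread of powers) and carefully justifying the geometric-spacing inequalities in the presence of noise and of possible ties among equal received powers $P_{i,j} = P_{j,j}$, which may need a mild perturbation or an explicit tie-breaking convention so that the ``weak interferer'' inequality in the final step is strict. The independent-set extraction and the concluding feasibility computation are then routine.
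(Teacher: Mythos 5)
Your proof is correct and shares the paper's central counting step: both arguments show that IC-feasibility forces the powers a receiver must cancel to grow geometrically ($p_\ell \geq \beta\, p_{\ell+1}$, hence $p_1 \geq \beta^m P_{j,j}$), which together with $P_{i,j} \leq P_{\max}$ and $P_{j,j} \geq 1/\Delta^\alpha$ bounds the number of cancellations per receiver by $K = O\left(\log_\beta(\Delta^\alpha P_{\max})\right)$. Where you diverge is in the extraction of $L'$. The paper uses a one-pass greedy: repeatedly pick a link $\ell_i$ from the remaining pool, add it to $L'$, and delete the set $M_i$ of links that $r_i$ cancels; since $|M_i| \leq K$ this gives $|L'| \geq |L|/K$, and feasibility of $L'$ is asserted to follow ``by construction.'' You instead orient a conflict digraph ($j \to i$ whenever $P_{i,j} \geq P_{j,j}$), observe that out-degree is at most $K$, and extract an independent set of size $|L|/(2K+1)$ via $2K$-degeneracy. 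Your route costs roughly a factor of $2$ in the size bound but buys genuine robustness: the cancellation relation is not symmetric, and the paper's greedy only deletes links that the \emph{newly added} link must cancel --- a later-chosen $\ell_i$ can still need to cancel an earlier-chosen $\ell_{i'} \in L'$ (i.e.\ $P_{i',i} \geq P_{i,i}$), in which case $\ell_i$ fails without IC, so the paper's construction as literally stated has a gap that your symmetric independent-set extraction closes. With conflicts excluded in both directions, your final verification, $\sum_{i \in L' \setminus \{j\}} P_{i,j} + \Noise \leq \sum_{k :\, P_{k,j} < P_{j,j}} P_{k,j} + \Noise \leq P_{j,j}/\beta$, follows directly from the last decoding step of IC-feasibility in $L$ and is exactly right. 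Your use of $\geq$ rather than $>$ in the arc definition also disposes of equal-power ties at $P_{i,j} = P_{j,j}$ in the final step, which the paper silently ignores; as you correctly flag, ties among the cancelled interferers themselves would still require a perturbation or tie-breaking convention in the geometric-chain step, a subtlety the paper's own write-up shares but does not mention (and you are right that the lemma statement's ``$L$ is feasible'' is a typo for ``$L'$ is feasible'').
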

\begin{proof}
Let $L$ be a feasible set with IC and transmission power in $[1,P_{\max}]$. We first claim that every link $\ell_v \in L$ has canceled a subset of at most $x=\lceil \log_{\beta} \left(\Delta^{\alpha} \cdot P_{\max}\right)-1 \rceil$ links in $L$ (where the last cancellation corresponds to the signal of the designated transmitter $s_v$).
Assume towards contradiction that there exists $\ell_v \in L$ that cancels $y \geq x+1$ distinct signals transmitted by $s_1, \ldots, s_{y}$, where $s_{y}=s_{v}$ is the last canceled signal.

It then holds that $P_{i, v} \geq \beta \cdot P_{i+1,v}$ for every $i \in \{1, \ldots, y\}$, hence $P_{1, v} \geq \beta^{y} P_{v,v}$.

Since the maximum power is $P_{\max}$ and the minimal transmitter-receiver distance $d(s_i,r_j)$ is at least $1$, we have that
$P_{\max} \geq P_{1,v} \geq \beta^{y} \cdot P_{v,v} \geq \beta^{y}/\Delta^{\alpha}$ where the last inequality follows by the fact that the maximal link length (after the normalization) is $\Delta$ and the minimum power level is $1$.
Hence $y \leq \log_{\beta}(\Delta^{\alpha} \cdot P_{\max})-1$, giving a contradiction. Thus every link has cancelled at most $x$ links.
It then holds that $P_{i, v} \geq \beta \cdot P_{i+1,v}$ for every $i \in \{1, \ldots, y\}$, hence $P_{1, v} \geq \beta^{y} P_{v,v}$.  Combining this with the fact (due to normalizing distances so the smallest is equal to $1$) that every received power is at least $1/\Delta^{\alpha}$, we have that $P_{\max} \geq P_{1,v} \geq \beta^{y} \cdot P_{v,v} \geq \beta^{y}/\Delta^{\alpha}$.  Hence $y \leq \log_{\beta}(\Delta^{\alpha} \cdot P_{\max})-1$, giving a contradiction. Thus every link has cancelled at most $x$ links.

The subset $L'$ of feasible links without IC is created as follows. Initially set $Q \gets L$. Until $Q$ is non-empty, consider some link $\ell_i \in Q$ and let $M_i$ be the subset of links cancelled by receiver $r_i$ (other than $\ell_i$ itself). Add $\ell_i$ to $L'$ and set $Q=Q \setminus M_{i}$. Since for every link $\ell_i$ in $L'$ there are at most $x$ links in $L \setminus L'$, it holds that $|L'| \geq |L|/x$. Moreover, it follows by construction that $L'$ is feasible without IC and using the same transmission powers as before. The lemma follows.
\QED \end{proof}

We proceed by showing that $|OPT^{IC}|/\mathcal{N}^{IC}_{\min}(L))=O(\log \Delta)$.
According to Theorem 2 of \cite{Asgeirsson11}, it holds that
$\mathcal{N}_{\min}(L)\geq |OPT|/c'$ for some constant $c'$.
Hence, by combining with Theorem~\ref{thm:nash_ic_upper} we get that
\begin{equation*}
\mathcal{N}^{IC}_{\min}(L) \geq \mathcal{N}_{\min}(L)/c \geq |OPT|/(c \cdot c') \geq |OPT^{IC}|/O(\log \Delta),
\end{equation*}
where the last inequality follows by Lemma \ref{lem:ic_no_ic_feas}.
\par Finally, we show that this analysis is actually tight by exhibiting a network where there is a bad pure Nash equilibrium, and thus there are bad no-regret histories.  Consider the $m=\lfloor (\log \Delta) \rfloor$-linkset network illustrated in Fig. \ref{fig:poa_ic}.
The transmitters $\widetilde{s}_1$ and $\widetilde{s}_2$ are
equidistant from the set of $m$-receivers $R'=\{r_1, \ldots, r_m\}$.
Since $\widetilde{s}_1$ and $\widetilde{s}_2$ are closer to $R'$ than any other transmitter $s_i$, if both $\widetilde{s}_1$ and $\widetilde{s}_2$ transmit then none of the links $\ell_i=\langle s_i, r_i\rangle$ can be satisfied. By letting the links $\widetilde{\ell}_j=\langle \widetilde{s}_j, \widetilde{r}_j\rangle$ for $j \in \{1, 2\}$ be sufficiently short, these 2 links can always succeed no matter which other links transmit.  Thus $\{\widetilde{\ell}_1, \widetilde{\ell}_2\}$ form a pure Nash equilibrium.  On the other  hand, clearly the set of links $\{\ell_i\}_{i \in [\log \Delta]}$ are feasible with interference cancellation.  Thus $|OPT^{IC}| / \mathcal{N}^{IC}_{\min}(L) \geq \Omega(\log \Delta)$.
\QED \end{proof}
\begin{figure}[h!]
\begin{center}
\includegraphics[scale=0.35]{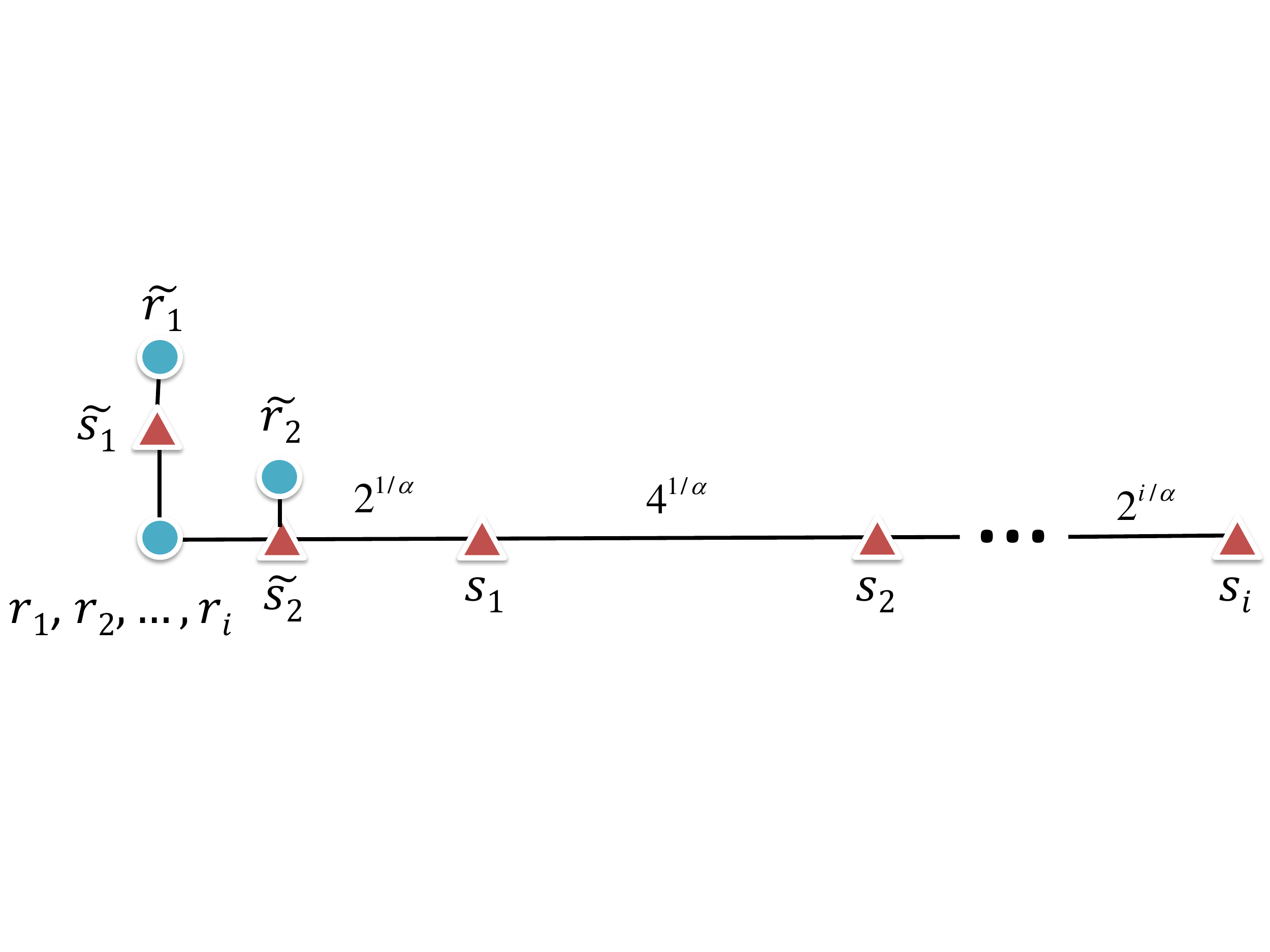}
\caption{ \label{fig:poa_ic}
\sf
Schematic illustration of a network in which the price of anarchy in the IC setting is $\Omega(\log \Delta)$.
}
\end{center}
\end{figure}
}
\APENNEDNASHICCOR

\section{Power Control}
In the power control setting, each transmitter $s_v$ either broadcasts at power $0$ or broadcasts at some arbitrary integral power level $P_v \in [1, P_{\max}]$.  Our main claim is that Braess's paradox is once again possible: there are networks in which adding power control can create worse equilibria. For illustration of such a network, see Fig. \ref{fig:nash_all}(b).
We first observe the following relation between no-regret solutions with or without power control.
\begin{observation}
\label{obs:nash_pc_contain}
Every no-regret solution in the uniform setting, is also a no-regret solution in the PC setting.
\end{observation}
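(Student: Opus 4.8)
The plan is to show that the regret of every player is \emph{identical} in the two settings, so that the no-regret guarantee transfers verbatim. The starting point is that the uniform strategy set $\{0, P_{\max}\}$ is contained in the power-control strategy set $\{0, 1, \dots, P_{\max}\}$, so any history $\mathrm{H} = \{a^1, \dots, a^T\}$ realizable in the uniform setting is also a legal history in the PC setting. The only thing that can differ between the two settings is the maximization term in the definition of $\mathcal{R}_i(\mathrm{H})$, since in the PC setting the inner $\max$ ranges over the strictly larger set $\{0, 1, \dots, P_{\max}\}$. Thus it suffices to prove that enlarging the strategy set does not increase the value of the best fixed deviation for any player.

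The key structural fact I would use is the monotonicity of $\SINR_i$ in link $\ell_i$'s \emph{own} power. Fix a player $\ell_i$ and a round $t$, holding the others' actions $a^t_{-i}$ fixed. Because the denominator $\sum_{j \ne i} P_{j,i} + \Noise$ of $\SINR_i$ depends only on $a^t_{-i}$ and not on $P_i$, increasing $P_i$ can only increase $\SINR_i$. Hence transmitting at $P_{\max}$ is (weakly) the best transmitting action: if $\ell_i$ is successful at some power $P \le P_{\max}$ then it is successful at $P_{\max}$, so $f_i(P_{\max}, a^t_{-i}) = 1 \ge f_i(P, a^t_{-i})$; and if it is unsuccessful at $P$ then $f_i(P, a^t_{-i}) = -1 \le f_i(P_{\max}, a^t_{-i})$. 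In either case $f_i(P_{\max}, a^t_{-i}) \ge f_i(P, a^t_{-i})$ for every transmitting power $P \in \{1, \dots, P_{\max}\}$.

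Summing this round-by-round inequality over $t$ yields $\frac1T \sum_t f_i(P_{\max}, a^t_{-i}) \ge \frac1T \sum_t f_i(P, a^t_{-i})$ for every $P \in \{1, \dots, P_{\max}\}$. Consequently the best fixed deviation over $\{0, 1, \dots, P_{\max}\}$ is attained either by $P_{\max}$ or by the zero-power action (whose average payoff is $0$), which is precisely the maximum over the uniform strategy set $\{0, P_{\max}\}$. Therefore the maximization term in $\mathcal{R}_i(\mathrm{H})$ is the same in both settings, the regret $\mathcal{R}_i(\mathrm{H})$ is unchanged, and if every player had regret at most $\epsilon$ in the uniform setting the same holds in the PC setting; thus $\mathrm{H}$ is an $\epsilon$-regret history there as well.

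As for the difficulty: the only real worry is precisely that handing players extra intermediate power levels might open up a profitable deviation and thereby \emph{raise} someone's regret, breaking the no-regret property. The entire argument hinges on ruling this out, and the single clean fact that does so is that a sender's own power enters only the numerator of its SINR. I expect no genuine obstacle beyond stating this monotonicity cleanly and noting why it makes the containment one-directional: it uses crucially that we are varying only the \emph{transmitting} link's power and not the interference term, so the same trick does not give a reverse inclusion.
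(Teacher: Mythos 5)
Your proof is correct. The paper states this observation without any proof, and your argument is exactly the intended justification: transmitting at $P_{\max}$ pointwise weakly dominates every nonzero power level because a link's own power enters only the numerator of its SINR, so the best fixed deviation over $\{0,1,\dots,P_{\max}\}$ is already attained within $\{0,P_{\max}\}$ and each player's regret is literally unchanged; your closing caveat is also vindicated by the paper, whose later Observation~\ref{obs:ic_pc} shows the analogous containment fails once interference cancellation is added, precisely because success is no longer monotone in one's own power there.
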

Hence, we cannot expect the best no-regret solution in the PC setting to be smaller than the worst no-regret solution in the uniform setting. Yet, the paradox still holds.
\begin{theorem}
\label{thm:pmax_nash_lb}
There exists a configuration of links $L$ satisfying $\NoRegret^{PC}_{\min}(L) \leq \NoRegret_{\min}(L)/c$ for some constant $c>1$.
\end{theorem}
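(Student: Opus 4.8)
The plan is to exhibit a single network (the one in Figure~\ref{fig:nash_all}(b)) together with an explicit bad pure Nash equilibrium that uses an \emph{intermediate} power level unavailable in the uniform model, mirroring the mechanism of Theorem~\ref{thm:IC_lower}: power control will ``activate'' a link that is forced to stay silent in the uniform game, and this newly active link will in turn knock out two others, so that the total value drops from $3$ to $2$. Concretely, I would use four links, fix $\alpha=2$, $\Noise=0$, a threshold $\beta$ slightly above $1$, and take $P_{\max}$ large. Link $\ell_1$ is short and placed so that at full power it clears threshold no matter which other senders are active; $s_1$ sits close to $r_2$ but far from $r_3$ and $r_4$; $s_2$ is close to both $r_3$ and $r_4$; and $\ell_3,\ell_4$ are long links that together with $\ell_1$ are mutually feasible but are knocked out by interference from $s_2$.

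First I would lower bound $\NoRegret_{\min}(L)$ by the forcing argument used on the non-IC side of Theorem~\ref{thm:IC_lower}. Since $\ell_1$ succeeds at full power no matter what the others do, any $\epsilon$-regret history has $p_1\ge 1-\epsilon$, so $\ell_1$ is essentially always on. I would choose the distances so that $d_{1,2}^\alpha<\beta\,d_{2,2}^\alpha$; then at uniform power $\ell_2$ fails whenever $\ell_1$ transmits, so transmitting is worse than staying silent and the regret bound forces $p_2\le O(\epsilon)$. With $\ell_2$ silent almost always and only the distant $\ell_1$ interfering, $\ell_3$ and $\ell_4$ each succeed in all but an $O(\epsilon)$ fraction of rounds. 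Hence every uniform $\epsilon$-regret history has value $\NoRegret_{\min}(L)\ge 3-O(\epsilon)$, witnessed by $\{\ell_1,\ell_3,\ell_4\}$.

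Next I would upper bound $\NoRegret^{PC}_{\min}(L)$ by displaying a pure Nash of value $2$: let $\ell_1$ transmit at a carefully chosen \emph{reduced} integer power $P_1^*<P_{\max}$, let $\ell_2$ transmit at $P_{\max}$, and let $\ell_3,\ell_4$ stay silent. I would pick the geometry and $P_1^*$ so that (i) $\ell_1$ still clears threshold at $r_1$ against $s_2$ (easy since $\ell_1$ is short and $s_2$ is far from $r_1$), (ii) $\ell_2$ clears threshold at $r_2$ because $s_1$ now contributes only $P_1^*/d_{1,2}^\alpha$ rather than $P_{\max}/d_{1,2}^\alpha$, and (iii) the interference of $\ell_2$ alone at $r_3$ and $r_4$ is large enough that $\ell_3,\ell_4$ fail even at full power. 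Verifying this is a Nash is then immediate: the two active links have utility $1$, the maximum, so no deviation helps; and since $\SINR$ is monotone increasing in a link's own power (its own power appears only in the numerator), a silent link that fails at $P_{\max}$ fails at every power, so staying silent is a best response for $\ell_3$ and $\ell_4$. This yields $\NoRegret^{PC}_{\min}(L)\le 2$, and combining with the lower bound gives $\NoRegret_{\min}(L)/\NoRegret^{PC}_{\min}(L)\ge 3/2-o(1)$, proving the claim with $c=3/2$.

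The crux, and the main obstacle, is the simultaneous feasibility of the distance and power constraints. The two binding conditions pull $d_{1,2}$ in opposite directions: the uniform argument needs $d_{1,2}^\alpha<\beta\,d_{2,2}^\alpha$ (so $\ell_2$ fails at full power against a full-power $\ell_1$), while the Nash needs $d_{1,2}^\alpha\ge \beta\,(P_1^*/P_{\max})\,d_{2,2}^\alpha$ (so $\ell_2$ succeeds against the reduced power $P_1^*$). These are jointly satisfiable precisely because $P_1^*<P_{\max}$, which is exactly the extra freedom that power control provides and is absent from the uniform game; this gap is what makes the paradox possible rather than contradictory. The remaining work is to realize all of these inequalities, together with $\ell_1$'s robustness, the mutual feasibility of $\{\ell_1,\ell_3,\ell_4\}$, and $\ell_2$'s domination at $r_3,r_4$, within one consistent planar embedding and for an integer $P_1^*$ (taking $P_{\max}$ large enough to leave room), exactly as explicit constants were fixed in Theorem~\ref{thm:IC_lower}.
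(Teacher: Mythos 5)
Your proposal is correct and takes essentially the same route as the paper's proof: a four-link instance in which a regret-forcing argument shows every uniform $\epsilon$-regret history has value at least $3-O(\epsilon)$ (one robust link always on, one link suppressed by it, two links then free to succeed), while power control admits a value-$2$ pure Nash with an asymmetric power profile in which the newly activated link knocks out $\ell_3$ and $\ell_4$, giving $c=3/2-o(1)$. The only difference is a mirror image of the mechanism: in the paper the suppressed link $\ell_1$ transmits at $P_{\max}$ against its equidistant rival $s_2$, which drops to power $1$ (and the paper pins down explicit constants $\alpha=3$, $\beta=1.1$, $P_{\max}=2$ with concrete distances, where you leave the embedding as a checked-but-unexecuted constraint system), whereas you have the enabler voluntarily reduce power -- both rest on the same indifference property of a (weak) Nash that is unavailable under uniform powers.
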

\def\APPENDPCEXAMP{
\begin{proof}
Since any pure-Nash solution is also a no-regret solution, in this case it is sufficient to find a pure Nash equilibrium that uses power control that is a constant factor worse than $\NoRegret_{\min}(L)$, the worst no-regret behavior without power control.  Let $L$ be the set of links as illustrated in Fig.~\ref{fig:nash_all}(b). Let $\Noise=0$, $\beta=1.1$, $\alpha=3$ and $P_{\max}=2$.
Let us first consider the case of power control, where each sender $s_v$ transmits with power $P_v \in [1,P_{\max}]$. Let $N=\{\ell_1, \ell_2\}$ and $s_1$ transmits with $P_1=P_{\max}$, $s_2$ transmits with $P_2=1$ and both other transmitters user power $0$.  It is easy to see that this is a pure Nash: the SINR of $\ell_1$ is $\frac{P_{\max}/8}{1/8} > \beta$ and the SINR of $\ell_2$ is $\frac{1/1}{P_{\max}/(5)^3} > \beta$, while even if they used power $P_{\max}$ links $\ell_3$ and $\ell_4$ would not be able to overcome the interference caused by $\ell_1$ since $P_{\max}/4.5^3< 1/3.5^3$ (as $P_{\max}=2$).

We will now analyze the worst no-regret behavior without power control.  Given a history in which all players have regret at most $\epsilon$, let $p_i$ denote the fraction of times in which $\ell_i$ broadcasts.  Note that $\ell_2$ is always feasible, since under uniform powers the SINR of $\ell_2$ is at least $1 / (P_{1,2} + P_{3,2} + P_{4,2}) \geq \beta$ where $P_{1,2} =1/125$ and $P_{3,2}=P_{4,2}=1/(64+25)^{3/2}$.  Thus $p_2 \geq 1-\epsilon$, and link $\ell_2$ succeeds at least $1-\epsilon$ fraction of the time.
Since $\beta>1$ and the interfering sender $s_2$ is at the same distance to the receiver $r_1$ as its intended sender $s_1$, it holds that $\ell_1$ cannot succeed if $s_2$ transmits. Hence, if $s_1$ transmitted every time it would have average utility at most $-(1-\epsilon) + \epsilon = -1 + 2\epsilon < 0$ (since $\epsilon < 1/2$), while if it never transmitted it would have average utility $0$.  Thus its average utility is at least $-\epsilon$.  Since it can succeed only an $\epsilon$ fraction of the time (when link $\ell_2$ is not transmitting), we have that $\epsilon - (p_1 - \epsilon) \geq -\epsilon$ and thus $p_1 \leq 3\epsilon$.

Now consider link $\ell_3$. Since $s_1$ is closer to $r_3$ than $s_3$, the link $\ell_3$ cannot succeed if $s_1$ transmits. However, it can succeed if $s_1$ is not transmitting and $s_2$ and $s_4$ are transmitting, since $P_{3,3}\geq \beta (P_{2,3}+P_{4,3})$ where $P_{3,3}=1/4.5^3, P_{2,3}=1/(16+3.5^2)^{3/2}$ and $P_{4,3}=1/11.5^3$. Thus if $s_3$ had chosen to transmit at every time it would have average utility at least $1-3\epsilon-3\epsilon = 1-6\epsilon>0$ for $\epsilon \leq 1/6$. Thus $\ell_3$ must have average utility of at least $1-7\epsilon$ and thus must succeed at least $1-7\epsilon$ fraction of the time.
Link $\ell_4$ is the same by symmetry. Thus the average number of successes in any $\epsilon$-regret history is at least $1-\epsilon + 2(1-7\epsilon) = 3-15\epsilon$, which proves the theorem.
\QED \end{proof}
}
\APPENDPCEXAMP
We now prove that (as with IC) that the paradox cannot be too bad: adding power control cannot cost us more than a constant.
The proof is very similar to that of Thm. \ref{thm:nash_ic_upper} up to some minor yet crucial modifications.
\begin{theorem}
\label{thm:nash_pc_upper}
$\NoRegret^{PC}_{\min}(L)  \geq \NoRegret_{\max}(L)/c$
for any set of links $L$ and some constant $c \geq 1$.
\end{theorem}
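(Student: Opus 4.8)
The plan is to follow the proof of Theorem~\ref{thm:nash_ic_upper} almost verbatim, replacing the interference-cancellation history by a power-control history and carefully tracking affectances under non-uniform powers. First I would fix an $\epsilon$-regret history in the uniform setting achieving $\NoRegret_{\max}(L)$ and let $p_i$ be the fraction of rounds in which $s_i$ transmits, so that $\sum_i p_i = \Theta(\NoRegret_{\max}(L))$ by Lemma~\ref{lem:attempts}; simultaneously I would fix an $\epsilon$-regret history \emph{with} power control achieving $\NoRegret^{PC}_{\min}(L)$ and let $q_i$ be the corresponding transmission frequencies, so $\sum_i q_i = \Theta(\NoRegret^{PC}_{\min}(L))$. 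As in the IC proof there is a uniform-power $\beta$-feasible set $A \subseteq L$ with $|A| \geq \NoRegret_{\max}(L)$. Setting $B = \{i : q_i \geq 1/2\}$ and $A' = A \setminus B$, the case $|B \cap A| \geq |A|/2$ is immediate, since then $\NoRegret_{\max}(L) \leq |A| \leq 2|B| \leq 4\sum_i q_i$; otherwise $|A'| > |A|/2$, and I would extract via Fact~\ref{fc:amenable}(a) an amenable, uniformly $\beta$-feasible subset $\widehat{A} \subseteq A'$ with $|\widehat{A}| \geq |A|/4$.

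The proof then rests on two inequalities, both phrased in terms of the \emph{uniform} affectances $a_i(j)$. The upper bound is unchanged: since $\widehat{A}$ is amenable and uniformly $\beta$-feasible, Fact~\ref{fc:amenable}(b) gives $\sum_{j \in \widehat{A}} a_i(j) = O(1)$ for every link $i$, hence $\sum_{i \in L}\sum_{j \in \widehat{A}} q_i\, a_i(j) \leq O(\sum_i q_i)$. For the lower bound I would argue, for each $j \in \widehat{A} \subseteq A'$ (so $q_j < 1/2$ and therefore $j$'s average utility is below $1/2$), that the fixed deviation ``transmit at $P_{\max}$ in every round'' has average utility at most $1/2 + \epsilon$ by the regret guarantee; writing $y_j$ for the fraction of rounds this deviation would succeed, $2y_j - 1 \leq 1/2 + \epsilon$ forces $1 - y_j \geq 1/4 - \epsilon/2$. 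Combining the two inequalities yields $|\widehat{A}| \leq O(\sum_i q_i)$, and since $|\widehat{A}| \geq |A|/4 = \Omega(\sum_i p_i)$ this gives $\NoRegret_{\max}(L) = O(\NoRegret^{PC}_{\min}(L))$, as required.

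The one genuinely new point --- the ``minor yet crucial modification'' --- is the passage from the failure of the full-power deviation to a lower bound on the \emph{uniform} affectance, since in the PC history the interfering links use non-uniform powers, whereas the amenability machinery of Fact~\ref{fc:amenable} is only available for uniform powers. Here I would invoke monotonicity of affectance in the powers: when $j$ deviates to $P_{\max}$ and each interferer $i$ transmits at its arbitrary power $P_i \leq P_{\max}$, the affectance $a_i^P(j) = \min\{1, c_j(\beta) P_{i,j}/P_{j,j}\}$ is dominated by the uniform value $a_i(j)$, because raising $j$'s own power to $P_{\max}$ and capping each $P_i$ at $P_{\max}$ only shrinks the ratio $P_{i,j}/P_{j,j}$ (and $c_j(\beta)$ is evaluated at $P_j = P_{\max}$ in both cases). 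Consequently, in every round where the $P_{\max}$-deviation of $j$ fails we have $\sum_i a_i(j) \geq \sum_i a_i^P(j) > 1$, so $\sum_i q_i\, a_i(j) \geq 1 - y_j \geq 1/4 - \epsilon/2 = \Omega(1)$, and summing over $j \in \widehat{A}$ gives $\sum_{j \in \widehat{A}} \sum_i q_i\, a_i(j) = \Omega(|\widehat{A}|)$. I expect this monotonicity reduction to uniform affectances to be the main obstacle: fixing the deviation at $P_{\max}$ rather than at an arbitrary power level is exactly what lets both inequalities speak the same (uniform) language and thereby reuse the amenability argument of Theorem~\ref{thm:nash_ic_upper} intact.
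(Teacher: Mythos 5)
Your proposal is correct and follows essentially the same route as the paper's proof: the identical $A$, $B$, $A'$, $\widehat{A}$ decomposition via Lemma~\ref{lem:attempts} and Fact~\ref{fc:amenable}, the same two inequalities (the paper's (\ref{eq:pc1}) and (\ref{eq:pc2})), and the same ``crucial modification'' of fixing the regret deviation at full power $P_{\max}$ so that monotonicity of affectance in the interferers' powers (each $P_i \leq P_{\max}$) lets the uniform affectances $a_i(j)$ dominate the actual ones. The paper makes exactly this monotonicity observation when justifying its lower-bound inequality, so no further comparison is needed.
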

\def\APPENDPCUB{
\begin{proof}
Fix an arbitrary $\epsilon$-regret history without power control, where all transmitters transmit with $P_{\max}$ and let $p_i$ be the fraction of rounds in which $s_i$ attempts to transmit.  Similarly, fix an arbitrary $\epsilon$-regret history with power control, and let $q_i$ be the fraction of rounds in which $s_i$ attempts to transmit (at any nonzero power).  By Lemma~\ref{lem:attempts}, it is sufficient to prove that $\sum_{i \in L} p_i \leq O(\sum_{i \in L} q_i)$.

Note that since the average number of successful connections in the history of the uniform case is $\sum p_i$, there must exist some set of connections $A \subseteq L$ that transmitted successfully in some round $t \in [1, T]$ such that $|A| \geq \Omega(\sum_i p_i)$ and $A$ is feasible when all senders transmit with power $P_{\max}$.
Let $B = \{i : q_i \geq 1/2\}$ and let $A' = A \setminus B$.  If $|B \cap A| \geq |A| / 2$ then we are done, since then
\begin{equation*}
\sum_{i \in L} p_i \leq O(|A|) \leq O(|B|) \leq O\left(\sum_{i \in L} q_i\right)
\end{equation*}
as required.
So without loss of generality we will assume that $|B \cap A| < |A|/2$, and thus that $|A'| > |A|/2$.  Note that $A'$ is a subset of $A$, and so it is feasible in the uniform setting.
Now let $\widehat{A} = \{i \in A' : \sum_{j \in A'} a_i(j) \leq 2\}$ be an amenable subset of $A'$. By Fact \ref{fc:amenable}(a), it holds that $|\widehat{A}| \geq |A'|/ 2 \geq |A| / 4$.

We have the following.
\begin{equation} \label{eq:pc1}
\sum_{i \in L} \sum_{j \in \widehat{A}} q_i a_i(j) \leq c' \cdot \left(\sum_{i \in L} q_i\right).
\end{equation}
where $\sum_{i \in L} \sum_{j \in \widehat{A}} q_i a_i(j)$ is the average affectance of $L$ on the set $\widehat{A}$ when every transmitter that transmitted in the PC history transmitted with power $P_{\max}$.  The inequality follows by Fact \ref{fc:amenable}(b).

On the other hand, we know that the $q_i$ values correspond to an $\epsilon$-regret history in the PC setting.
Consider some $j \in \widehat{A}$.  Since $\widehat A \subseteq A' = A \setminus B$, we know that $q_j < 1/2$ and thus the average utility of link $\ell_j$ is at most $1/2$. Let $y_j$ be the fraction of time $s_j$ would have succeeded has it transmitted in every round with full power $P_{\max}$. Since the average utility of the best single action is at most $1/2+\epsilon$ it holds also that the utility of transmitting with full power is at most $1/2+\epsilon$ as well, hence $y_j-(1-y_j)\leq 1/2+\epsilon$ and so $y_j \leq \frac34 + \frac{\epsilon}{2}$.
In other words, in at least $1-y_j=\frac14 - \frac{\epsilon}{2}$ fraction of the rounds the affectance of the other links on the link $\ell_j$ must be at least $1$ (or else $j$ could succeed in those rounds) when it attempted to transmit with $P_{\max}$.  Thus the average affectance on $\ell_j$ in the PC history is at least $\frac14 - \frac{\epsilon}{2}$.  Summing over all $j \in \widehat{A}$, to get that
\begin{equation} \label{eq:pc2}
\sum_{j \in \widehat{A}} \sum_{i \in L}  q_i a_i(j) \geq \sum_{j \in \widehat{A}} \frac{1-2\epsilon}{4} \geq \Omega(|\widehat{A}|),
\end{equation}
where the first inequality follows by the fact the in the true $\epsilon$-regret history in the PC setting, the average affectance on $\ell_j$ is at least $\frac14 - \frac{\epsilon}{2}$ when $s_j$ transmit with $P_{\max}$ and all other transmitters $s_{j'}$ transmit with power at most $P_{\max}$.

Combining equations~(\ref{eq:pc1}) and~(\ref{eq:pc2}) (and switching the order of summations) implies that $|\widehat{A}| \leq O(\sum_{i \in L} q_i)$.  Since $|\widehat{A}| \geq |A|/4 \geq \Omega(\sum_{i \in L} p_i)$, we get that $\sum_i p_i \leq O(\sum_{i \in L} q_i)$ as desired.
\QED \end{proof}

}
\APPENDPCUB

\begin{corollary}
\label{cor:pmax_price}
The price of total anarchy under the power control setting with maximum transmission energy $P_{\max}$ is $\Theta(\log\Delta)$.
\end{corollary}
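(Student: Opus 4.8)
The plan is to prove that the price of total anarchy under power control equals $\Theta(\log \Delta)$ by establishing matching upper and lower bounds. For the \textbf{upper bound} ($O(\log \Delta)$), I would chain together the results already available in the excerpt. By Theorem~\ref{thm:nash_pc_upper}, we have $\NoRegret^{PC}_{\min}(L) \geq \NoRegret_{\max}(L)/c \geq \NoRegret_{\min}(L)/c$ for some constant $c$, so the worst no-regret value with power control is within a constant factor of the worst no-regret value without power control. Then I would invoke Theorem~2 of~\cite{Asgeirsson11}, which (as used in Corollary~\ref{cor:nash_ic_approx}) gives $\NoRegret_{\min}(L) \geq |OPT|/c'$ for some constant $c'$ in the vanilla uniform setting. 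The only remaining gap is comparing $OPT$ (the best uniform-power feasible set) to $OPT^{PC}$ (the best feasible set allowing arbitrary integer powers in $[1,P_{\max}]$). This is exactly the known capacity gap for power control: by the results of Andrews and Dinitz~\cite{AD09} (or Kesselheim~\cite{K11}), $|OPT^{PC}| = O(\log \Delta \cdot |OPT|)$ after accounting for the discrete power levels. Combining these three facts yields $|OPT^{PC}|/\NoRegret^{PC}_{\min}(L) = O(\log \Delta)$.

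For the \textbf{lower bound} ($\Omega(\log\Delta)$), I would exhibit a network where a bad pure Nash equilibrium coexists with a much larger power-control-feasible optimum, mirroring the construction at the end of Corollary~\ref{cor:nash_ic_approx}. The idea is to place two short ``blocking'' links $\widetilde\ell_1, \widetilde\ell_2$ that are robustly successful (so that $\{\widetilde\ell_1,\widetilde\ell_2\}$ is a pure Nash of value $O(1)$) in front of a collection of $m = \Theta(\log \Delta)$ links that are \emph{simultaneously feasible only under a carefully chosen non-uniform power assignment}. The classic such construction is a geometric sequence of links at exponentially increasing lengths, where assigning power proportional to $d_{i,i}^{\alpha}$ (the ``mean power'' or ``square-root power'' type assignment used in~\cite{AD09,MW06}) makes all $m$ links feasible, so $|OPT^{PC}| \geq m = \Omega(\log \Delta)$. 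As long as the two blocking links jointly prevent any of the $m$ long links from succeeding whenever both blockers transmit, and the blockers always succeed regardless of the others, $\{\widetilde\ell_1,\widetilde\ell_2\}$ remains a pure Nash; hence $\NoRegret^{PC}_{\min}(L) = O(1)$ and the ratio is $\Omega(\log\Delta)$.

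The \textbf{main obstacle} I anticipate is the lower-bound construction: it is not enough to reuse the IC picture verbatim, because in the power-control setting I must simultaneously guarantee three things with a single geometry and a single power budget $P_{\max}$. First, the blockers must form a genuine pure Nash in the PC game, which requires that no long link $\ell_i$ has any \emph{profitable power level} in $[1,P_{\max}]$ to deviate to while the blockers transmit — since power control gives each deviating link many more strategies than in the uniform case, I must verify infeasibility across the whole power range, not just at $P_{\max}$. Second, the $m$ long links must be genuinely feasible under \emph{some} admissible integer power assignment, which forces the number of links to be capped by $\log_\beta(\Delta^\alpha P_{\max}) = \Theta(\log \Delta)$ (exactly the quantity appearing in Lemma~\ref{lem:ic_no_ic_feas}), so the geometric spacing and the value of $P_{\max}$ relative to $\Delta$ must be chosen consistently. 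Third, the blockers must dominate all $m$ receivers simultaneously, which constrains their placement relative to the spread-out long links. Balancing these constraints is the delicate part; the rest of the argument is a direct assembly of the cited capacity and equilibrium bounds.
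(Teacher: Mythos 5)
Your proposal is correct in substance and follows essentially the same route as the paper, with a few divergences worth noting. For the upper bound the paper simply cites \cite{Asgeirsson11}; your explicit chain $\NoRegret^{PC}_{\min}(L) \geq \NoRegret_{\max}(L)/c \geq \NoRegret_{\min}(L)/c \geq |OPT|/(c\cdot c')$, followed by the capacity gap $|OPT^{PC}| \leq O(\log \Delta)\cdot |OPT|$, is exactly the template the paper itself uses for the IC analogue (Corollary~\ref{cor:nash_ic_approx}), so your version buys a more self-contained derivation; just be aware that the gap fact is not stated in \cite{AD09} or \cite{K11} in the form you invoke --- it follows from partitioning a PC-feasible set into $O(\log \Delta)$ length classes and using the standard fact that for nearly-equilength links power control helps only by a constant factor. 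For the lower bound, the paper's construction (Fig.~\ref{fig:nash_pc}) is your construction with one simplification: a \emph{single} short blocking link $\ell^*$ suffices, since in the PC setting (no IC) a blocker whose sender is closer to every receiver $r_i$ than the intended sender $s_i$, transmitting at $P_{\max}$, makes every deviation power $P_i \in [1, P_{\max}]$ infeasible outright because $\beta > 1$; the two-blocker trick is only needed in the IC corollary, where a single strong interferer could otherwise be decoded and canceled. The paper also argues directly that any $\epsilon$-regret history in which $s^*$ uses $P_{\max}$ has average value at most $2$, rather than exhibiting a pure Nash, but since a repeated pure Nash is a $0$-regret history your route is equally valid. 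Finally, two details you flagged as delicate are indeed the right ones, though one is slightly off as written: the chain $L \setminus \{\ell^*\}$ is made feasible by \emph{geometrically increasing} powers $P_i > 2P_{i-1}$ (cited from \cite{ALP09}), not by $P_i \propto d_{i,i}^{\alpha}$, and this forces $P_{\max} \geq 2^{m}$ for $m = \lfloor \log \Delta \rfloor$ links --- precisely the consistency constraint between $P_{\max}$ and $\Delta$ that you anticipated.
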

\def\APPENDPCPA{
\begin{proof}
The upper bound of $O(\log \Delta)$ is given by \cite{Asgeirsson11}. Let $m=\lfloor \log \Delta \rfloor$. The lower bound example is given by the nested link network described in Fig. \ref{fig:nash_pc}. Let $OPT \subseteq L$ denote the solution of the optimal solution (i.e., maximum feasible set).
According to \cite{ALP09} the link set $L \setminus \{\ell^*\}$ is feasible with exponentially increasing power level $P_i >2P_{i-1}$.  In particular, since $\beta>1$, a necessary condition for feasibility is to maintain that $P_1 > P_2 > \ldots >P_m$.
Consider an $\epsilon$-regret history in which $P^*=P_{\max}$. Since the link $\ell^*=(s^*, r^*)$ is sufficiently short, $s^*$ always succeeds with any power level $P^*$ and hence $s^*$ transmits at least $1-\epsilon$ fraction of the time. For every other link $\ell_i$, $i \in \{1, \ldots, m\}$, if $s^*$ transmits then $s_i$ would fail to transmit for every transmission power $P_i \in [1, P_{\max}]$ as the interfering sender $s^*$ is closer to $r_i$ then the intended sender $s_i$ and $s^*$ transmits with full power. Consider link $s_i$ and let $p_i$ be the fraction of time it attempted to transmit (possibly with different power levels) in the $\epsilon$-regret history. If $s_i$ would have transmit in every round using any power $P^t_i \in [1, P_{\max}]$ in round $t \in \{1, \ldots, T\}$, it would have average utility at most $-(1-\epsilon) + \epsilon = -1 + 2\epsilon < 0$ (since $\epsilon < 1/2$), while if it never transmitted it would have average utility $0$.
Thus its average utility is at least $-\epsilon$.  Since it can succeed only an $\epsilon$ fraction of the time (when link $\ell^*$ is not transmitting), we have that $\epsilon - (p_i - \epsilon) \geq -\epsilon$ and thus $p_i \leq 3\epsilon$
Overall, the average number of attempted transmission is at most $1+3  \cdot m/\epsilon \leq 2$ for $\epsilon=1/(3m)$. Since $|OPT|=m$, the lemma follows.
\QED \end{proof}
\begin{figure}[h!]
\begin{center}
\includegraphics[scale=0.35]{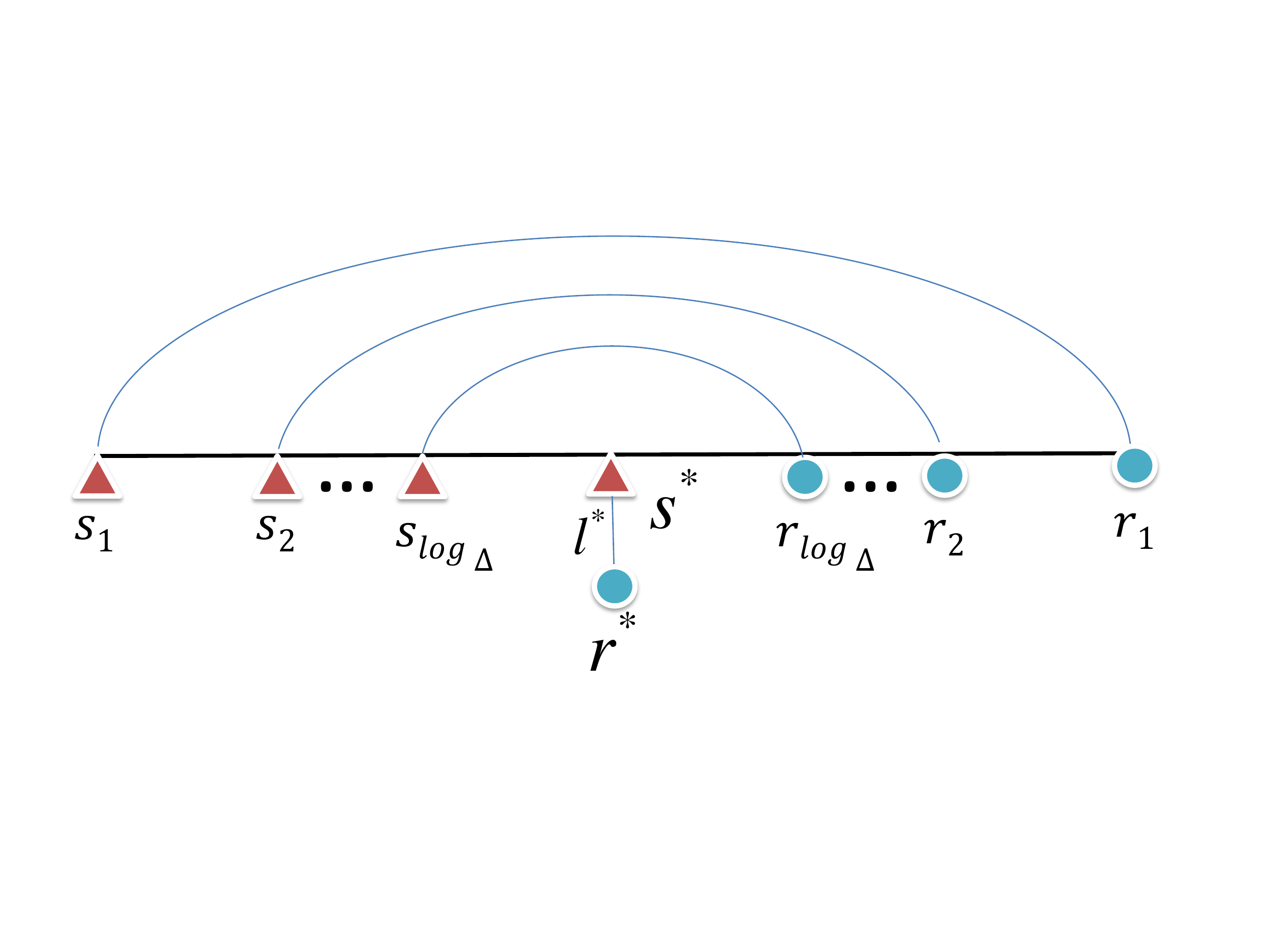}
\caption{ \label{fig:nash_pc}
\sf
Price of anarchy in the PC setting: a network in which the gap between the optimal solution and a no-regret solution is $\Omega(\log \Delta)$.
}
\end{center}
\end{figure}
}
\APPENDPCPA

\section{Power Control with Interference Cancellation (PIC)}
\label{sec:IC_PC}
In this section we consider games in the power control with IC setting where \emph{transmitters} can adopt their transmission energy in the range of $[1, P_{\max}]$ and in addition, \emph{receivers} can employ interference cancelation. This setting is denote as $PIC$ (power control+IC).
\def\APPENDPCUNEQ{
We begin by showing the following observation that should be contrasted with Obs. \ref{obs:nash_pc_contain} of the previous section.
\begin{observation}
\label{obs:ic_pc}
There exists a set of links $L$ for which
$\NoRegret^{IC}(L) \nsubseteq \NoRegret^{PIC}(L)$
\end{observation}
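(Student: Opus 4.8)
The plan is to exploit the non-monotonicity that interference cancellation introduces into the power strategies, which is precisely the property that is absent in the pure power-control setting of Observation~\ref{obs:nash_pc_contain}. Without IC, lowering one's own transmission power can never help a single deviating link: it only shrinks the numerator $P_{v,v}$ of its $\SINR$ while leaving the interference untouched, so full power weakly dominates every interior power. With IC and zero noise the situation reverses. A receiver $r_v$ must decode and cancel every interferer stronger than $s_v$ before it can read $s_v$, and the decoding test for such a blocking interferer $s_i$ is $P_{i,v}/(\sum_{k:\,P_{k,v}<P_{i,v}} P_{k,v}+\Noise)\geq\beta$. Lowering $P_v$ (hence $P_{v,v}$) \emph{decreases} this denominator, so it can turn a previously undecodable blocking signal into a decodable one; once all stronger signals are cancelled and $\Noise=0$, the residual desired signal is decoded trivially. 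Thus a link can strictly gain by reducing its power, a move available only in the PIC setting, and this is the mechanism I would use to break the containment.

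Concretely, I would build a two-link instance $L=\{\ell_1,\ell_2\}$ with $\Noise=0$. Link $\ell_1$ is short and its receiver is far from $s_2$, so $\ell_1$ always succeeds. The receiver $r_2$ is placed so that $s_1$ is slightly closer to it than $s_2$ is; writing $\rho=(d_{2,2}/d_{1,2})^\alpha=P_{1,2}/P_{2,2}$ for the received-power ratio at uniform full power, I choose the geometry and $\beta$ so that $1<\rho<\beta$. Then at power $P_{\max}$ the strongest signal at $r_2$ is $s_1$, but decoding it fails because $P_{1,2}/P_{2,2}=\rho<\beta$, so $r_2$ is stuck and $\ell_2$ cannot succeed at full power. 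Finally I pick $P_{\max}$ large enough that some integer $P^*\in\{1,\dots,P_{\max}-1\}$ satisfies $P^*/P_{\max}\le \rho/\beta$; at this reduced power the decoding test for $s_1$ becomes $\rho\,P_{\max}/P^*\ge\beta$ (while $s_1$ remains strongest, since $\rho>1\ge P^*/P_{\max}$), so $r_2$ decodes and cancels $s_1$ and then reads $s_2$ against zero interference, making $\ell_2$ succeed.

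With these parameters I would analyze the constant history $\mathrm{H}$ in which $s_1$ always transmits at $P_{\max}$ and $s_2$ always stays silent. Under IC the strategy set is $\{0,P_{\max}\}$: link $\ell_1$ succeeds every round (regret $0$), while for $\ell_2$ the only alternative to silence is full power, which always fails as above, so silence is optimal and $\ell_2$ also has regret $0$; hence $\mathrm{H}\in\NoRegret^{IC}(L)$. Under PIC the strategy set expands to $\{0,1,\dots,P_{\max}\}$, and $\ell_2$ now has the deviation to $P^*$, which yields utility $+1$ in every round against the fixed behavior of $s_1$. Its regret in $\mathrm{H}$ is therefore at least $1-0=1>\epsilon$, so $\mathrm{H}\notin\NoRegret^{PIC}(L)$, establishing $\NoRegret^{IC}(L)\nsubseteq\NoRegret^{PIC}(L)$.

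The only real work is the parameter bookkeeping: one must keep $\rho$ above $1$ (so $s_1$ is the blocking strongest signal at $r_2$) and below $\beta$ (so it is undecodable at full power), and ensure the admissible reduced power $P^*$ is a genuine interior integer, which forces $P_{\max}\,\rho/\beta\ge 1$. All of these are simultaneously satisfiable with, e.g., $\beta$ slightly above $1$, $\rho$ strictly between $1$ and $\beta$, and a modestly large $P_{\max}$. The choice $\Noise=0$ is used only to make the final residual decoding automatic and can be replaced by a small constant noise, and the two-link gadget can be replicated at distance to obtain instances with arbitrarily many links. I expect no genuine obstacle beyond this arithmetic, since the qualitative mechanism---reduced power being strictly beneficial under IC---is robust.
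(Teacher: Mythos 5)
Your proposal is correct and uses essentially the same mechanism as the paper's proof: a two-link, zero-noise gadget in which $\ell_2$ is stuck under IC with uniform powers but, in the PIC setting, strictly gains by \emph{lowering} its power so that $r_2$ can first decode and cancel $s_1$, giving $\ell_2$ a fixed deviating strategy with per-round utility $1$ and hence regret $\geq 1$ in the exhibited history. The only cosmetic difference is that the paper co-locates the two senders and the two receivers and uses a time-sharing (half/half) history, whereas you use a constant history with the received-power ratio $\rho \in (1,\beta)$; both yield a history that is $\epsilon$-regret under IC but not under PIC.
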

\begin{proof}
Let $\beta=2$, $\Noise=0$, $P_{\max} \geq 2$ and $L=\{\ell_1, \ell_2\}$ where $s_1,s_2$ are located at $(0,0)$ and $r_1,r_2$ are located at $(1,0)$.
We first show that there exists a $\epsilon$-regret history in the IC setting (with uniform powers) such that the average number of successful transmission is 1.
Consider an $2T$-round history in which $s_1$ transmits at the first $T$ rounds and $s_2$ transmits at the last $T$ rounds. Hence, the average utility of $s_1$ and $s_2$ is $1/2$. We now show that in every $\epsilon$-regret history the average utility of each of the transmitters should be at least
$-\epsilon$.  Let $p_1$ (resp. $p_2$) be the fraction of time that $s_1$ (resp., $s_2$) transmits in an $\epsilon$-regret history. If $s_1$ transmits in every round,  since $s_1$ and $s_2$ are equidistance from $r_1$, cancellation cannot help and hence it would succeed only in at most $1-p_2$ fraction of the time and its average utility is $1-2p_2$. Since $p_2=1/2$, it would have average utility of $0$, and in $\epsilon$-regret history, its average utility is at least $-\epsilon$. By symmetry, $s_2$ would also have average utility of at least $-\epsilon$. Concluding that the given history is indeed no-regret. We now claim that the given $(1/2,1/2)$ history is not a $\epsilon$-regret history for the PIC setting.
To see this, let $P_1 \in [1, P_{\max}]$ be the power assignment of $s_1$, then by letting $P_2=2P_1$ if $P_1\leq P_{\max}/2$ and $P_2=1$ otherwise, it holds that $\ell_2$ always succeeds when transmitting hence the average utility of $s_2$ is $1$ and in $\epsilon$-regret history it should be at least $1-\epsilon<1/2$ for $\epsilon<1/2$. Concluding that letting $s_2$ transmit for half of the rounds is \emph{invalid} no-regret strategy.
\QED \end{proof}
}
We show that Braess's paradox can once again happen and begin by comparing the PIC setting to the setting of power control without IC and to the most basic setting of uniform powers. 
\begin{lemma}
\label{lemma:PIC_pcunilower}
There exists a set of links $L$ and constant $c>1$ such that\\
(a) $\NoRegret^{PIC}_{\min}(L) \leq \NoRegret^{PC}_{\min}(L)/c$.\\
(b) $\NoRegret^{PIC}_{\min}(L) \leq \NoRegret_{\min}(L)/c$.\\
\end{lemma}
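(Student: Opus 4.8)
The plan is to exhibit a single four-link network $L$ that witnesses both inequalities at once, combining the parasitic-link idea of Theorem~\ref{thm:IC_lower} with the power-control reasoning of Theorem~\ref{thm:pmax_nash_lb}. Since every pure Nash equilibrium is in particular an $\epsilon$-regret history, it suffices to (i) construct a \emph{bad pure Nash} in the PIC setting of value only $2$, which bounds $\NoRegret^{PIC}_{\min}(L) \le 2 + O(\epsilon)$, and (ii) lower bound the value of \emph{every} $\epsilon$-regret history in both the PC and the uniform settings by $3 - O(\epsilon)$, giving $\NoRegret^{PC}_{\min}(L), \NoRegret_{\min}(L) \ge 3 - O(\epsilon)$. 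Taking $\epsilon$ arbitrarily small then yields $c \ge 3/2 - o(1) > 1$ for both parts (a) and (b).

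The network has a link $\ell_1$ that always succeeds, two ``good'' links $\ell_3,\ell_4$ that succeed whenever they are not jammed, and a ``parasitic'' link $\ell_2$ whose only effect, when active, is to jam $\ell_3$ and $\ell_4$ (as in Theorem~\ref{thm:IC_lower}). The crucial design choice is geometric: I would place $s_1$ much closer to $r_2$ than $s_2$ is, so that $(d_{2,2}/d_{1,2})^\alpha > P_{\max}/\beta$. This single inequality forces the asymmetry that drives the paradox. \emph{Without} IC, whenever $s_1$ transmits its signal dominates at $r_2$ so strongly that even $s_2$ at full power against $s_1$ at minimum power cannot clear the threshold, i.e.~$P_2/d_{2,2}^\alpha < \beta\, P_1/d_{1,2}^\alpha$ for all admissible $P_1 \ge 1$, $P_2 \le P_{\max}$ (adding the other interferers and noise only helps); hence $\ell_2$ fails for \emph{every} power level it might use. \emph{With} IC, however, $s_1$ is exactly the strongest (hence first-decoded) signal at $r_2$, so by choosing powers with $P_1 d_{2,2}^\alpha \ge \beta\, P_2 d_{1,2}^\alpha$ the receiver $r_2$ can decode and cancel $s_1$, after which $\ell_2$ succeeds.

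For the PIC upper bound I would take the profile in which $s_1,s_2$ transmit at powers enabling the cancellation above and $s_3,s_4$ are silent, and verify it is a pure Nash: $\ell_1$ and $\ell_2$ are successful, while $\ell_3,\ell_4$ fail for every power choice \emph{and} even with IC, because $s_1,s_2$ are positioned so that neither is cancelable at $r_3,r_4$ (their received powers there remain below $P_{3,3}$ and $P_{4,4}$, as in the analysis of Theorem~\ref{thm:IC_lower}) while their combined interference already exceeds the threshold. Thus no link gains by deviating and the value is exactly $2$. For the two lower bounds I would run the regret argument of Theorem~\ref{thm:IC_lower} essentially verbatim, the new ingredient being that the geometric inequality above makes it valid for \emph{all} powers: $\ell_1$ always succeeds, so $p_1 \ge 1-\epsilon$; since $\ell_2$ fails whenever $s_1$ transmits, transmitting always yields average utility at most $-1+2\epsilon<0$ while staying silent yields $0$, so the $\epsilon$-regret guarantee forces $\ell_2$ to transmit at most an $O(\epsilon)$ fraction of the time and $s_1,s_2$ to co-transmit rarely; consequently $\ell_3,\ell_4$, which succeed whenever $\ell_2$ is silent, are successful a $1-O(\epsilon)$ fraction of the time. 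This gives value $\ge 3-O(\epsilon)$ in both the uniform and the PC settings, the latter because the argument quantifies over all of $\ell_2$'s power choices.

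The main obstacle is precisely this quantification over powers. In the PC setting both $P_1$ and $P_2$ range over $[1,P_{\max}]$ across rounds, so I must be sure that the single inequality $(d_{2,2}/d_{1,2})^\alpha > P_{\max}/\beta$ kills $\ell_2$ for \emph{every} pair of transmitting powers, and separately confirm that $\ell_3,\ell_4$ cannot be rescued by any power-control-plus-cancellation deviation inside the PIC Nash. The delicate part is therefore choosing concrete coordinates and constants $\alpha,\beta,P_{\max}$ so that four strict inequalities hold simultaneously: strong $s_1$-jamming at $r_2$, cancelability of $s_1$ at $r_2$ under the Nash powers, non-cancelability of $s_1,s_2$ at $r_3,r_4$, and feasibility of $\{\ell_1,\ell_3,\ell_4\}$ when $\ell_2$ is silent. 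Once such a configuration is fixed, the two bounds combine to give the claimed constant $c>1$.
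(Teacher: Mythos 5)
Your proposal is correct and follows essentially the same route as the paper's proof: exhibit a bad pure Nash in the PIC setting where interference cancellation rescues parasitic link(s) that then jam the good links, and lower-bound \emph{every} $\epsilon$-regret history in both the PC and uniform settings via the standard regret accounting, with the jamming inequality quantified over all power levels --- your condition $(d_{2,2}/d_{1,2})^\alpha > P_{\max}/\beta$ plays exactly the role of the paper's inequality $\beta/\widehat{\epsilon}^8 > P_{\max}/(\beta^{16}\cdot\widehat{\epsilon}^8)$, and your non-cancelability requirement at $r_3,r_4$ matches the paper's device of making the parasites' signals at the far receivers undecodable. The only difference is in the concrete instance: you reuse the four-link geometry of Theorem~\ref{thm:IC_lower} with a single parasite (giving value $2$ versus $3-O(\epsilon)$, i.e.\ $c\approx 3/2$), whereas the paper uses a six-link network with a near-co-located exponential chain of three parasitic links, $\alpha=8$, $\beta=10$ (giving value $3$ versus $4-o(1)$, i.e.\ $c\approx 4/3$); both instantiations establish the claimed constant $c>1$ for parts (a) and (b) simultaneously, since in both constructions the PC and uniform no-regret values coincide.
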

\def\APPENDPICPCUN{
\begin{proof}
Consider the $6$-transmitters network illustrated in Fig. \ref{fig:nash_pic}. Let $\alpha=8$, $\beta=10$, $\Noise=0$, $b=25^{1/8}$ and $P_{\max}=2$. Set $\widehat{\epsilon}<<1$ to be a sufficiently small constant. Intuitively, $\widehat{\epsilon}$ should be sufficiently small so that the receivers $r_4$, $r_5$ and $r_6$ consider the set of transmitters $s_1, s_2$ and $s_3$ are co-located at the same point, say at the point of transmitter $s_1$. For simplicity, we therefore analyze the network example as if this is the case while keeping in mind that this effect can be achieved by setting $\widehat{\epsilon}$ to be sufficiently small. In addition, if one insists on minimal transmitter-receiver distance $d(s_i,r_j)\geq 1$ , then the entire set of distances can be multiplied by $1/\widehat{\epsilon}$, without affecting the analysis (since there is no ambient noise, such normalization factor cancels out).
We begin by considering the value of solutions in an $\epsilon$-regret histories without IC. Let $p_i$ be the fraction of time that $s_i$ transmits in the worst PC setting for every $i \in \{1, \ldots, 6\}$.

It is easy to see that link $1$ will always be successful, since the length of the link $\widehat{\epsilon}$ is set to be sufficiently small.
Since $\ell_1$ has at most $\epsilon$-regret, this implies that $p_1 \geq 1- \epsilon$.

On the other hand, whenever $s_1$ transmits  it is clear that the link $\ell_2$ cannot be successful even if $s_2$ transmits with full power and $s_1$ transmits with power $1$ as $\beta/\widehat{\epsilon}^8> P_{\max}/(\beta^{16}\cdot \widehat{\epsilon}^8)$.

So if $s_2$ transmitted every time it would have average utility at most $-(1-\epsilon) + \epsilon = -1 + 2\epsilon < 0$ (since $\epsilon < 1/2$), while if it never transmitted it would have average utility $0$.  Thus its average utility is at least $-\epsilon$.  Since it can succeed only an $\epsilon$ fraction of the time (when link $1$ is not transmitting), we have that $\epsilon - (p_2 - \epsilon) \geq -\epsilon$ and thus $p_2 \leq 3\epsilon$. In the same manner, it also holds that $p_3 \leq 3 \epsilon$.

Now consider link $\ell_4$. As long as links $\ell_2$ and $\ell_3$ do not transmit $\ell_4$ always succeeds even if it transmits with power $P_4=1$ and $s_1, s_5,s_6$ transmit with full power $P_{\max}$. This holds since the amount of interference it suffers is at most $2/b^8+4/(16b^8)\leq 1/\beta$.
Since the fraction of time that both $\ell_2$ and $\ell_3$ are transmitting is at most $p_2+p_3 \leq 6\epsilon$, it holds that if $\ell_4$ always transmits it succeeds at least $1-6\epsilon$ fraction of the time and hence its average utility is $1-6\epsilon-6\epsilon=1-12\epsilon$ which is strictly positive by taking a sufficiently small $\epsilon$. Therefore it holds that in $\epsilon$-regret history, the average utility of $\ell_4$ is at least $1-13\epsilon$, concluding that  $p_4\geq 1-13\epsilon$.
By symmetry, the same holds for $\ell_5$ and $\ell_6$.
Overall, the value of any no-regret solution in PC setting is at least $\NoRegret^{PC}_{\min}(L)=\sum_{i=1}^6 p_i \geq 1-\epsilon+6\epsilon+3-39\epsilon \geq 4+o(1)$.

Let us now analyze what happens when using interference cancellation with power control and bound $\NoRegret^{PIC}_{\min}(L)$.  In this case, it is sufficient to consider a specific pure Nash solution. Let $N=\{s_1, s_2, s_3\}$ be transmitting with full power $P_{\max}$. It then holds that $r_2$ and $r_3$ can cancel the signal of $s_1$ and that both $r_2$ and $r_3$ can decode (resp., cancel) the signal of $s_2$ (this is achieved since $s_1,s_2$ and $s_3$ form an exponential chain with respect to $r_1,r_2,r_3$ though it looks the same with respect to the receivers of $r_4,r_5,r_6$) . We now show that in this case, $s_4$ cannot succeed even if they transmit with full power $P_{\max}$. This holds since $\beta \cdot 6/b^8 >P_{\max}$.
The same holds also for $s_5$ and $s_6$, concluding that $\NoRegret^{PIC}_{\min}(L) \leq 3$ and that $\NoRegret^{PC}_{\min}(L)/\NoRegret^{PIC}_{\min}(L)=4/3+o(1)$ as required. Consider claim (b).  This case is practically the same as the case of claim (a), in particular it holds that $\NoRegret^{PC}_{\min}(L)=\NoRegret_{\min}(L)$ (since $\ell_1,\ell_4, \ell_5,\ell_6$ always succeed when $s_2$ and $s_3$ are not transmitting and without IC, $s_2$ and $s_3$ cannot succeed if $s_1$ transmits). Therefore it also holds that $\NoRegret_{\min}(L)/\NoRegret^{PIC}_{\min}(L)=4/3+o(1)$. The lemma follows.
\QED \end{proof}

\begin{figure}[h!]
\begin{center}
\includegraphics[scale=0.3]{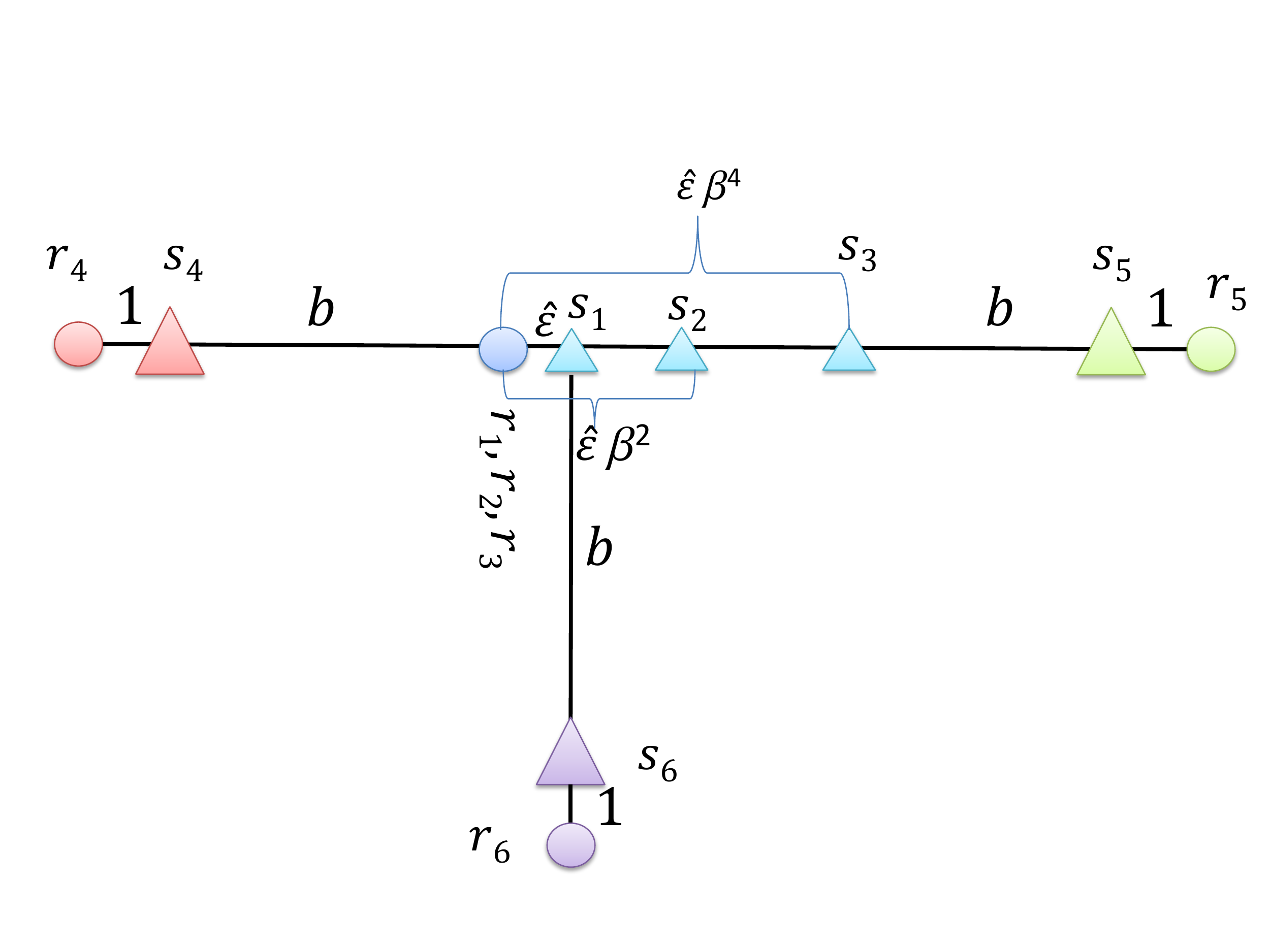}
\caption{ \label{fig:nash_pic}
\sf
Schematic illustration of a network in which playing IC with power control might generate no-regret solutions that are
worse by a factor of $\Omega(1)$ than no-regret solution in a setting without interference cancellation, with or without power control.
}
\end{center}
\end{figure}
}
\APPENDPICPCUN
Moreover, we proceed by showing that PIC can hurt the network by more than a constant when comparing PIC equilibria to IC equilibria. For an illustration of such a network, see Fig. \ref{fig:nash_all}(c).
\begin{theorem}
\label{theorem:IC_pc_lower}
There exists a set of links $L$ and constant $c>1$ such that the best pure Nash solution with PIC is worse by a factor of $\Omega(\log \Delta)$ than the worst no-regret solution with IC.
\end{theorem}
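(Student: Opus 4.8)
The plan is to realize the network of Fig.~\ref{fig:nash_all}(c) as a \emph{nested exponential chain} $\ell_1,\dots,\ell_m$ with $m=\lfloor\log\Delta\rfloor$, together with one \emph{blocker} link $\ell^*$ and one short auxiliary interferer $\ell_X$ placed next to $r^*$, and then to prove two opposing bounds: that \emph{every} $\epsilon$-regret history with IC has $\NoRegret^{IC}_{\min}(L)\ge\Omega(\log\Delta)$, while \emph{every} pure Nash with PIC has value $O(1)$. As in the ALP09 nested construction the chain links have geometrically growing lengths (so $\Delta=2^{\Theta(m)}$) and are positioned so that at uniform power $P_{\max}$ the received powers at each $r_i$ form a strictly decreasing geometric order with $\ell_i$ the designated signal. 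The one extra design goal, which is what later lets a \emph{single} power of $\ell^*$ poison the whole chain, is to make the desired received powers $P_{i,i}$ comparable across $i$, and to place $s^*$ so that $d(s^*,r_i)$ is proportional to $d_{i,i}$. Throughout I take $\Noise=0$ so that only ratios matter.

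For the IC bound I would first show the chain is \emph{robustly feasible with IC at uniform power}. Since dropping transmitters only lowers interference, it suffices that the full set is decodable by successive cancellation when all links use $P_{\max}$: at $r_i$, the receiver decodes and cancels every signal stronger than $\ell_i$ (including $\ell^*$, whose $P_{\max}$ signal is the strongest at $r_i$ by the distance calibration) and then decodes $\ell_i$. Hence each $\ell_i$ succeeds in the IC setting \emph{regardless} of the actions of the other links, so transmitting is a dominant action and any $\epsilon$-regret history has $p_i\ge 1-\epsilon$ for all $i$; by Lemma~\ref{lem:attempts} this gives $\NoRegret^{IC}_{\min}(L)\ge m(1-\epsilon)=\Omega(\log\Delta)$.

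The heart of the argument, and what forces \emph{every} PIC pure Nash to be bad rather than just exhibiting one, is the observation that with IC a link's own success is \emph{non-monotone} in its own power. If $r^*$ hears $\ell_X$ with fixed received strength $I$, then $s^*$ succeeds at any intermediate power $P^*\le I/\beta$ (where $r^*$ first decodes and cancels $\ell_X$ and then decodes the now interference-free $s^*$) but \emph{fails} for $P^*\in(I/\beta,\beta I)$ (where $s^*$ is decoded first against an un-cancelled $\ell_X$, giving SINR $P^*/I<\beta$). Choosing $\beta\le I<\beta^2$ and $I/\beta<P_{\max}<\beta I$, the blocker's entire ``success window'' is $[1,I/\beta]$ and it \emph{fails at $P_{\max}$}; since $\ell_X$ is short and always transmits, this preference is independent of the rest of the network, so in every pure Nash $\ell^*$ transmits at a window power. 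The distance calibration then makes every window power $P^*\in[1,I/\beta]$ land $\ell^*$'s received signal at each $r_i$ in the ``stuck'' band $P_{i,i}/\beta<P^*g_i<P_{i,i}$ (with $g_i=d(s^*,r_i)^{-\alpha}$): too strong for $\ell_i$ to decode in its presence, too weak for $r_i$ to cancel it, and — because this holds across the whole window and for every power $\ell_i$ itself might pick — leaving $\ell_i$ no escape. Thus in every PIC pure Nash the entire chain fails and only $\ell^*,\ell_X$ succeed, so the value is $O(1)$. Combining with the previous paragraph gives best-pure-Nash-with-PIC $\le O(1)\le O(1/\log\Delta)\cdot\NoRegret^{IC}_{\min}(L)$, the claimed $\Omega(\log\Delta)$ gap.

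I expect the main obstacle to be the simultaneous calibration of a single intermediate power range of $\ell^*$ so that (i) at $P_{\max}$, $\ell^*$ is the \emph{strongest} and hence cancellable signal at every $r_i$ (needed for IC robustness), yet (ii) throughout its PIC success window $\ell^*$ stays in the stuck band $(P_{i,i}/\beta,\,P_{i,i})$ at \emph{every} $r_i$ at once (needed to kill the whole chain), and (iii) no chain link can rescue itself by its own power control against a stuck $\ell^*$. Non-emptiness of the stuck band forces $I<\beta^2$, and covering it by $[1,I/\beta]$ at every receiver is exactly why the nested geometry with comparable $P_{i,i}$ and $d(s^*,r_i)\propto d_{i,i}$ is required; verifying (i)--(iii) reduces to a short list of inequalities in $\beta,\alpha,P_{\max}$ and the chosen distances, entirely analogous to the calculations in Lemma~\ref{lemma:PIC_pcunilower}.
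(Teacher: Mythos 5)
Your outline agrees with the paper's (an exponential chain of $\Omega(\log\Delta)$ links that is robustly IC-feasible at uniform powers, plus a blocker gadget that every PIC pure Nash activates in a chain-killing mode), but the single-blocker calibration at the heart of your PIC argument is not merely an ``obstacle to verify'' --- it is unsatisfiable, for exactly the tension you flag in (i)--(ii). With $\Noise=0$ the model is scale-invariant, so let $\rho_i = d(s^*,r_i)^{-\alpha}\, d_{i,i}^{\alpha}$ be the blocker-to-link gain ratio at $r_i$. For your stuck band to trap $\ell_i$ at chain power $P_i=1$ while $\ell^*$ sits at a window power $P^*\ge 1$, receiver $r_i$ must be unable to decode (hence cancel) $s^*$ against $\ell_i$'s own signal, which forces $P^*\rho_i < \beta$, hence $\rho_i < \beta$. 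But your IC bound needs $r_i$ to decode and cancel $\ell^*$'s $P_{\max}$-signal at uniform powers, i.e.\ $P_{\max}\rho_i \ge \beta\,(P_{\max} + \text{other interference}\cdot d_{i,i}^{\alpha})$, hence $\rho_i \ge \beta$. No placement of $s^*$ satisfies both, for any $\alpha,\beta,P_{\max}$; the scale-invariance you exploit at $r^*$ (non-monotonicity of $\ell^*$'s success in its own power) cuts against you at the chain receivers, where the blocker-to-own ratio is the same whether everyone is at power $1$ or at $P_{\max}$. There is a second, independent hole: $\ell_X$'s power is not pinned. Being a short link, $\ell_X$ succeeds at every power level, so there exist pure Nash profiles where it plays its lowest power; then $I$ shrinks, the window $[1, I/\beta]$ can be empty, $\ell^*$ goes silent, and the entire chain succeeds --- a PIC pure Nash of value $\Omega(\log\Delta)$, contradicting the claim about the \emph{best} pure Nash. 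Trying to keep the window nonempty for all powers of $\ell_X$ forces $d(s_X,r^*)$-gains $\ge\beta$, which then makes $\ell^*$ succeed at uniform powers in the IC setting and re-triggers the first contradiction.

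The paper escapes both problems by changing the \emph{set} of active interferers between the two settings rather than one blocker's power level. It uses two blocker \emph{pairs} ($\ell_1,\ell_2$ and $\ell_3,\ell_4$, with $r_2$ equidistant from $s_1,s_2$, and similarly for the second pair). Without power control, the second member of each pair is regret-suppressed ($P_{1,2}=P_{2,2}$ and $\beta>1$), so in the IC setting only $s_1,s_3$ interfere with the chain and they are cancellable at the co-located chain receivers. With PC$+$IC, each pair's pure Nash power pattern is exactly $\{1,2\}$ (the second link succeeds either by over-powering its partner or by cancelling it first), and the distances ($\sqrt{10}$ versus $\sqrt{20}$ to the origin) are calibrated so that \emph{every} such pattern places two equal-strength signals ($1/10 = 2/20$) among the strongest at the chain receivers. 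Since $\beta>1$, successive cancellation jams at an exact tie rather than inside a band, which is what makes the blocking robust simultaneously over the blockers' Nash power patterns and over every power the chain links might choose --- precisely the quantification your one-dimensional window cannot cover. If you want to salvage your $\ell_X$-style pinning idea, it must pin the joint \emph{profile} of an interferer pair so as to force a received-power tie, not a single link's power into an interval.
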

\def\APPENDPICIC{
\begin{proof}
Consider the following $m$-transmitters network for $m=\Omega(\log \Delta)$ described in Fig. \ref{fig:nash_all}(c). There is a set of $m-4$ receivers $R'=\{r_5, \ldots, r_m\}$ located at $(0,0)$.
Their corresponding transmitters $s_i$ are located at $(\sqrt{10}(1+2^{i/2}),0)$ for every $i \in \{5, \ldots, m\}$. The remaining first links are located as follows: $s_1$ is located at $(-3,1)$, $s_2$ is located at $(-3,-1)$, the receiver $r_2$ is located in the middle, between $s_1$ and $s_2$ and the receiver $r_1$ is located on top of $s_1$ at $(-3,2)$. The receiver $r_4$ is located at $(\sqrt{19},0)$, its transmitter $s_4$ is located at $(\sqrt{19},-1)$. Finally, the transmitter $s_3$ is located at $(\sqrt{19},1)$ and its receiver $r_3$ is located at $(\sqrt{19},2)$.
Let $\beta=1.5, \Noise=0, \alpha=2, P_{\max}=2$.

We begin by considering the PIC setting and showing that in this case there exists a unique Nash in which only $s_1,...,s_4$ are transmitting. Hence, the total value of
of any pure Nash is $4$.
First, observe that $s_1$ and $s_3$ always succeed even if they transmit with power $1$ and all other links transmit with power $P_{\max}$. For example, for $s_1$ we get that the received signal strength is $1$, the amount of interference from $s_2$ is at most $P_{2,1}=2/9$ and $P_{3,1}+P_{4,1}\leq 4/20$ in addition the amount of interference from all other transmitters $s_5,...,s_m$ is at most $1/4^4$. Hence $1 \geq \beta(2/9+2/20+1/4^4)$. Next, note that also $s_2$ and $s_4$ always succeed. Consider $s_2$ for example. There are two cases. If $s_1$ transmits with power $P_1=1$ then with interference cancellation, $s_2$ can succeed if it transmits with power $2$. This holds since in this case $P_{2,2}=2$, $P_{1,2}=1$, $\sum_{i \geq 3} P_{i}\leq 4/20+1/4^4$, hence
$r_2$ can successfully decode $s_2$ as $P_{2,2} \geq \beta(1+21/80)$. Alternatively, if $s_1$ transmits with $P_1=2$ and $P_2=1$ then $r_2$ can successfully cancel $s_1$ (by the same computation as above) and subsequently it can decode $s_2$ since in this case the received signal strength of $s_2$ is $1 \geq \beta(4/20+1/4^4)$. By the same argumentation we also have that $s_4$ can always succeed as well. Note that any Nash solution has the following structure: exactly one of the transmitters $s_1$ and $s_2$ transmit with power $1$ and the other with power $2$, and exactly one of the transmitters $s_3$ and $s_4$ transmit with power $1$ and the other with power $2$. We now show that $s_i$ always fails for every $i \geq 5$. Consider some pure Nash solution, then by the above, $s_1,...,s_4$ are active, without loss of generality, let $s_1$ be transmitting with power $1$ and $s_3$ be transmitting with power $2$. It then holds that the received
signal strength of transmitter $s_1$ at $r_i$ is $P_{1,i}=1/10$ and the received signal strength of $s_3$ is $P_{3,i}=2/20$. Hence, $P_{1,i}=P_{3,i}$ and in addition, these signals are stronger than any other signal, i.e., $P_{3,i}>P_{\max}/(10\cdot 2^{2j})$ for every $j \geq 5$. Hence, $r_i$ cannot cancel the strongest signal (as $P_{1,i} < \beta \cdot P_{3,i}$) and in particular it cannot decode its intended message. Concluding that any Nash solution in the IC setting consists of exactly $4$ links.
\par We proceed by considering the worst no-regret solution in the IC setting (without power control). Let $p_i$ be the fraction of time that $s_i$ transmits in an IC history. Note that since the links of $s_1$ and $s_3$ are sufficiently short they always succeed if they attempt to transmit hence in any $\epsilon$-regret history, they transmit at least $p_1, p_3 \geq 1-\epsilon$ fraction of the time. Consider link $\ell_2$ and note that it cannot succeed if $s_1$ transmits since $P_{1,2}=P_{2,2}$ hence if $s_2$ transmitted every time it would have average utility at most $-(1-\epsilon) + \epsilon = -1 + 2\epsilon < 0$ (since $\epsilon < 1/2$), while if it never transmitted it would have average utility $0$.  Thus its average utility is at least $-\epsilon$.  Since it can succeed only an $\epsilon$ fraction of the time (when link $1$ is not transmitting), we have that $\epsilon - (p_2 - \epsilon) \geq -\epsilon$ and thus $p_2 \leq 3\epsilon$. The same holds for $p_4$. Finally, consider some link $\ell_j$ for $j \geq 5$. Note that $\ell_j$ can always succeed if $s_2$ and $s_4$ are not transmitting. This holds since $P_{1,j} \geq \beta (P_{3,j}+1/32)$ (i.e., hence, $s_j$ can cancel the strongest signal of $s_1$), $P_{3,j} \geq \beta \cdot 1/32$ (i.e., it can successfully cancel the second strongest signal) and in addition by the structure of the exponential transmitters chain, the remaining of cancelations of $s_5,...,s_j$ are successful.
Hence, if $s_j$ transmitted every time it would have average utility at $1-6\epsilon-6\epsilon$ and therefore it transmits at least $p_5 \geq 1-13\epsilon$ fraction of the time. Since it holds for every $\ell_j$ for $j \geq 5$ we have that the total value in any $\epsilon$-regret history is at least $\NoRegret^{IC}_{\min}=\sum_{i\geq 5} p_i =\Omega(\log \Delta)-o(1)$. Since the total value of the unique pure Nash in the PIC setting is 4, the lemma follows.
\QED \end{proof}
}
\APPENDPICIC

\begin{corollary}
There exists a set of links $L$ satisfying that $\NoRegret^{PIC}_{\min}(L) \leq (c/\log \Delta)\cdot \NoRegret^{IC}_{\min}(L)$.
\end{corollary}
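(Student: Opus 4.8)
The plan is to read this corollary off directly from Theorem~\ref{theorem:IC_pc_lower}, reusing the very same network $L$ built in its proof. The one conceptual point is that $\NoRegret^{PIC}_{\min}(L)$, being a minimum taken over \emph{all} $\epsilon$-regret PIC histories, is bounded above by the value of any single PIC no-regret solution we can name; and every pure Nash equilibrium is such a solution. So it suffices to exhibit one cheap PIC pure Nash and one expensive lower bound on the worst IC history, both of which are already produced inside the proof of Theorem~\ref{theorem:IC_pc_lower}.

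Concretely, first I would recall that on the $m$-link network of Fig.~\ref{fig:nash_all}(c) (with $m=\Omega(\log\Delta)$) the proof of Theorem~\ref{theorem:IC_pc_lower} establishes a PIC pure Nash in which exactly $\ell_1,\dots,\ell_4$ transmit and succeed, of value $4$. Since a pure Nash is in particular an $\epsilon$-regret history in the PIC setting, this gives $\NoRegret^{PIC}_{\min}(L)\le 4 = O(1)$. Next I would invoke the second half of that proof, which shows that in \emph{every} $\epsilon$-regret history in the IC setting (without power control) each of the $m-4$ links $\ell_j$ with $j\ge 5$ succeeds a $1-O(\epsilon)$ fraction of the time; summing yields $\NoRegret^{IC}_{\min}(L)=\Omega(\log\Delta)$.

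Combining the two estimates gives
\[
\frac{\NoRegret^{PIC}_{\min}(L)}{\NoRegret^{IC}_{\min}(L)}\;\le\;\frac{4}{\Omega(\log\Delta)}\;=\;O\!\left(\frac{1}{\log\Delta}\right),
\]
which is precisely $\NoRegret^{PIC}_{\min}(L)\le (c/\log\Delta)\cdot\NoRegret^{IC}_{\min}(L)$ for an appropriate constant $c$. There is essentially no obstacle to overcome once Theorem~\ref{theorem:IC_pc_lower} is in hand; the only care needed is to make explicit that the corollary compares the \emph{worst} PIC no-regret value to the \emph{worst} IC no-regret value, and that the former is pinned down by the single (indeed unique) PIC pure Nash exactly because $\NoRegret^{PIC}_{\min}$ ranges over a class of histories that contains all pure Nash equilibria.
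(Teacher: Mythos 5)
Your proposal is correct and matches the paper's (implicit) argument exactly: the corollary is read off from the construction in Theorem~\ref{theorem:IC_pc_lower}, using the observation that a pure Nash equilibrium is itself an $\epsilon$-regret history, so the value-$4$ PIC pure Nash upper-bounds $\NoRegret^{PIC}_{\min}(L)$ while the same proof's IC analysis gives $\NoRegret^{IC}_{\min}(L)=\Omega(\log\Delta)$. Your closing remark correctly pins down the one point needing care (worst-case PIC vs.\ worst-case IC), so nothing is missing.
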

As in the previous sections, we show that our examples are essentially tight.
\begin{theorem}
\label{lem:nash_ic_pc_upper}
For every set of links $L$ it holds that there exists a constant $c\geq 1$ such that \\
(a) $\NoRegret_{\min}^{PIC}(L) \geq \NoRegret_{\max}(L)/c$.\\
(b) $\NoRegret_{\min}^{PIC}(L) \geq \NoRegret_{\max}^{PC}(L)/c$.\\
(c) $\NoRegret_{\min}^{PIC}(L)  \geq \NoRegret_{\max}^{IC}(L)/(c \log \Delta)$.
\end{theorem}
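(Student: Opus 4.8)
The plan is to run, for all three parts, the same two–sided affectance estimate used in the proofs of Theorems~\ref{thm:nash_ic_upper} and~\ref{thm:nash_pc_upper}. Fix a weaker-model $\epsilon$-regret history attaining the relevant maximum and let $p_i$ be its attempt frequencies, and fix a PIC history attaining $\NoRegret^{PIC}_{\min}(L)$ with attempt frequencies $q_i$; by Lemma~\ref{lem:attempts} it suffices to prove $\sum_i p_i \le (\text{factor})\cdot\sum_i q_i$. Since the weaker history has value $\Theta(\sum_i p_i)$, there is a set $A\subseteq L$ with $|A|=\Omega(\sum_i p_i)$ that is feasible in the weaker model. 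Put $B=\{i:q_i\ge 1/2\}$; if $|A\cap B|\ge|A|/2$ then $\sum_i p_i\le O(|A|)\le O(|B|)\le O(\sum_i q_i)$ and we are done, so assume $A'=A\setminus B$ has $|A'|>|A|/2$, and let $\widehat A\subseteq A'$ be an amenable subset with $|\widehat A|\ge|A|/4$ (Fact~\ref{fc:amenable}(a)). The whole argument then reduces to squeezing $\sum_{j\in\widehat A}\sum_{i\in L} q_i\,a_i(j)$ between an upper bound $O(\sum_i q_i)$ from amenability (Fact~\ref{fc:amenable}(b)) and a lower bound $\Omega(|\widehat A|)$ from the regret condition.

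The one genuinely new ingredient, needed in every part because the $q$-history uses cancellation, is a \emph{failure lemma}: if a link $j$ transmits at the power $P_j$ it would deviate to and still fails \emph{despite} IC against the other transmitters, then its total affectance is $\Omega(1)$. To see this, order the signals at $r_j$ by received strength and let $s_k$ be the strongest signal at which cancellation gets stuck; necessarily $P_{k,j}\ge P_{j,j}$. If $s_k\ne s_j$ then a single interferer already gives $a^P_k(j)=\min\{1,\beta P_{k,j}/P_{j,j}\}=1$; if $s_k=s_j$ then being stuck means $\sum_{\ell:P_{\ell,j}<P_{j,j}}P_{\ell,j}>P_{j,j}/\beta$, so $\sum_\ell a^P_\ell(j)\ge\min\{1,\sum_\ell\beta P_{\ell,j}/P_{j,j}\}\ge 1$. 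Combined with the no-regret bound (since $q_j<1/2$ the fixed deviation to $P_j$ has utility at most $\frac12+\epsilon$, hence succeeds in at most $\frac34+\frac{\epsilon}{2}$ of the rounds), this reproduces $\sum_{j\in\widehat A}\sum_i q_i a_i(j)\ge\Omega(|\widehat A|)$ exactly as in Theorem~\ref{thm:nash_pc_upper}, now in the presence of IC.

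Parts (a) and (c) then follow quickly. For (a) the weaker model is vanilla, so $A$ is feasible with uniform powers, $\widehat A$ is uniform-amenable, and each $j$ deviates to $P_{\max}$; since every interferer uses power at most $P_{\max}$ while $j$ uses $P_{\max}$, its actual affectance is dominated by the uniform affectance, which Fact~\ref{fc:amenable}(b) bounds by $O(1)$ per interferer, giving the constant factor. For (c) the weaker model is IC, so $A$ is only IC-feasible and may contain \emph{no} large uniform-feasible subset (e.g.\ an exponential chain), and the uniform Fact~\ref{fc:amenable}(b) is unavailable directly. Here I first invoke Lemma~\ref{lem:ic_no_ic_feas} to extract $A^{\circ}\subseteq A$ that is feasible \emph{without} IC with $|A^{\circ}|\ge|A|/O(\log\Delta)$, and then apply the part-(a) argument to $A^{\circ}$; the unique $\log\Delta$ loss enters precisely at this extraction, producing the factor $1/(c\log\Delta)$ and matching the lower bound of Theorem~\ref{theorem:IC_pc_lower}.

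The main obstacle is part (b), where the weaker model is power control: $A$ is feasible only under a non-uniform assignment $P_A$ and, as the price-of-anarchy example of Fig.~\ref{fig:nash_pc} shows, it may have no large uniform-feasible subset, so reducing to (a) would suffer a spurious $\log\Delta$. One must therefore run the estimate with the power-control affectance $a^{P_A}$ (using the arbitrary-power extensions of Fact~\ref{fc:amenable}), letting each $j$ deviate to $P_A(j)$. The danger is real: in a power-controlled history a single high-power transmitter can have affectance $1$ on many low-power links of $\widehat A$ (this is exactly what $\ell^*$ does in Fig.~\ref{fig:nash_pc}), which would destroy the amenability upper bound. The point that rescues the constant factor — and the crux of the proof — is that \emph{interference cancellation neutralizes precisely this interference}: a dominant signal at $r_j$ is the strongest one and is therefore cancellable, so no single interferer can cause $j$ to fail by itself, and a failure despite IC reflects genuine local congestion rather than one overpowering neighbor. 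The delicate step is to turn this qualitative statement into a quantitative one, namely that the power-control amenability of $\widehat A$ bounds the number of links of $\widehat A$ that any fixed interferer can help block by $O(1)$, so that $\sum_{j\in\widehat A}a^{P_A}_i(j)=O(1)$ for every $i$; this is exactly the distinction between the constant loss against PC in (b) and the $\log\Delta$ loss against IC in (c), and it is consistent with the constant-factor example of Lemma~\ref{lemma:PIC_pcunilower}.
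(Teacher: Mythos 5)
Your parts (a) and (c) are correct and follow the paper's own route. For (c) the paper does exactly what you propose: it extracts, via Lemma~\ref{lem:ic_no_ic_feas}, a subset of the IC-feasible set $A$ of size $|A|/O(\log \Delta)$ that is feasible without IC at uniform power, and then runs the two-sided affectance sandwich with each $j \in \widehat{A}$ deviating to $P_{\max}$, noting that history powers $P_i \le P_{\max}$ only lower the actual affectance below the uniform one bounded by Fact~\ref{fc:amenable}(b). Your ``failure lemma'' is a correct but more elaborate version of the one-line observation the paper uses: since $\beta > 1$, plain feasibility implies the desired signal is the unique strongest and hence implies IC success, so failure \emph{despite} IC implies plain failure, i.e.\ total affectance above $1$; your stuck-at-$s_k$ versus stuck-at-$s_j$ case analysis proves the same thing.

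Part (b) is where your proposal has a genuine gap. The paper disposes of (b) by a one-line reduction to part (a) together with Theorem~\ref{thm:nash_pc_upper}; it never reruns the affectance machinery with a non-uniform assignment. You instead attempt a direct argument with the power-control affectance $a^{P_A}$, and the step you yourself flag as ``delicate'' --- that $\sum_{j \in \widehat{A}} a^{P_A}_i(j) = O(1)$ for every fixed interferer $i$ --- is not proven, and as stated it is \emph{false}: take $\widehat{A}$ to be an exponential chain feasible under exponentially increasing powers $P_A$ (the Fig.~\ref{fig:nash_pc} configuration) and $i$ a full-power transmitter near the low-power receivers; then $a^{P_A}_i(j) = 1$ for $\Omega(\log \Delta)$ links $j$, exactly because Fact~\ref{fc:amenable}(b) does not extend to arbitrary powers (this failure is why the PC price of anarchy in Corollary~\ref{cor:pmax_price} is $\Theta(\log \Delta)$, not $O(1)$). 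Your proposed rescue --- ``IC neutralizes the dominant interferer'' --- cannot repair this at the level of plain affectance, since $a^{P_A}_i(j)$ is a purely geometric quantity unchanged by whether $r_j$ can cancel $s_i$; cancellability affects only which failures can occur, i.e.\ the lower-bound side of your sandwich, not the amenability upper bound. To carry out your plan one would have to define an IC-aware affectance (counting only non-cancellable interference), and then prove \emph{both} an amenability bound and a failure bound for that new quantity --- neither is supplied, so (b) does not close. I will add that your instinct that (b) is not routine has some merit: since $\NoRegret^{PC}_{\max}(L)$ can exceed $\NoRegret_{\max}(L)$ by $\Theta(\log \Delta)$ on exponential chains (a good PC pure Nash of value $m$ versus uniform-power optimum $O(1)$), the paper's reduction needs more than the literal statements of (a) and Theorem~\ref{thm:nash_pc_upper}; but in any case your attempt does not supply the missing argument either.
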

\def\APPANDPICUPPER{
\begin{proof}
Part (a) follows by the argumentation as in Lemma \ref{lem:ic_no_ic_feas}
and Part (b) follows trivially by Part (a) and by Theorem~\ref{thm:nash_pc_upper}.
\par Consider part (c).
Fix an arbitrary $\epsilon$-regret history with IC and uniform powers and let $p_i$ be the fraction of rounds in which $s_i$ attempts to transmit.  Similarly, fix an arbitrary $\epsilon$-regret history with PIC, and let $q_i$ be the fraction of rounds in which $s_i$ attempts to transmit (at any nonzero power).  By Lemma~\ref{lem:attempts}, it is sufficient to prove that $\sum_{i \in L} p_i \leq O(\sum_{i \in L} q_i)$.
Note that since the average number of successful connections in the best history of the IC case is $\sum_{i \in L} p_i$, there must exist some set of connections $A \subseteq L$ that transmitted successfully in some round $t \in [1, T]$ such that $|A| \geq \sum_{i \in L} p_i$ and $A$ is feasible when all transmitters transmit with power $1$ and employing IC. Then by Cl. \ref{lem:ic_no_ic_feas}, there exists a subset $\widetilde{A} \subseteq A$ of cardinality $|\widetilde{A}|\geq |A|/O(\log \Delta)$ that is feasible without IC and with uniform power level $P_{\max}$.
Let $B = \{i : q_i \geq 1/2\}$ and let $A' = \widetilde{A} \setminus B$.  If $|B \cap \widetilde{A}| \geq |\widetilde{A}| / 2$ then we are done, since then
$$O(\log \Delta) \cdot \sum_{i \in L} p_i \leq |\widetilde{A}| \leq 2 |B| \leq 2\sum_{i} q_i$$ as required.
So without loss of generality we will assume that $|B \cap \widetilde{A}| < |\widetilde{A}|/2$, and thus that $|A'| > |\widetilde{A}|/2$.  Note that $A'$ is a subset of $\widetilde{A}$, and so it is feasible in the IC setting.
Now let $\widehat{A} = \{i : \sum_{j \in A'} a_i(j) \leq 2\}$ be an amenable subset of $A'$.  By Fact \ref{fc:amenable}(a), it holds that $\widehat{A} \geq |A'|/ 2 \geq |A| / 4$.  By Fact \ref{fc:amenable}(b) then implies that for any link $\ell_i \in L$, its total affectance on $\widehat{A}$ is small: $a_{i}(\widehat{A})=\sum_{j \in \widehat{A}} a_i(j) \leq c'$ for some constant $c\geq 0$.
Note that Fact \ref{fc:amenable}(b) holds for the case where all interfering transmitters transmit with the same power $P_{\max}$. Since in the power control setting $P_j \leq P_{\max}$, the actual affectance can only be lower.
We have that
\begin{equation} \label{eq:PIC1}
\sum_{i \in L} \sum_{j \in \widehat{A}} q_i a_i(j) \leq c' \cdot \left(\sum_{i \in L} q_i\right),
\end{equation}
where $\sum_{j \in \widehat{A}} q_i a_i(j)$ is the average affectance in the PIC history if all links that attempt to transmit, transmit with full power. The inequality follow by Fact \ref{fc:amenable}(b).
On the other hand, we know that the $q_i$ values correspond to the worst $\epsilon$-regret history in the PIC setting.
Consider some $j \in \widehat{A}$, hence $q_j < 1/2$ (i.e., the average utility of link $\ell_j$ is at most $1/2$). Let $y_j$ be the fraction of time $s_j$ would have succeeded has it transmitted in every round with full power $P_{\max}$ and in the IC setting. Since the average utility of the best single action is at most $1/2+\epsilon$ in particular the average utility of transmitting with full power $P_{\max}$ is at most $1/2+\epsilon$ as well. It therefore holds that  $y_j-(1-y_j)\leq 1/2+\epsilon$ or the that $y_j \leq \frac34 + \frac{\epsilon}{2}$.
In other words, in at least $1-y_j=\frac14 - \frac{\epsilon}{2}$ fraction of the rounds the affectance of the other links on the link $\ell_j$ must be at least $1$ (or else $j$ could succeed in those rounds).
Thus the average affectance (over the length of the history) on $\ell_j$ (assuming it transmits with full power $P_{\max}$) is at least $\frac14 - \frac{\epsilon}{2}$.  Summing over all $j \in \widehat{A}$, to get that
\begin{equation} \label{eq:PIC2}
\sum_{j \in \widehat{A}} \sum_{i \in L} q_i a_i(j) \geq \frac{1-2\epsilon}{4} \geq \Omega(|\widehat{A}|).
\end{equation}
Combining equations~(\ref{eq:PIC1}) and~(\ref{eq:PIC2}) (and switching the order of summations) implies that $|\widehat{A}| \leq O(\sum_{i \in L} q_i)$.  Since $|\widehat{A}| \geq |A|/O(\log \Delta) \geq \sum_{i \in L} p_i/O(\log \Delta)$ as desired.
\QED \end{proof}
}
\APPANDPICUPPER
Finally, as a direct consequences of our result, we obtain a tight bound for the price of total anarchy in the PIC setting.
\begin{corollary}
\label{cor:pic_price}
For every set of links $L$  it holds that the price of total anarchy with PIC is $\Theta(\log (\Delta \cdot P_{\max}))$.
\end{corollary}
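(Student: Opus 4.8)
The plan is to prove the two directions separately, mirroring the structure of Corollary~\ref{cor:nash_ic_approx} for the upper bound and giving an explicit construction for the lower bound. For the upper bound $|OPT^{PIC}|/\NoRegret^{PIC}_{\min}(L) = O(\log(\Delta\cdot P_{\max}))$, I would first anchor $\NoRegret^{PIC}_{\min}(L)$ to the uniform optimum. By Theorem~\ref{lem:nash_ic_pc_upper}(a) we have $\NoRegret^{PIC}_{\min}(L) \ge \NoRegret_{\max}(L)/c \ge \NoRegret_{\min}(L)/c$, and by Theorem~2 of~\cite{Asgeirsson11} the latter is $\Omega(|OPT|)$, so $\NoRegret^{PIC}_{\min}(L) = \Omega(|OPT|)$. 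This reduces the entire upper bound to the purely combinatorial capacity-gain statement
\[
|OPT^{PIC}| \le O(\log(\Delta\cdot P_{\max}))\cdot |OPT|,
\]
that is, that PIC beats uniform power by at most a $\log(\Delta P_{\max})$ factor in capacity.

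To prove this capacity-gain bound I would start from an optimal PIC configuration on $OPT^{PIC}$ with power vector $P^*$ and invoke the cancellation-counting behind Lemma~\ref{lem:ic_no_ic_feas}: every receiver cancels at most $\log_\beta(\Delta^\alpha P_{\max})$ signals, so one can extract a subset feasible \emph{without} IC of size at least $|OPT^{PIC}|/O(\log(\Delta P_{\max}))$. The subtlety is that this residual set is feasible only under the non-uniform powers $P^*$, and naively converting a power-controlled feasible set into a uniform-power one can cost a further $\Theta(\log\Delta)$ factor (this is exactly the source of the $\Theta(\log\Delta)$ price of anarchy in Corollary~\ref{cor:pmax_price}), which would yield a spurious $\log^2$ bound. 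The point I would exploit is that interference cancellation and power control both draw on the \emph{same} resource — the dynamic range of received powers, which lies in $[\Delta^{-\alpha},P_{\max}]$ — so I would bucket the links of $OPT^{PIC}$ by their intended received level $P^*_{v,v}$ into $O(\log(\Delta P_{\max}))$ classes and argue that a single class is simultaneously IC-free and uniformizable, charging the cancellation depth and the power spread to one logarithmic budget rather than multiplying them.

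Making this single-pass charging rigorous is the main obstacle: within a received-power bucket the receivers may still have relied on cancelling interferers that reappear once IC is stripped, so the bucketing must be interleaved with the cancellation-count of Lemma~\ref{lem:ic_no_ic_feas} rather than applied after it. Once the capacity-gain bound holds, the upper bound follows immediately from the displayed inequality together with $\NoRegret^{PIC}_{\min}(L)=\Omega(|OPT|)$.

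For the matching lower bound $|OPT^{PIC}|/\NoRegret^{PIC}_{\min}(L) = \Omega(\log(\Delta P_{\max}))$ I would combine the two gadgets already used in the paper. First, take the co-located exponential chain of Theorem~\ref{theorem:IC_pc_lower} (Fig.~\ref{fig:nash_all}(c)) but stretch its power levels across the full range $[1,P_{\max}]$, so that $\Omega(\log(\Delta P_{\max}))$ chain links are simultaneously decodable with power control and interference cancellation, giving $|OPT^{PIC}| = \Omega(\log(\Delta P_{\max}))$. Then, as in Fig.~\ref{fig:poa_ic}, add two short ``blocker'' links placed so that they always succeed regardless of the other transmissions, yet whenever both transmit no chain link can decode because the two dominant received signals are of equal strength and hence uncancellable. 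These blockers form a pure Nash equilibrium (each chain sender strictly prefers power $0$), hence a no-regret history, of value $O(1)$; thus $\NoRegret^{PIC}_{\min}(L)=O(1)$ and the ratio is $\Omega(\log(\Delta P_{\max}))$, which matches the upper bound and establishes tightness.
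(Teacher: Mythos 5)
Your lower bound is essentially identical to the paper's: the paper modifies the network of Fig.~\ref{fig:poa_ic} into an exponential chain of $m=\Omega(\log(\Delta\cdot P_{\max}))$ links (locations \emph{and} powers carefully set) whose receivers sit at the origin, plus two short blocker links equidistant from the origin whose equal-strength signals make the cancellation sequence impossible; this yields a pure Nash (hence an $\epsilon$-regret history) of value $2+o(1)$ against $|OPT^{PIC}|=\Omega(\log(\Delta\cdot P_{\max}))$, exactly as you describe. Your upper-bound skeleton also coincides with the paper's: $\NoRegret^{PIC}_{\min}(L)\geq\NoRegret_{\min}(L)/c$ via Theorem~\ref{lem:nash_ic_pc_upper}(a), then $\NoRegret_{\min}(L)\geq |OPT|/c'$ via Theorem~2 of \cite{Asgeirsson11}, reducing everything to the capacity comparison $|OPT^{PIC}|\leq O(\log(\Delta\cdot P_{\max}))\cdot|OPT|$. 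But at precisely this step your proposal stops being a proof. The paper closes it by a direct application of Lemma~\ref{lem:ic_no_ic_feas}: the cancellation-depth bound $O\left(\log_{\beta}(\Delta^{\alpha}\cdot P_{\max})\right)$ already charges the geometric range and the power range to a \emph{single} logarithm, because the received powers along any cancellation chain grow geometrically (by factor $\beta$) inside the fixed dynamic range $[\Delta^{-\alpha},P_{\max}]$ — there is no separate ``uniformization pass'' anywhere in the paper's argument. You instead reject this step and substitute a bucketing scheme that you yourself concede you cannot make rigorous (``the main obstacle''), so the upper-bound half of your submission is a plan, not a proof.

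To be fair, the subtlety you flag is a legitimate reading of the paper: Lemma~\ref{lem:ic_no_ic_feas} outputs a subset feasible without IC \emph{with the same (possibly non-uniform) powers}, and the paper is terse about why this suffices when $|OPT|$ is compared through the uniform-power guarantee of \cite{Asgeirsson11}; naive composition of ``strip IC'' with ``uniformize powers'' would indeed give $\log(\Delta P_{\max})\cdot\log\Delta$. But your proposed repair does not work as sketched. Bucketing $OPT^{PIC}$ by the intended received level $P^{*}_{v,v}$ controls neither uniformizability nor the cancellation structure: the interference at $r_v$ is governed by the cross terms $P^{*}_{w}/d_{w,v}^{\alpha}$, i.e.\ by transmit powers and cross distances, which an equal-received-level class leaves completely unconstrained (a weakly transmitting short sender near a long link's receiver is harmless under $P^{*}$ but lethal once raised to $P_{\max}$), and, as you note yourself, receivers inside a class may have relied on cancelling interferers outside it, so the class is not IC-free after stripping. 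The interleaved ``single-pass charging'' you invoke is exactly the content of the upper bound, and you leave it open. Net assessment: your lower bound matches the paper and is correct; your upper bound reproduces the paper's reduction but has a genuine gap at its central combinatorial step, which the paper resolves (if tersely) by Lemma~\ref{lem:ic_no_ic_feas} alone, in the same way it does in Corollary~\ref{cor:nash_ic_approx}.
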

\def\APENNEDPICCOR{
\begin{proof}
Let $OPT \subseteq L$ denote some optimal solution without PIC, i.e., the set of transmitters forming a maximum $\beta$-feasible set, and let $OPT^{PIC} \subseteq L$ denote some optimal solution with PIC.
We first show that
$|OPT^{PIC}|/\mathcal{N}^{PIC}_{\min}(L))=O(\log (\Delta \cdot P_{\max}))$. According to Theorem 2 of \cite{Asgeirsson11}, it holds that
$\mathcal{N}_{\min}(L)\geq |OPT|/c'$ for some constant $c'>1$.
Hence, by combining with Theorem~\ref{lem:nash_ic_pc_upper}(a) we get that
\begin{equation*}
\mathcal{N}^{PIC}_{\min}(L) \geq \mathcal{N}_{\min}(L)/c \geq |OPT|/(c \cdot c') \geq |OPT^{PIC}|/O(\log (\Delta \cdot P_{\max})),
\end{equation*}
where the last inequality follows by Lemma \ref{lem:ic_no_ic_feas}.
Finally, we show that this is tight by noting that the example of Fig. \ref{fig:poa_ic} can be easily modified so that there are $m=\Omega(\log \Delta \cdot P_{\max})$ transmitters (the location and powers should be carefully set) whose receivers are positioned at the origin. By placing the additional
transmitters $\widetilde{s}_1$ and $\widetilde{s}_2$
at equidistance from the origin, we get that these two links block the cancellation sequence at the $m$ receivers; in addition
by setting these links to be sufficiently short, we get that in $\widetilde{s}_1$ and $\widetilde{s}_2$ transmit most of the time. Thus, any $\epsilon$-regret history is of total value of $2+o(1/n)$ but $OPT^{PIC}=\Omega(\log (\Delta \cdot P_{\max}))$.
\QED \end{proof}
}
\APENNEDPICCOR
\section{Decreasing the SINR Threshold}
%
%
We begin by showing that in certain cases the ability to successfully decode a message at a lower SINR threshold results in \emph{every} no-regret solution having lower value than \emph{any} no-regret solution at higher $\beta$. For an illustration of such a network, see Fig. \ref{fig:nash_all}(d).

\begin{theorem} \label{thm:beta_nash_lb}
There exists a set of links $L$ and constants $1 < \beta' < \beta$ such that $\NoRegret^{\beta'}_{\max}(L) \leq \NoRegret^{\beta}_{\min}(L)/c$ for some constant $c>1$.
\end{theorem}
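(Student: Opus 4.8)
The plan is to mirror the interference-cancellation lower bound (Theorem~\ref{thm:IC_lower}), with a decrease of the threshold playing the role that IC played there. I would take the four-link network of Figure~\ref{fig:nash_all}(d) with $\Noise = 0$ and $\alpha = 2$, and design the geometry so that $\ell_1$ always succeeds (even at the higher threshold $\beta$), while $\ell_2$ sits in the crucial window: when $s_1$ transmits, $\ell_2$'s $\SINR$ is strictly below $\beta$ but at least $\beta'$. The point is that at threshold $\beta$ link $\ell_2$ is blocked by $\ell_1$ and so, being an $\epsilon$-regret player, almost never transmits, leaving the channel clear for the two valuable links $\ell_3,\ell_4$; whereas at the lower threshold $\beta'$ link $\ell_2$ can now beat $\ell_1$, so it transmits almost always, and its interference at $r_3$ and $r_4$ knocks those links out.

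Concretely, for the $\beta$-analysis (lower-bounding $\NoRegret^{\beta}_{\min}(L)$) I would argue exactly as in Theorem~\ref{thm:IC_lower}: since $\ell_1$ always succeeds, $p_1 \ge 1-\epsilon$; since $\ell_2$ cannot succeed whenever $s_1$ transmits, the $\epsilon$-regret best-response argument forces $p_2 \le 3\epsilon$ and bounds the fraction of slots in which $s_1,s_2$ both transmit by $2\epsilon$; and since $\ell_3,\ell_4$ succeed in any slot where $s_2$ is silent (here I need $\SINR_3 \ge \beta$ under interference from just $s_1$ and $s_4$), they each transmit and succeed on a $1-O(\epsilon)$ fraction of slots. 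Summing gives $\NoRegret^{\beta}_{\min}(L) \ge 3-O(\epsilon)$. For the $\beta'$-analysis (upper-bounding $\NoRegret^{\beta'}_{\max}(L)$) the effect of the technology flips: $\ell_1$ still always succeeds and now $\ell_2$ also beats $s_1$ at threshold $\beta'$, so $q_1,q_2 \ge 1-\epsilon$ and $s_1,s_2$ transmit together on a $1-O(\epsilon)$ fraction of slots; if the geometry guarantees that $\ell_3$ fails whenever $s_1$ and $s_2$ both transmit \emph{even at threshold $\beta'$} (i.e.\ $\SINR_3 < \beta'$ under interference from $s_1$ and $s_2$), then $\ell_3$ and $\ell_4$ can each succeed on at most an $O(\epsilon)$ fraction, yielding $\NoRegret^{\beta'}_{\max}(L) \le 2+O(\epsilon)$. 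The ratio is then $3/2 - o(1)$, so any $c \le 3/2$ works.

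The main obstacle is the simultaneous geometric realizability of the three $\SINR$ constraints, which is more delicate than in the IC case because lowering $\beta$ helps \emph{every} link, not just $\ell_2$, so the construction cannot rely on the threshold decrease being ``targeted.'' Writing $v_2$ for $\ell_2$'s $\SINR$ when $s_1$ transmits, $u_3$ for $\ell_3$'s $\SINR$ when both $s_1,s_2$ transmit, and $w_3$ for $\ell_3$'s $\SINR$ when only $s_1$ (and $s_4$) transmit, the embedding must achieve $\max\{u_3,1\} < v_2 < w_3$; one then simply picks $\beta' \in (\max\{u_3,1\},\,v_2]$ and $\beta \in (v_2,\,w_3]$, which automatically satisfies $1 < \beta' < \beta$. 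Thus the real work is choosing the planar embedding (the link lengths and the offsets of $r_3,r_4$ from the $s_1$--$s_2$ cluster) so that the interference $s_2$ contributes at $r_3$ is large enough to push $\ell_3$ below $\beta'$ when $s_2$ is active, yet $s_2$'s removal lifts $\ell_3$ back above $\beta$, with the gap $(u_3, w_3)$ wide enough to contain the window $[\beta', \beta]$. Once these inequalities are verified, the $\epsilon$-regret bookkeeping in the previous paragraph is routine (and, as in the IC example, constant noise and arbitrarily many copies can be incorporated by placing distant repetitions of the gadget).
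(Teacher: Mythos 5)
Your proposal is correct in outline, but it takes a genuinely different route from the paper's proof, so it is worth comparing the two mechanisms. The paper uses only \emph{three} links and creates the threshold window for the newly-enabled link out of \emph{noise}, not interference: it sets $\Noise>0$, gives $\ell_3$ length $c=\sqrt{1/(\beta\Noise)}$ so that the noise term alone caps $\ell_3$'s SINR at exactly $\beta$ (hence $\ell_3$ is silent at threshold $\beta$, while with $\beta=4$, $\beta'=101/100$, $a=b=\frac23 c$ it always succeeds at $\beta'$ and then kills both short links, since $a=b$ forces their SINR below $1<\beta'$); this gives values $2-O(\epsilon)$ versus $1+O(\epsilon)$, i.e.\ ratio $2$. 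You instead keep $\Noise=0$ and build the window out of interference ($\ell_2$'s SINR under $s_1$'s interference lies in $[\beta',\beta)$), mirroring the IC gadget, with values $3-O(\epsilon)$ versus $2+O(\epsilon)$ and ratio $3/2$. Your reduction to the chain $\max\{u_3,1\}<v_2<w_3$ is exactly the right bookkeeping, and the embedding you leave open does close; for instance, with $\alpha=2$, take $s_1=(0,0)$, $r_1=(0,1)$, $r_2=(1.22,0)$, $s_2=(2.22,0)$, so $v_2=(1.22)^2\approx 1.49$ and $\SINR_1\geq 1/(1/5.93)-o(1)\approx 5.9$; place $\ell_3,\ell_4$ collinearly on opposite sides at distance $D\gg 1$ with $1/b^2=2.3/D^2$, so that with $I\approx 1/D^2$ one gets $w_3\approx 2.3/1.14\approx 2.0$ and $u_3\approx 2.3/2=1.15$, and then $\beta'=1.3$, $\beta=1.6$ satisfy $u_3<\beta'\leq v_2<\beta\leq w_3$. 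Two comparative remarks. First, the paper's construction buys simplicity (three links, the cap at $\beta$ is immediate) and a better constant ($2$ versus $3/2$), but it exploits precisely the noise-thresholding phenomenon that violates the assumption $P_{vv}\geq 2\beta\Noise$ under which the paper proves the matching upper bound (Lemma~\ref{lem:beta_nash_lb}) -- the paper explicitly notes its example must be ``modified'' to fit that regime -- whereas your zero-noise example satisfies that assumption vacuously, so your lower bound and the upper bound live in the same regime. Second, one small repair to your write-up: define $v_2$ as $\ell_2$'s \emph{worst-case} SINR given that $s_1$ transmits (i.e.\ with $s_3,s_4$ also active) when choosing $\beta'\leq v_2$, since at threshold $\beta'$ you need $\ell_2$ to succeed unconditionally for the regret argument to force $q_2\geq 1-\epsilon$; taking the far cluster distant enough makes this a second-order correction, as you implicitly assume.
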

\def\APPENDBETALOWER{
\begin{proof}
Let $\beta'$ be slightly greater than $1$ (say $101/100$), and let $\beta = 4$.  Let $L$ be as in Fig. \ref{fig:nash_all}(d) with $\alpha = 2$, some small constant noise $\Noise$, and $c = \sqrt{1/(\beta \Noise)}$ and $a=b=\frac23 c$.

We first analyze $\NoRegret^{\beta}_{\min}(L)$.  Fix an $\epsilon$-regret history under threshold $\beta$. Note that the SINR of $\ell_3$ is at most $\beta$, simply due to background noise.  So by making $\beta$ infinitesimally larger, $\ell_3$ can never succeed, and thus it transmits in at most an $\epsilon$ fraction of the times.  Whenever $\ell_3$ does not transmit, $\ell_1$ can succeed no matter what $\ell_2$ does, since the SINR of $\ell_1$ is at least
\begin{equation*}
\frac{1/a^2}{\Noise + \frac{1}{(3a)^2}} = \frac{(9/4)\beta \Noise}{\Noise + (1/4)\beta\Noise} = \frac{9}{2} > \beta.
\end{equation*}
Hence choosing to transmit in every round would get $\ell_1$ utility at least $1-2\epsilon$, and thus it must succeed in at least $1-3\epsilon$ fraction of the times.  The same is true for $\ell_2$ by symmetry.  So we have that in any $\epsilon$-regret history under threshold $\beta$, the average number of successes is at least $2-6\epsilon$.

We now analyze $\NoRegret^{\beta'}_{\max}(L)$.  Fix an $\epsilon$-regret history under threshold $\beta'$.  We first claim that $\ell_3$ can always succeed, no matter what $\ell_1$ and $\ell_2$ do.  This is because its SINR is at least
\begin{equation*}
\frac{1/c^2}{\Noise + \frac{2}{(a+b)^2 + c^2}} = \frac{1/c^2}{\Noise + \frac{2}{25c^2 / 9}} = \frac{1/c^2}{\Noise + \frac{18}{25} \frac{1}{c^2}} = \frac{\beta \Noise}{\Noise + \frac{18}{25} \beta \Noise} = \frac{4}{1 + (72/25)} = \frac{100}{97} > \beta'.
\end{equation*}
Thus $\ell_3$ must transmit in at least a $1-\epsilon$ fraction of the rounds.  In any round where $\ell_3$ transmits, neither $\ell_1$ nor $\ell_2$ can succeed since $a=b$ and $\beta' \geq 1$.  Thus the average number of successes is at most $(1-\epsilon) + 2\epsilon = 1+\epsilon$.
\QED \end{proof}

}
\APPENDBETALOWER

\def\APPANDNOISEEFFECT{
For simplicity, for the rest of the proof we will assume that $P_{vv}\geq 2\beta \cdot \Noise$ for every value of $\beta$ we consider.  This does not hold in the bad example of Theorem~\ref{thm:beta_nash_lb}, but that example can easily be modified so that it holds (although it makes the example messier).  This is a standard assumption, and limits strange phenomena due to thresholding.
In addition, we isolate the affect of the SINR threshold value and  restrict attention to uniform powers.

Given an SINR threshold $\beta'$ and links $\ell_w$ and $\ell_v$ define $$a_{w}^{\beta'}(v)=\min\left\{1, c_v(\beta') P_{w,v}/P_{v,v} \right\}.$$ The affectness of a subset of (resp. on) a subset of links $L'$ is denoted by $a^{\beta'}_{L'}(w)$ ($a^{\beta'}_{w}(L')$).
We now show a simple lemma which will help us prove that the above example is tight.
\begin{observation}
\label{obs:wo_noise}
Let $L$ be a $2\beta$-feasible set without noise, $\Noise=0$ then $L$ is a $\beta$-feasible set with noise $\Noise>0$ satisfying that $P_{vv} \geq 2\beta \Noise$ for every $\ell_v \in L$.
\end{observation}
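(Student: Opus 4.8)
The plan is to unfold both notions of feasibility into explicit inequalities on the received powers $P_{w,v}$ and to verify that noiseless $2\beta$-feasibility, together with the hypothesis $P_{v,v}\ge 2\beta\Noise$, is exactly strong enough to recover $\beta$-feasibility once the noise term is reinstated. Fix an arbitrary link $\ell_v\in L$ and write the total interference it suffers as $\sum_{w\neq v}P_{w,v}$ (as in the definition of total interference). By hypothesis $L$ is $2\beta$-feasible with $\Noise=0$, which by the definition of $\SINR$ means
\begin{equation*}
\frac{P_{v,v}}{\sum_{w\neq v}P_{w,v}}\ \ge\ 2\beta,\qquad\text{equivalently}\qquad \sum_{w\neq v}P_{w,v}\ \le\ \frac{P_{v,v}}{2\beta}.
\end{equation*}

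Next I would bring in the noise assumption. The condition $P_{v,v}\ge 2\beta\Noise$ is simply $\Noise\le P_{v,v}/(2\beta)$, so the interference budget and the noise budget each consume at most half of the admissible slack $P_{v,v}/\beta$. Adding the two bounds controls the full denominator of the noisy SINR:
\begin{equation*}
\sum_{w\neq v}P_{w,v}+\Noise\ \le\ \frac{P_{v,v}}{2\beta}+\frac{P_{v,v}}{2\beta}\ =\ \frac{P_{v,v}}{\beta}.
\end{equation*}
Rearranging yields $\SINR_v(L)=P_{v,v}/\bigl(\sum_{w\neq v}P_{w,v}+\Noise\bigr)\ge\beta$. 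Since $\ell_v$ was arbitrary, every link of $L$ clears the threshold $\beta$ in the presence of noise $\Noise$, which is precisely the claim that $L$ is $\beta$-feasible with that noise.

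Since the whole argument is a one-line algebraic manipulation, there is no genuine obstacle; the only points needing care are reading the $\SINR$ numerator and denominator correctly (so that the interference is summed over $w\neq v$) and noticing that the factor of $2$ in the hypotheses is exactly what lets the two terms split the slack. As an alternative I would note that the same conclusion follows through the affectance characterization of feasibility: the hypothesis $P_{v,v}\ge 2\beta\Noise$ forces $c_v(\beta)=\beta/(1-\beta\Noise/P_{v,v})\le 2\beta$, so each noisy affectance satisfies $a_w^{\beta}(v)\le\min\{1,2\beta\,P_{w,v}/P_{v,v}\}$, i.e. it is dominated by the corresponding noiseless affectance at threshold $2\beta$; summing and invoking the noiseless $2\beta$-feasibility then gives $\sum_{w\neq v}a_w^{\beta}(v)\le 1$. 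I expect the direct SINR computation above to be the cleaner presentation.
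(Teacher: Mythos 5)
Your proof is correct and takes essentially the same route as the paper: the paper likewise unfolds noiseless $2\beta$-feasibility into $\sum_{u\neq v}P_{u,v}/P_{v,v}\le 1/(2\beta)$ and uses $P_{v,v}\ge 2\beta\Noise$ to supply the other half of the slack, merely phrasing the split as $1/(2\beta)\le 1/\beta - \Noise/P_{v,v}$ instead of adding the interference and noise bounds directly as you do. The two presentations are algebraically identical, so there is nothing to fix.
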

Since $L$ is a $2\beta$-feasible set, it holds that $\sum_{u \in L}P_{uv}/P_{vv} \leq 1/(2\beta)$ for every $\ell_v \in L$. Since $P_{vv} \geq 2\beta \Noise$, it holds that $1/(2\beta) \leq 1/\beta-\Noise/P_{vv}$ and hence that
$\sum_{u \in L}P_{uv}/P_{vv} \leq 1/\beta-\Noise/P_{vv}$.
Concluding that $L$ is a $\beta$-feasible set with $\Noise>0$.
}

We now show that the gap between the values of no-regret solution for different SINR threshold values is bounded by a constant.
\begin{lemma}
\label{lem:beta_nash_lb}
For every $1 \leq \beta' \leq \beta$ and every set of links $L$ satisfying that $P_{vv}\geq 2\beta \cdot \Noise$ for every $\ell_v \in L$, it holds that $\NoRegret_{\min}^{\beta'}(L) \geq \NoRegret_{\max}^{\beta}(L)/c$ for some constant $c\geq 1$.
\end{lemma}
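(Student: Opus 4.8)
The plan is to follow the template of Theorems~\ref{thm:nash_ic_upper} and~\ref{thm:nash_pc_upper}, the only genuinely new ingredient being a monotonicity property of the affectance in the SINR threshold. First I would fix a \emph{best} $\epsilon$-regret history under threshold $\beta$, with attempt fractions $p_i$, so that $\sum_{i \in L} p_i = \Theta(\NoRegret^{\beta}_{\max}(L))$ by Lemma~\ref{lem:attempts}; and a \emph{worst} $\epsilon$-regret history under threshold $\beta'$, with attempt fractions $q_i$, so that $\sum_{i \in L} q_i = \Theta(\NoRegret^{\beta'}_{\min}(L))$. By Lemma~\ref{lem:attempts} it then suffices to prove $\sum_{i \in L} p_i \leq O(\sum_{i \in L} q_i)$. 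Since the $\beta$-history realizes $\NoRegret^{\beta}_{\max}(L)$ successful links on average, there is a $\beta$-feasible set $A \subseteq L$ with $|A| \geq \Omega(\sum_i p_i)$. Setting $B = \{i : q_i \geq 1/2\}$ and $A' = A \setminus B$, I split into the two usual cases: if $|B \cap A| \geq |A|/2$ then $\sum_i p_i \leq O(|A|) \leq O(|B|) \leq O(\sum_i q_i)$ and we are done; otherwise $|A'| > |A|/2$, and $A'$ remains $\beta$-feasible, so Fact~\ref{fc:amenable}(a) extracts an amenable, $\beta$-feasible subset $\widehat{A} \subseteq A'$ with $|\widehat{A}| \geq |A|/4$.

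The crux is then to establish the two matching affectance bounds, now keeping careful track of which threshold governs which bound. For the upper bound I would use the key monotonicity observation: since $\beta' \leq \beta$ and the hypothesis $P_{vv} \geq 2\beta\Noise$ guarantees that $1 - \beta\Noise d_{v,v}^{\alpha}/P_v > 0$ on the whole range, the constant $c_v(\beta) = \beta/(1 - \beta\Noise d_{v,v}^{\alpha}/P_v)$ is increasing in $\beta$, whence $a_w^{\beta'}(v) \leq a_w^{\beta}(v)$ for all links $w,v$ (the cap at $1$ preserves the inequality). Applying Fact~\ref{fc:amenable}(b) to the amenable $\beta$-feasible set $\widehat{A}$ gives $\sum_{j \in \widehat{A}} a_i^{\beta'}(j) \leq \sum_{j \in \widehat{A}} a_i^{\beta}(j) = O(1)$ for every link $\ell_i$, and therefore
$$\textstyle \sum_{i \in L} \sum_{j \in \widehat{A}} q_i\, a_i^{\beta'}(j) \leq O\big(\sum_{i \in L} q_i\big).$$
For the matching lower bound I would invoke the regret guarantee in the $\beta'$-history exactly as before: each $j \in \widehat{A}$ has $q_j < 1/2$, so comparing against the fixed ``always transmit'' action shows $\ell_j$ fails in at least a $\tfrac14 - \tfrac{\epsilon}{2}$ fraction of the rounds, and a failure under threshold $\beta'$ means the $\beta'$-affectance on $\ell_j$ exceeds $1$. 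Taking expectations over a random slot and summing over $\widehat{A}$ yields
$$\textstyle \sum_{j \in \widehat{A}} \sum_{i \in L} q_i\, a_i^{\beta'}(j) \geq \Omega(|\widehat{A}|).$$
Combining the two displays (and switching the order of summation) gives $|\widehat{A}| \leq O(\sum_i q_i)$, and since $|\widehat{A}| \geq |A|/4 \geq \Omega(\sum_i p_i)$ this delivers $\sum_i p_i \leq O(\sum_i q_i)$, as required.

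The main obstacle is precisely the reconciliation of the two thresholds, and this is the one place the proof departs from Theorems~\ref{thm:nash_ic_upper} and~\ref{thm:nash_pc_upper}: the feasibility and amenability of $\widehat{A}$ are naturally expressed through the $\beta$-affectance, whereas the failure events in the worst $\beta'$-history are governed by the $\beta'$-affectance. The inequality $a_w^{\beta'}(v) \leq a_w^{\beta}(v)$ is exactly what bridges them, and it is here that the assumption $P_{vv} \geq 2\beta\Noise$ is essential, since it keeps the denominator of $c_v(\cdot)$ positive across all of $[\beta',\beta]$ so that $c_v$ is well defined and monotone. Once this monotonicity is in hand, the remainder is a routine adaptation of the earlier two-inequality argument.
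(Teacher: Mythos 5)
Your proof is correct, and it follows the paper's two-inequality skeleton exactly: the best-$\beta$ versus worst-$\beta'$ histories with attempt fractions $p_i$ and $q_i$, the sets $A$, $B$, $A'$, the amenable subset $\widehat{A}$ extracted via Fact~\ref{fc:amenable}(a), and the regret argument showing each $j \in \widehat{A}$ with $q_j < 1/2$ fails in at least a $\frac14 - \frac{\epsilon}{2}$ fraction of rounds. Where you genuinely depart is the bridge between the two thresholds. The paper never compares $a^{\beta'}$ to $a^{\beta}$ pointwise; instead it introduces the noiseless, uncapped energy functional $\widehat{E}^{\widehat{\beta}}(L') = \widehat{\beta} \cdot \sum_{i \in L}\sum_{j \in L'} q_i P_{j,i}/P_{j,j}$, which is exactly linear in the threshold, so that $\widehat{E}^{2\beta'}(\widehat{A}) = (2\beta'/\beta)\,\widehat{E}^{\beta}(\widehat{A})$, and then sandwiches: Observation~\ref{obs:wo_noise} (equivalently, $c_j(\beta') \leq 2\beta'$ under the hypothesis $P_{jj} \geq 2\beta\Noise$) gives $\sum_{j \in \widehat{A}} \sum_{i \in L} q_i\, a_i^{\beta'}(j) \leq \widehat{E}^{2\beta'}(\widehat{A})$, while Fact~\ref{fc:amenable}(b) bounds $\widehat{E}^{\beta}(\widehat{A}) \leq O\bigl(\sum_{i \in L} q_i\bigr)$, and the factor $2\beta'/\beta \leq 2$ closes the loop. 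You instead prove the pointwise monotonicity $a_w^{\beta'}(v) \leq a_w^{\beta}(v)$, using the same hypothesis $P_{vv} \geq 2\beta\Noise$ but to a different end: keeping $c_v(\cdot)$ well defined and increasing on $[\beta',\beta]$, with the cap at $1$ preserving the inequality. Your bridge buys robustness: you never leave the world of capped affectances, whereas the paper's inequality $\widehat{E}^{\beta}(\widehat{A}) \leq \sum_{i \in L}\sum_{j \in \widehat{A}} q_i\, a_i^{\beta}(j)$ is delicate, since the uncapped term $\beta P_{i,j}/P_{j,j}$ can exceed $a_i^{\beta}(j) = \min\{1, c_j(\beta)P_{i,j}/P_{j,j}\}$ whenever the minimum binds, so strictly speaking that step needs the capped terms handled separately --- and the natural repair is essentially your monotonicity observation. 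What the paper's functional buys in exchange is an explicit threshold dependence $2\beta'/\beta$ in the final constant, which your pointwise bound absorbs into the Fact~\ref{fc:amenable}(b) constant for $\beta$; since $\beta' \leq \beta$, both routes yield the $O(1)$ gap the lemma requires.
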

\def\APPENDBETANASH{
\begin{proof}
\APPANDNOISEEFFECT

Consider any two no-regret solutions, one for the $\beta$ setting and the other for the $\beta'$ setting.
Let $p_i$ (resp., $q_i$) be the fraction of times in which $s_i$ attempts to transmit in a $\epsilon$-regret history with SINR threshold $\beta$ (resp., $\beta'$).
By taking $\epsilon =O(1/n)$ and using Lemma~\ref{lem:attempts}, it suffices to show that $\sum_i p_i \leq c \cdot \sum_i q_i$.
Note that since the average number of successful connections in the $\beta$ setting is $\sum p_i$, there must exist some set of connections $A \subseteq L$ that transmitted successfully in some round $t \in [1, T]$ such that $|A| \geq \sum p_i$ and $A$ is feasible with SINR threshold $\beta$.
Let $B = \{i : q_i \geq 1/2\}$ and let $A' = A \setminus B$.  If $|B \cap A| \geq |A| / 2$ then we are done, since then $$\sum_i p_i \leq |A| \leq 2|B| \leq 2 \sum_{i} q_i$$ as required.
So without loss of generality we will assume that $|B \cap A| < |A|/2$, and thus that $|A'| > |A|/2$.  Note that $A'$ is a subset of $A$, and so it is $\beta$-feasible as well.
Since it is $\beta$-feasible for some $\Noise\geq 0$, it is $\beta$-feasible also for $\Noise=0$.
Now let $\widehat{A} = \{i : \sum_{j \in A'} a_i(j) \leq 2\}$ be an emanable subset of $A'$.  By Fact \ref{fc:amenable}(a), we have that $\widehat{A} \geq |A'|/ 2 \geq |A| / 4$.  By Fact \ref{fc:amenable}(b) then implies that for any link $i \in L$, its total affectance on $A$ is small: $\sum_{j \in A} a^{\beta}_i(j) \leq c'$ for some constant $c\geq 0$.
For a subset $L' \subseteq L$ and some $\widehat{\beta}>0$ define
$$\widehat{E}^{\widehat{\beta}}(L')=\widehat{\beta} \cdot \sum_{i \in L}\sum_{j  \in L'} q_i \cdot P_{j,i}/P_{j,j}.$$
Thus we have that
\begin{eqnarray}
\label{eq:beta1}
\widehat{E}^{\beta}(\widehat{A}) &\leq&
\sum_{i \in L}\sum_{j  \in \widehat{A}}
a^{\beta}_{i}(j)\leq c \cdot \left(\sum_{i \in L} q_i\right).
\end{eqnarray}
On the other hand, we know that the $q_i$ values correspond to the worst history with SINR threshold $\beta'$ in which every link has regret at most $\epsilon$.  Let $j \in A'$.  Then $q_j < 1/2$, which means the average utility of link $\ell_j$ is at most $1/2$. Let $y_j$ be the fraction of time $s_j$ would have succeeded (i.e., the SINR value at its receiver $r_j$ is above $\beta'$) has it transmitted in every round. Since the average utility of the best single action is at most $1/2+\epsilon$ it holds that $y_j-(1-y_j)\leq 1/2+\epsilon$ or the that $y_j \leq \frac34 + \frac{\epsilon}{2}$.
In other words, in at least $1-y_j=\frac14 - \frac{\epsilon}{2}$ fraction of the rounds the affectance $a^{\beta'}$ of the other links on the link $\ell_j$ must be at least $1$ (or else $j$ could succeed in those rounds).  Thus the expected affectance (taken over a random choice of time slot) on $\ell_j$ is at least $\sum_{i \in L} a^{\beta'}_i(j) q_i \geq \frac14 - \frac{\epsilon}{2}$.  Summing over all $j \in \widehat{A}$, to get that
\begin{eqnarray} \label{eq:beta2}
\widehat{E}^{2\beta'}(\widehat{A})&\geq&\sum_{j \in \widehat{A}} \sum_{i \in L} a^{\beta'}_i(j) q_i \geq \sum_{j \in A} \frac{1-2\epsilon}{4} \geq c' \cdot |\widehat{A}|,
\end{eqnarray}
where the first inequality follows by Obs. \ref{obs:wo_noise}.
Combining equations~(\ref{eq:beta1}) and~(\ref{eq:beta2}) (and switching the order of summations) implies that
\begin{eqnarray*}
c' \cdot |\widehat{A}| &\leq& \widehat{E}^{2\beta'}(\widehat{A}) = 2\beta'/\beta \cdot \widehat{E}^{\beta}(\widehat{A})\leq
c \cdot 2\beta'/\beta\cdot N^{\beta'}_{\min}(L)~.
\end{eqnarray*}
Since $|\widehat{A}| \geq |A|/4 \geq \Omega(\NoRegret^{\beta}_{\max}(L))$, we get that $O(\beta/\beta') \cdot \NoRegret^{\beta}_{\max}(L) \leq
\NoRegret^{\beta'}_{\min}(L)$ as desired.
The lemma follows.

\QED \end{proof} 
}
\APPANDNOISEEFFECT
%

\section{Conclusion}
In this paper we have shown that Braess's paradox can strike in wireless networks in the SINR model: improving technology can result in worse performance, where we measured performance by the average number of successful connections.  We considered adding power control, interference cancellation, both power control and interference cancellation, and decreasing the SINR threshold, and in all of them showed that game-theoretic equilibria can get worse with improved technology.  However, in all cases we bounded the damage that could be done.

There are several remaining interesting open problems.  First, what other examples of wireless technology exhibit the paradox?  Second, even just considering the technologies in this paper, it would be interesting to get a better understanding of when exactly the paradox occurs.  Can we characterize the network topologies that are susceptible?  Is it most topologies, or is it rare?  What about random wireless networks?  Finally, while our results are tight up to constants, it would be interesting to actually find tight constants so we know precisely how bad the paradox can be.

\def\thepage{}
{
\bibliographystyle{plain}
\bibliography{GT}
}

\end{document}